\theoremstyle{plain}
\newtheorem{theorem}{Theorem}[section]
\newtheorem{lemma}[theorem]{Lemma}
\newtheorem{corollary}[theorem]{Corollary}
\theoremstyle{definition}
\theoremstyle{remark}
\begin{document}

\title{ELFATT: Efficient Linear Fast Attention for Vision Transformers}

\author{Chong Wu}
\authornote{Corresponding author.}
\orcid{0000-0003-3405-742X}
\affiliation{
  \institution{Department of Electrical Engineering and Centre for Intelligent Multidimensional Data Analysis, City University of Hong Kong}
  \city{Kowloon}
  \country{Hong Kong}
}
\email{imroxaswc@gmail.com}

\author{Maolin Che}
\orcid{0000-0001-9956-062X}
\authornote{These three authors contributed equally to this research.}
\affiliation{
\institution{School of Mathematics and Statistics and State Key Laboratory of Public Big Data, Guizhou University}
  \city{Guiyang}
  \country{China}}
\affiliation{
\institution{Department of Electrical Engineering and Centre for Intelligent Multidimensional Data Analysis, City University of Hong Kong}
\city{Kowloon}
  \country{Hong Kong}
}  
\email{mlche@gzu.edu.cn}

\author{Renjie Xu}
\orcid{0000-0002-6951-7273}
\authornotemark[2]
\affiliation{
  \institution{Department of Electrical Engineering and Centre for Intelligent Multidimensional Data Analysis, City University of Hong Kong}
  \city{Kowloon}
  \country{Hong Kong}
}
\email{harryxu950510@gmail.com}

\author{Zhuoheng Ran}
\orcid{0000-0002-0446-6276}
\authornotemark[2]
\affiliation{
  \institution{Department of Electrical Engineering and Centre for Intelligent Multidimensional Data Analysis, City University of Hong Kong}
  \city{Kowloon}
  \country{Hong Kong}}
\email{zhuoheran2-c@my.cityu.edu.hk}

\author{Hong Yan}
\orcid{0000-0001-9661-3095}
\affiliation{
  \institution{Department of Electrical Engineering and Centre for Intelligent Multidimensional Data Analysis, City University of Hong Kong}
  \city{Kowloon}
  \country{Hong Kong}}
\email{h.yan@cityu.edu.hk}

\renewcommand{\shortauthors}{Chong Wu, Maolin Che, Renjie Xu, Zhuoheng Ran, \& Hong Yan}

\begin{abstract}
The attention mechanism is the key to the success of transformers in different machine learning tasks. However, the quadratic complexity with respect to the sequence length of the vanilla softmax-based attention mechanism becomes the major bottleneck for the application of long sequence tasks, such as vision tasks. Although various efficient linear attention mechanisms have been proposed, they need to sacrifice performance to achieve high efficiency. What's more, memory-efficient methods, such as FlashAttention-1-3, still have quadratic computation complexity which can be further improved. In this paper, we propose a novel efficient linear fast attention (ELFATT) mechanism to achieve low memory input/output operations, linear computational complexity, and high performance at the same time. ELFATT offers 4-7x speedups over the vanilla softmax-based attention mechanism in high-resolution vision tasks without losing performance. ELFATT is FlashAttention friendly. Using FlashAttention-2 acceleration, ELFATT still offers 2-3x speedups over the vanilla softmax-based attention mechanism on high-resolution vision tasks without losing performance. Even in some non-vision tasks of long-range arena, ELFATT still achieves leading performance and offers 1.2-2.3x speedups over FlashAttention-2. Even on edge GPUs, ELFATT still offers 1.6x to 2.0x speedups compared to state-of-the-art attention mechanisms in various power modes from 5W to 60W. Furthermore, ELFATT can be used to enhance and accelerate diffusion tasks directly without training.
\end{abstract}

\begin{CCSXML}
<ccs2012>
   <concept>
       <concept_id>10010147.10010257</concept_id>
       <concept_desc>Computing methodologies~Machine learning</concept_desc>
       <concept_significance>500</concept_significance>
       </concept>
   <concept>
       <concept_id>10003752.10003809</concept_id>
       <concept_desc>Theory of computation~Design and analysis of algorithms</concept_desc>
       <concept_significance>500</concept_significance>
       </concept>
   <concept>
       <concept_id>10010147.10010257.10010293.10010294</concept_id>
       <concept_desc>Computing methodologies~Neural networks</concept_desc>
       <concept_significance>300</concept_significance>
       </concept>
   <concept>
       <concept_id>10010147.10010178.10010224</concept_id>
       <concept_desc>Computing methodologies~Computer vision</concept_desc>
       <concept_significance>100</concept_significance>
       </concept>
 </ccs2012>
\end{CCSXML}

\ccsdesc[500]{Computing methodologies~Machine learning}
\ccsdesc[500]{Theory of computation~Design and analysis of algorithms}
\ccsdesc[300]{Computing methodologies~Neural networks}
\ccsdesc[100]{Computing methodologies~Computer vision}

\keywords{Efficient Attention Mechanism; Linear Approximation; Vision Transformer}

\maketitle

\section{Introduction}

Transformers have achieved great success in large language models (ChatGPT \cite{GPT} and Llama \cite{Llama}) and large vision models (SAM \cite{SAM} and Sora \cite{Sora}). The core technique of the transformer, the vanilla softmax-based attention (VaniATT) mechanism, is capable of capturing the relationship between any two tokens \cite{CURSA}. In complex tasks, the sequence length becomes longer and longer, usually much longer than the embedding dimension. The softmax operation after the multiplication of query and key matrices makes VaniATT quadratic computational complexity with respect to the sequence length. It has become a main obstacle to computational efficiency. 

Acceleration methods to speed up the attention computation can be classified into two categories: (1) memory-efficient methods; (2) computation-efficient methods. Memory-efficient methods focus on optimizing memory input/output (I/O) operations to achieve almost linear complexity \cite{Flash1,Flash2,Flash3,FlashSigmoid}. FlashAttention \cite{Flash1,Flash2,Flash3} and FlashSigmoid \cite{FlashSigmoid} are representatives of memory-efficient methods. Computation-efficient methods focus on minimizing the computation bound of the attention computation by using linear approximations based on different kernel methods \cite{flatten,PEF,ReLUT,TRNN,HATT,SOFT,RFA,COS,2SOFT,AGENTATT,fastvit2} or low-rank decomposition \cite{MPIPD,NTRANS,SOFT,CURSA}, and sparse computation \cite{Swin,CSWin,LIF,SPARSE1,SPARSE2,SPARSE3,SPARSE4,SPARSE5,SPARSE6}. However, memory-efficient methods still have quadratic computational complexity. And computation-efficient methods usually have lower performance compared to VaniATT \cite{Efficientvit2}. 

In this paper, we propose a novel efficient linear fast attention (ELFATT\footnote{The \textbf{ELFATT: Efficient Linear Fast Attention for Vision Transformers} methodology disclosed in this work is protected under intellectual property rights, including the following patent applications and granted patents: U.S. Non-Provisional Patent Application (No.~19/011,779 filed on 7 January 2025), Hong Kong Short-Term Patent (No.~HK30118021A2 granted on 30 May 2025), and Chinese Invention Patent Application (No.~202510108088.7 filed on 23 January 2025). Commercial inquiries should be directed to \href{mailto:kt@innocimda.com}{kt@innocimda.com}.}) mechanism that has low memory I/O operations and linear computational complexity and maintains noninferior performance compared to VaniATT. The core idea of ELFATT is the combination of sparse computation with a linear approximation. Each ELFATT block has two parallel attention heads. One head is used to compute sparse blockify attention to introduce inductive biases, and the other one head is used to compute global linear attention to capture long-range dependencies. Both heads have almost linear complexity, and the sparse blockify attention head can be further speeded up by using the FlashAttention \cite{Flash1,Flash2,Flash3} mechanisms to reduce memory I/O operations. ELFATT is evaluated on different vision tasks: image classification, semantic segmentation, object detection, and diffusion. Furthermore, ELFATT is also evaluated on edge GPUs and in the long-range arena. Compared to state-of-the-art memory-efficient acceleration methods and computation-efficient acceleration methods, ELFATT inherits advantages of both two kinds of methods: noninferior performance compared to VaniATT, low memory I/O operations, and linear computational complexity. In summary, this paper has the following contributions.
\begin{itemize}
  \item [(i)] 
  A novel efficient linear fast attention (ELFATT) mechanism is proposed that has low memory I/O operations and linear computational complexity and maintains noninferior performance compared to VaniATT.
  \item [(ii)] 
  ELFATT is the first sparse blockify attention + global linear attention design for attention acceleration which is significantly different from the existing hierarchical sparse attention mechanisms.
  \item [(iii)]
  The relationship between ELFATT and VaniATT is analyzed and given. An upper bound is given for the use of ELFATT to approximate VaniATT. To the best of our knowledge, we are the first to provide a theoretical bound analysis to sparse attention.
  \item [(iv)]
  ELFATT offers 4-7x speedups over VaniATT without using FlashAttention-2 and 2-3x speedups over VaniATT using FlashAttention-2 on high-resolution vision tasks.
  \item [(v)]
  ELFATT can be used to accelerate non-vision tasks with 1K-4K sequence lengths and offers 1.2-2.3x speedups over FlashAttention-2.
  \item [(vi)]
  ELFATT offers 1.6x to 2.0x speedups over state-of-the-art attention mechanisms in various power modes from 5W to 60W on edge GPUs.
  \item [(vii)]
  ELFATT can be used to enhance and accelerate diffusion tasks directly without training.
  \end{itemize}

\section{Related Work}
Since VaniATT has quadratic complexity in both time and memory, depending on whether the focus is on optimizing memory complexity or time complexity, attention acceleration methods can be categorized into the following two types: (A) memory-efficient attention acceleration methods; (B) computation-efficient attention acceleration methods.

\subsection{Memory-Efficient Attention Acceleration Methods}
Memory-efficient attention acceleration methods focus on optimizing quadratic memory complexity to almost linear complexity. One of the most representative methods is FlashAttention \cite{Flash1}, which minimizes memory I/O access for the softmax computation of the product of query and key matrices to improve the utilization rate of GPUs to achieve almost linear memory complexity \cite{Flash1,CURSA}. FlashAttention-2 further reduces the number of floating point operations (FLOPs) of non-matrix multiplication, parallelizes both forward and backward processes according to the sequence length, and reduces the I/O access of shared memory \cite{Flash2}. FlashAttention-3 is proposed to fully utilize the performance of new Hopper GPUs by introducing warp-specialization, interleave block-wise matrix multiplication and softmax operations, and the support of FP8 precision \cite{Flash3}. FlashSigmoid investigates the feasibility of using the sigmoid function to replace the softmax function in attention computation, proposes a new regularity method for sigmoid attention to stabilize training, and introduces a memory-efficient version based on FlashAttention-2 \cite{FlashSigmoid}. However, these memory-efficient attention acceleration methods are usually hard aware and can not support all kinds of GPUs and their computation complexity is still quadratic.

\subsection{Computation-Efficient Attention Acceleration Methods}
Different to hard aware memory-efficient attention acceleration methods, computation-efficient attention acceleration methods focus on optimizing computational complexity by linear approximations and usually can be categorized into two classes: (a) kernel methods; (b) sparse methods. 

Kernel methods usually change the order of nonlinear normalization and multiplication of the query matrix and key matrix to achieve linear computational complexity. FLatten \cite{flatten} introduces a cubic rectified linear unit (ReLU) \cite{ReLU} based feature map and uses it to perform nonlinear normalization on the query matrix and key matrix, respectively, before the attention computation. The linear transformer \cite{TRNN} introduces a feature map based on the exponential linear unit (ELU) \cite{elus} activation function. cosFormer \cite{COS} introduces a \emph{cos}-based nonlinear feature map to obtain a linear approximation of VaniATT. Efficient attention \cite{2SOFT} performs softmax normalization on the query matrix and the key matrix, respectively, to achieve linear complexity. However, this linearization introduces noise from two aspects: (1) Separated softmax normalization will reduce the similarity of elements in the query matrix and key matrix that are both negative at the same position; (2) Large negative values are suppressed to small positive values. The above two types of noise result in decreased discrimination between tokens and incorrect concentration of attention. Agent \cite{AGENTATT} introduces a small agent matrix, which is obtained by performing the pooling operation on the query matrix. This agent matrix is used as the auxiliary key matrix to be multiplied with the query matrix to reduce its dimenison before the softmax normalization. Similarly, it is also used to reduce the dimension of the key matrix before its softmax normalization. This agent matrix serves as a bridge between two independent softmax normalization processes to reduce noise. Another type of kernel methods introduce the low-rank decomposition \cite{xu1,xu2,xu3}. They usually take the softmax normalization of the product of query and key matrices as a whole for decomposition to find appropriate feature maps to reduce complexity \cite{CURSA}. Nystr{\" o}mformer \cite{NTRANS} introduces Nyström approximation to perform low-rank decomposition of the softmax normalization of the product of query and key matrices. For calculating an attention score matrix, it needs to perform an iterative inverse approximation and operate softmax normalization three times (each softmax normalization still involves a product of two matrices), which makes its acceleration effect not significant. Skeleton decomposition-based self-attention uses a simplified inverse approximation method based on the permuted diagonal matrix \cite{MPIPD}. Interactive multihead self-attention (iMHSA) introduces several linear layers to perform head interactions instead of performing an iterative inverse approximation \cite{Hilinear}. CUR decomposition based self-attention (CURSA) \cite{CURSA} introduces the CUR decomposition to replace the Nystr{\" o}m approximation and reduces the number of matrix multiplication. Mamba \cite{gu2024mamba} and its variants \cite{mamba2,Vmamba,vim} are also a kind of kernelized attention mechanism.

Sparse methods usually separate the original sequences into smaller blocks using sliding windows and perform attention computation within these blocks to reduce complexity. Each token in a sequence is only affected by several tokens, not all tokens; hence, this kind of sparse method is also called the local attention mechanism. Swin \cite{Swin} and Longformer \cite{SPARSE2} are pioneers in introducing sliding windows into attention computation for vision tasks and language tasks, respectively. To address the information loss introduced by the local windows, Swin introduces a shifted window mechanism to capture cross-block information, while Longformer selects some tokens as global tokens, which have effects on all tokens, to compensate for the global information loss. NesT \cite{NesT} further simplifies the shifted window mechanism through simple spatial operations. CSWin \cite{CSWin} introduces a cross-shaped window mechanism for 2D sequences. The cross-shaped window mechanism separates a 2D sequence into horizontal and vertical stripes, respectively, and performs parallel attention computation within these stripes in horizontal and vertical directions, which can simultaneously obtain local inductive biases and cross-block (global) information. In addition to using the deterministic global token selection mechanism of Longformer, Big Bird \cite{SPARSE5} also randomly selects some global tokens to enhance performance. Similarly, the sparse transformer \cite{SPARSE3} proposed by OpenAI also introduces several fixed-step tokens, which is similar to dilated convolution \cite{atrous} in convolutional neural networks (CNNs) to capture long-range information. Reformer \cite{SPARSE6} finds the nearest neighbors for each token to calculate local attention scores using the locality-sensitive-hashing (LSH) algorithm.

\section{Methods}
\subsection{Vanilla Softmax-Based Attention Mechanism}
For clarity, we assume that all vectors appear in this paper are row vectors. Scaling and normalization are omitted in this paper for convenience. For any two same sized tokens $\textit{\textbf{x}} \in \mathbb{R}^{c}$ and $\textit{\textbf{y}} \in \mathbb{R}^{c}$, their attention similarity score can be calculated as follows,
\begin{equation}
a = {\rm exp}\left(\textit{\textbf{x}}\textit{\textbf{y}}^{\top}\right),
\label{sa}
\end{equation}
where ${\rm exp}\left(\cdot\right)$ is an element-wise exponential function and is conducted after the inner product of two tokens. For given two $m$-tuples $\{\textit{\textbf{x}}_1,\dots,\textit{\textbf{x}}_m\}$ and $\{\textit{\textbf{y}}_1,\dots,\textit{\textbf{y}}_m\}$ with $\textit{\textbf{x}}_i,\textit{\textbf{y}}_i\in\mathbb{R}^c$ and $i=1,2,\dots,m$, we need to calculate $m^2$ pairs of attention similarity scores. Currently, a long sequence length $m$ is preferred in large models and $m \gg c$, therefore, the computational complexity of VaniATT is quadratic with respect to the sequence length $m$.

\subsection{General Attention Mechanism}
Eq. (\ref{sa}) can be rewritten as a more general form \cite{TRNN} as follows,
\begin{equation}
a = {\rm sim}\left(\textit{\textbf{x}}, \textit{\textbf{y}}\right),
\label{sage1}
\end{equation}
where ${\rm sim}(\cdot)$ is a non-negative similarity function and it satisfies the definition of a kernel function $G(\textit{\textbf{x}}, \textit{\textbf{y}}): \mathbb{R}^c \times \mathbb{R}^c \rightarrow\mathbb{R}_{+}$ with $\mathbb{R}_{+}=\{x\in\mathbb{R}:x\geq 0\}$. 

\subsection{Kernelized Attention Mechanism}
If we have such a kernel with a non-negative feature map $\phi$, Eq. (\ref{sage1}) can be written as follows,
\begin{equation}
a = \phi\left(\textit{\textbf{x}}\right)\phi\left(\textit{\textbf{y}}\right)^{\top}.
\label{sage2}
\end{equation}
Performer \cite{PEF} has proved that one of the best choices of the non-negative feature map $\phi$ for Eq. (\ref{sage2}) is $\rm exp\left(\cdot\right)$. Performer obtains an exact alternative of Eq. (\ref{sa}) using random feature maps as follows,
\begin{equation}
{\rm exp}\left(\textit{\textbf{x}}\textit{\textbf{y}}^{\top}\right) = \mathbb{E}_{\omega \sim \mathcal{N}(\textbf{0}_c, \textbf{I}_c)} \left\lbrack{\rm exp}\left(\omega\textit{\textbf{x}}^{\top}\right){\rm exp}\left(\omega\textit{\textbf{y}}^{\top}\right)e\left(\textit{\textbf{x}}\right)e\left(\textit{\textbf{y}}\right)\right\rbrack,
\label{pef}
\end{equation}
where $e\left(\textit{\textbf{x}}\right) = {\rm exp}\left(-\frac{\|\textit{\textbf{x}}\|^{2}}{2}\right)$, $e\left(\textit{\textbf{y}}\right) = {\rm exp}\left(-\frac{\|\textit{\textbf{y}}\|^{2}}{2}\right)$, $\textbf{0}_c \in \mathbb{R}^{c}$ is the zero vector, and $\textbf{I}_c \in \mathbb{R}^{c\times c}$ is the identity matrix. If the attention scores for all pairs of $\textit{\textbf{x}}$ and $\textit{\textbf{y}}_{i}$ ($i=1,2,\dots,m$) using Eq. (\ref{pef}) are obtained, after the following normalization $e\left(\textit{\textbf{x}}\right)$ will be canceled.
\begin{equation*}
\begin{split}
&\frac{{\rm exp}\left(\textit{\textbf{x}}\textit{\textbf{y}}_{i}^{\top}\right)}{\sum_{j=1}^{m}{{\rm exp}\left(\textit{\textbf{x}}\textit{\textbf{y}}_{j}^{\top}\right)}} \\
&= \frac{\mathbb{E}_{\omega \sim \mathcal{N}(\textbf{0}_c, \textbf{I}_c)} \left\lbrack{\rm exp}\left(\omega\textit{\textbf{x}}^{\top}\right){\rm exp}\left(\omega\textit{\textbf{y}}_{i}^{\top}\right)e\left(\textit{\textbf{x}}\right)e\left(\textit{\textbf{y}}_{i}\right)\right\rbrack}{\sum_{j=1}^{m}{\mathbb{E}_{\omega \sim \mathcal{N}(\textbf{0}_c, \textbf{I}_c)} \left\lbrack{\rm exp}\left(\omega\textit{\textbf{x}}^{\top}\right){\rm exp}\left(\omega\textit{\textbf{y}}_{j}^{\top}\right)e\left(\textit{\textbf{x}}\right)e\left(\textit{\textbf{y}}_{j}\right)\right\rbrack}} \\
&= \frac{\mathbb{E}_{\omega \sim \mathcal{N}(\textbf{0}_c, \textbf{I}_c)} \left\lbrack{\rm exp}\left(\omega\textit{\textbf{x}}^{\top}\right){\rm exp}\left(\omega\textit{\textbf{y}}_{i}^{\top}\right)e\left(\textit{\textbf{y}}_{i}\right)\right\rbrack}{\sum_{j=1}^{m}{\mathbb{E}_{\omega \sim \mathcal{N}(\textbf{0}_c, \textbf{I}_c)} \left\lbrack{\rm exp}\left(\omega\textit{\textbf{x}}^{\top}\right){\rm exp}\left(\omega\textit{\textbf{y}}_{j}^{\top}\right)e\left(\textit{\textbf{y}}_{j}\right)\right\rbrack}}.
\end{split}
\end{equation*}
Hence, $e\left(\textit{\textbf{x}}\right)$ has no effect on the final attention score. Eq. (\ref{pef}) can be written as follows,
\begin{equation}
{\rm exp}\left(\textit{\textbf{x}}\textit{\textbf{y}}^{\top}\right) \approx \mathbb{E}_{\omega \sim \mathcal{N}(\textbf{0}_c, \textbf{I}_c)} \left\lbrack{\rm exp}\left(\omega\textit{\textbf{x}}^{\top}\right){\rm exp}\left(\omega\textit{\textbf{y}}^{\top}\right)e\left(\textit{\textbf{y}}\right)\right\rbrack.
\label{pef1}
\end{equation}
After the normalization, the left and right-hand sides of Eq. (\ref{pef1}) will be equal. Eq. (\ref{pef}) only holds when taking the sum of an infinite number of random vectors $\omega$. To avoid performing summation of infinite terms, Performer samples $c\times \log(c)$ random vectors $\omega$ to ensure a low approximation error. If the change of $e\left(\textit{\textbf{y}}\right)$ is much smaller than ${\rm exp}\left(\omega\textit{\textbf{x}}^{\top}\right){\rm exp}\left(\omega\textit{\textbf{y}}^{\top}\right)$, Eq. (\ref{pef1}) can be further simplified and approximated as follows,
\begin{equation}
{\rm exp}\left(\textit{\textbf{x}}\textit{\textbf{y}}^{\top}\right) 
\approx \mathbb{E}_{\omega \sim \mathcal{N}(\textbf{0}_c, \textbf{I}_c)} \left\lbrack{\rm exp}\left(\omega\textit{\textbf{x}}^{\top}\right){\rm exp}\left(\omega\textit{\textbf{y}}^{\top}\right)\right\rbrack.
\label{pef2}
\end{equation}
Efficient attention (EFFATT) \cite{2SOFT} is a special case of Eq. (\ref{pef2}). It simplifies $\omega$ to a one-hot vector and only uses $c$ one-hot vectors in Eq. (\ref{pef2}) to obtain an approximation as follows,
\begin{equation}
{\rm exp}\left(\textit{\textbf{x}}\textit{\textbf{y}}^{\top}\right) 
\approx \frac{1}{c} {\rm exp}\left(\textit{\textbf{x}}\right){\rm exp}\left(\textit{\textbf{y}}\right)^{\top}.
\label{effatt}
\end{equation}
Its approximation error was studied and obtained in \cite{CURSA}. Eq. (\ref{effatt}) has a problem of concentration reduction of attention maps \cite{CURSA} and Performer also has this problem when the number of $\omega$ sampled is too small. When the number of $\omega$ sampled is too large, Performer may be slower than VaniATT. 

\subsection{Efficient Linear Fast Attention Mechanism}
To address the defects of Eq. (\ref{effatt}), in this paper, we propose a novel attention mechanism as follows,

\begin{small}
\begin{equation}
\begin{split}
{\rm exp}&\left(\textit{\textbf{Q}}\textit{\textbf{K}}^{\top}\right)\textit{\textbf{V}} \approx\left\lbrack{\rm exp}\left(\bar{\textit{\textbf{Q}}}\right){\rm exp}\left(\bar{\textit{\textbf{K}}}\right)^{\top}\bar{\textit{\textbf{V}}},  g\left({\rm exp}\left(f(\tilde{\textit{\textbf{Q}}}) f(\tilde{\textit{\textbf{K}}})^{\top}\right)f(\tilde{\textit{\textbf{V}}})\right)\right\rbrack,
\end{split}
\label{elfatt}
\end{equation}
\end{small}

\noindent where $\textit{\textbf{Q}} = \lbrack\bar{\textit{\textbf{Q}}},\tilde{\textit{\textbf{Q}}}\rbrack \in \mathbb{R}^{m \times c}$, $\textit{\textbf{K}} = \lbrack\bar{\textit{\textbf{K}}},\tilde{\textit{\textbf{K}}}\rbrack \in \mathbb{R}^{m \times c}$, $\textit{\textbf{V}} = \lbrack\bar{\textit{\textbf{V}}},\tilde{\textit{\textbf{V}}}\rbrack \in \mathbb{R}^{m \times c}$, $\bar{\textit{\textbf{Q}}}\in\mathbb{R}^{m \times c_1}$, $\tilde{\textit{\textbf{Q}}}\in\mathbb{R}^{m \times c_2}$, $\bar{\textit{\textbf{K}}}\in\mathbb{R}^{m \times c_1}$, $\tilde{\textit{\textbf{K}}}\in\mathbb{R}^{m \times c_2}$, $\bar{\textit{\textbf{V}}}\in\mathbb{R}^{m \times c_1}$, $\tilde{\textit{\textbf{V}}}\in\mathbb{R}^{m \times c_2}$, $c=c_1+c_2$, $f\left(\cdot\right)$ is a blockify function to separate a matrix with a size of $m\times c_2$ into $b$ blocks (each block has a size of $(m/b)\times c_2$), and $g\left(\cdot\right)$ is an unblockify function to unblock $b$ blocks to a single matrix with a size of $m\times c_2$. 

\noindent Eq. (\ref{elfatt}) denotes the single-head VaniATT mechanism approximated by using ELFATT. From the right-hand side of Eq. (\ref{elfatt}), each ELFATT attention process consists of two heads. Hence, ELFATT can directly become an approximation of the double-head VaniATT mechanism as follows,

\begin{small}
\begin{equation}
\begin{split}
\left\lbrack{\rm exp}\left(\bar{\textit{\textbf{Q}}}\bar{\textit{\textbf{K}}}^{\top}\right)\bar{\textit{\textbf{V}}}, {\rm exp}\left(\tilde{\textit{\textbf{Q}}}\tilde{\textit{\textbf{K}}}^{\top}\right)\tilde{\textit{\textbf{V}}}\right\rbrack \approx &\left\lbrack{\rm exp}\left(\bar{\textit{\textbf{Q}}}\right){\rm exp}\left(\bar{\textit{\textbf{K}}}\right)^{\top}\bar{\textit{\textbf{V}}}, \right.\\
&\left.  g\left({\rm exp}\left(f(\tilde{\textit{\textbf{Q}}}) f(\tilde{\textit{\textbf{K}}})^{\top}\right)f(\tilde{\textit{\textbf{V}}})\right)\right\rbrack.
\end{split}
\label{elfatt2}
\end{equation}
\end{small}

Since each ELFATT block corresponds to two parallel heads as Fig. \ref{cursaover} shows, $s$ ($s > 0$) ELFATT will be needed for the approximation of $2\times s$ heads of VaniATT according to Eq. (\ref{elfatt2}). For the approximation of $2\times s - 1$ heads of VaniATT, $2\times s - 1$ ELFATT will be needed according to Eq. (\ref{elfatt}). The approximation error bound analysis of Eqs. (\ref{elfatt}) and (\ref{elfatt2}) is available in Section \ref{AEB} and Fig. \ref{boundana} of Appendix.

\subsection{Positional Encoding}
Different positional encoding mechanisms such as absolute positional encoding \cite{ATA}, relative positional encoding \cite{RPE}, and conditional positional encoding \cite{CPE}, have been proposed to make use of the ordering information of sequences in vision transformers \cite{Swin,VRPE,CPE}. Among these positional encoding mechanisms, the locally enhanced positional encoding (LePE) \cite{CSWin} mechanism shows more powerful local positional information enhancement and brings a higher performance gain for vision transformers. Hence, LePE is selected as the positional encoding mechanism for ELFATT. After introducing LePE, Eq. (\ref{elfatt}) will become as follows,
\begin{equation}
\begin{split}
{\rm exp}\left(\textit{\textbf{Q}}\textit{\textbf{K}}^{\top}\right)\textit{\textbf{V}} + &L\left(\textit{\textbf{V}}\right) \approx \left\lbrack{\rm exp}\left(\bar{\textit{\textbf{Q}}}\right){\rm exp}\left(\bar{\textit{\textbf{K}}}\right)^{\top}\bar{\textit{\textbf{V}}}+L\left(\bar{\textit{\textbf{V}}}\right),\right.  \\
&\left.g\left({\rm exp}\left(f(\tilde{\textit{\textbf{Q}}}) f(\tilde{\textit{\textbf{K}}})^{\top}\right)f(\tilde{\textit{\textbf{V}}})+L\left(f(\tilde{\textit{\textbf{V}}})\right)\right)\right\rbrack,
\end{split}
\label{lepe1}
\end{equation}

\begin{figure}[htbp]
  \centering
    \subfloat[VaniATT]{\includegraphics[width=0.5694\linewidth]{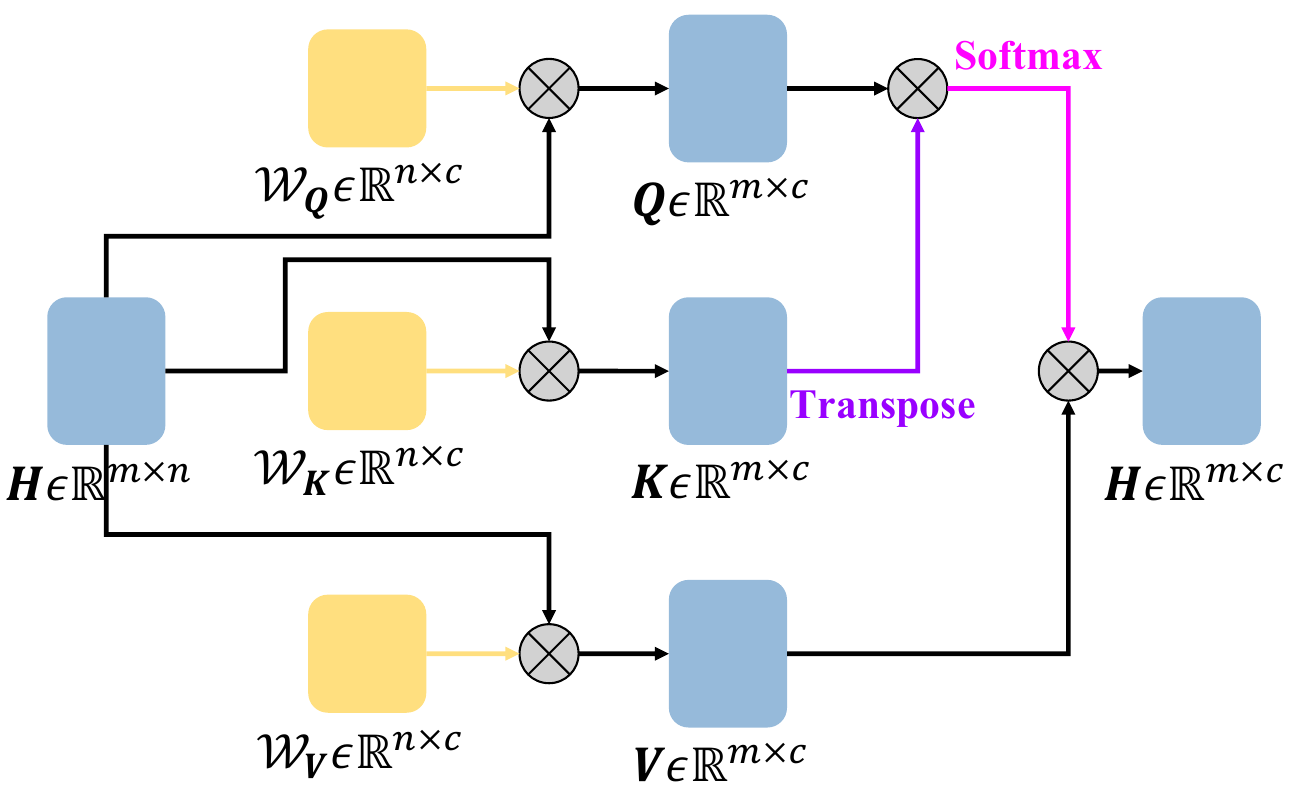}}
    \hfill
    \subfloat[ELFATT]{\includegraphics[width=1\linewidth]{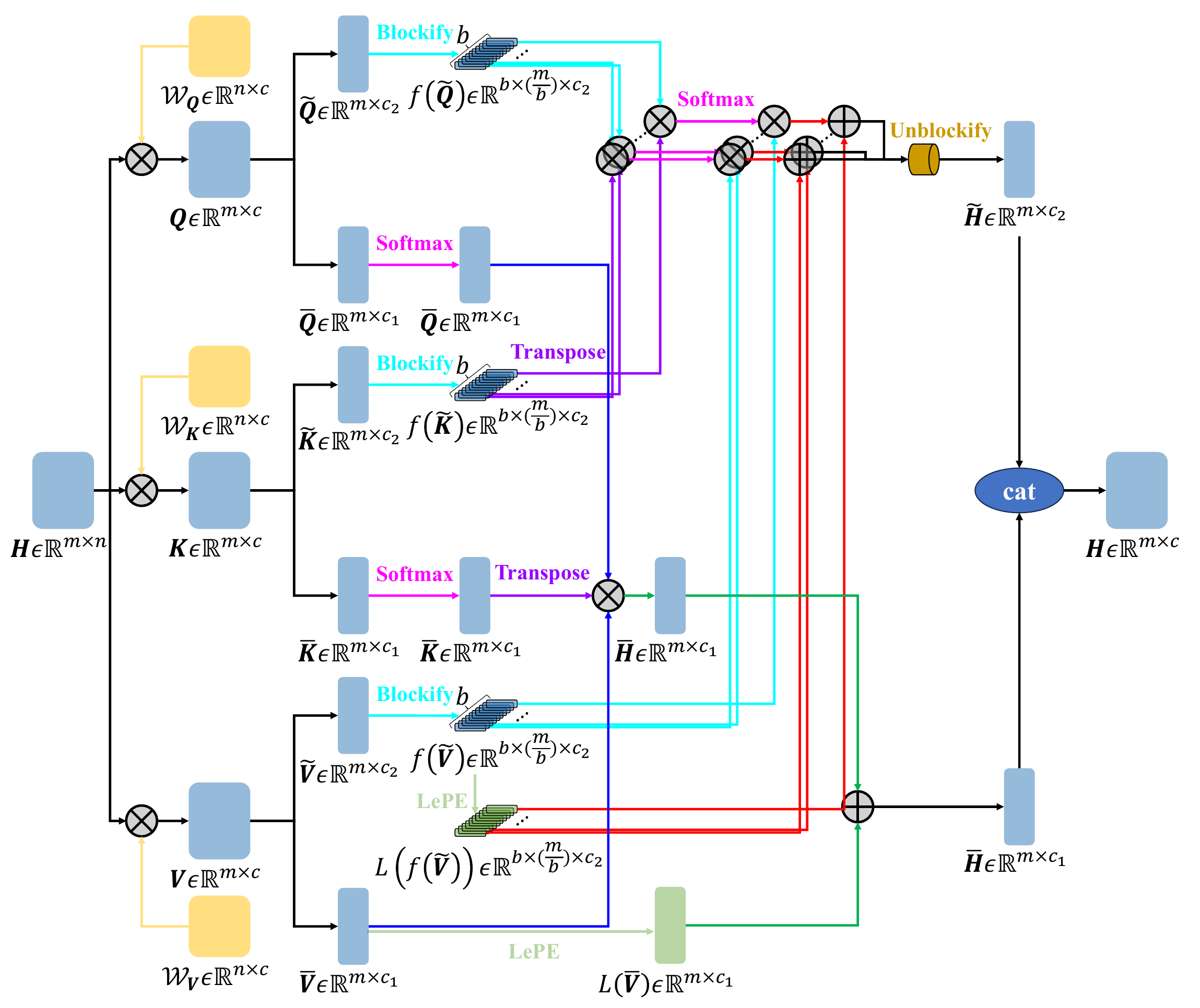}}
    \caption{The comparison of VaniATT and ELFATT. For an input embedding matrix $\textit{\textbf{H}}\in \mathbb{R}^{m \times n}$, each ELFATT block will process it in two parallel attention heads to obtain two new embedding matrices, $\bar{\textit{\textbf{H}}}\in\mathbb{R}^{m \times c_1}$ and $\tilde{\textit{\textbf{H}}}\in\mathbb{R}^{m \times c_2}$, respectively. After a concatenation operation ${\rm cat}$, the ELFATT block will output the updated embedding matrix $\textit{\textbf{H}} = \lbrack\bar{\textit{\textbf{H}}},\tilde{\textit{\textbf{H}}}\rbrack \in \mathbb{R}^{m \times c}$.}
    \label{cursaover}
\end{figure}

\noindent Eq. (\ref{elfatt2}) will become as follows,
\begin{equation}
\begin{split}
&\left\lbrack{\rm exp}\left(\bar{\textit{\textbf{Q}}}\bar{\textit{\textbf{K}}}^{\top}\right)\bar{\textit{\textbf{V}}}+L\left(\bar{\textit{\textbf{V}}}\right), {\rm exp}\left(\tilde{\textit{\textbf{Q}}}\tilde{\textit{\textbf{K}}}^{\top}\right)\tilde{\textit{\textbf{V}}}+L\left(\tilde{\textit{\textbf{V}}}\right)\right\rbrack\approx \\
&\left\lbrack{\rm exp}\left(\bar{\textit{\textbf{Q}}}\right){\rm exp}\left(\bar{\textit{\textbf{K}}}\right)^{\top}\bar{\textit{\textbf{V}}}+L\left(\bar{\textit{\textbf{V}}}\right), g\left({\rm exp}\left(f(\tilde{\textit{\textbf{Q}}}) f(\tilde{\textit{\textbf{K}}})^{\top}\right)f(\tilde{\textit{\textbf{V}}})+L\left(f(\tilde{\textit{\textbf{V}}})\right)\right)\right\rbrack,
\end{split}
\label{lepe2}
\end{equation}
where $L(\cdot)$ denotes a depthwise convolution operation with a kernel size of 3. Fig. \ref{cursaover} shows the comparison of VaniATT and ELFATT. $\mathcal{W}_{\textit{\textbf{Q}}}$, $\mathcal{W}_{\textit{\textbf{K}}}$, and $\mathcal{W}_{\textit{\textbf{V}}} \in \mathbb{R}^{n \times c}$ are three projection matrices. For an input embedding matrix $\textit{\textbf{H}}\in \mathbb{R}^{m \times n}$, each ELFATT block will process it in two parallel attention heads to obtain two new embedding matrices, $\bar{\textit{\textbf{H}}}\in\mathbb{R}^{m \times c_1}$ and $\tilde{\textit{\textbf{H}}}\in\mathbb{R}^{m \times c_2}$, respectively. After a concatenation operation, the ELFATT block will output the updated embedding matrix $\textit{\textbf{H}} = \lbrack\bar{\textit{\textbf{H}}},\tilde{\textit{\textbf{H}}}\rbrack \in \mathbb{R}^{m \times c}$.

\begin{table}[!t]
  \centering
  \caption{The comparison of top-1 test accuracy (Acc.), inference throughput (FPS), parameter numbers (\#), and number of floating point operations (FLOPs) of different methods on ImageNet-1K. Note: Inference throughput is obtained using a batch size of 512 for tiny models, and 256/32 for base models with a resolution of $224^2$/$384^2$ using mixed precision on a single NVIDIA H20 (96 GB) GPU. ``---'' denotes the corresponding method cannot be accelerated by FlashAttention-2. ``Res.'' denotes resolution, ``imgs'' denotes images, and ``nFA/FA'' denotes without/with using FlashAttention-2.}
    \tiny
    \begin{tabular}{lrrrrrr}
    \toprule
    Method & \multicolumn{1}{r}{Res.} & \multicolumn{1}{r}{Acc. (\%)}   & \multicolumn{1}{r}{FPS (nFA/FA)} & \multicolumn{1}{r}{\#} & \multicolumn{1}{r} {FLOPs (nFA/FA)} \\
    \midrule
    \multicolumn{6}{c}{CSWin-B}\\
    \midrule
    Agent                                        &$224^2$& 84.7                                              &    930/994 imgs/s         &  73M     &  14.92G/14.49G     \\ 
    EFFATT                                       &$224^2$& 84.4                                              &   985/1059 imgs/s         &  73M     &  14.98G/14.53G     \\ 
    \rowcolor{gray!25}
    ELFATT                                       &$224^2$& 84.7                                              &  1000/1187 imgs/s         &  73M     &  15.47G/14.46G     \\ 
    FLatten                                      &$224^2$& 84.5                                              &    814/864 imgs/s         &  75M     &  14.96G/14.52G     \\ 
    GLOBAL                                       &$224^2$& 84.7                                              &    478/879 imgs/s         &  73M     &  22.33G/14.39G     \\
    LOCAL                                        &$224^2$& 84.4                                              &   941/1037 imgs/s         &  73M     &  15.03G/14.39G     \\ 
    \midrule
    Agent                                        &$384^2$& 85.8                                              &    246/276 imgs/s         &  73M     &  46.33G/42.57G     \\ 
    EFFATT                                       &$384^2$& 85.7                                              &    294/331 imgs/s         &  73M     &  46.57G/42.69G     \\ 
    \rowcolor{gray!25}
    ELFATT                                       &$384^2$& 85.8                                              &    272/355 imgs/s         &  73M     &  51.21G/42.48G     \\ 
    FLatten                                      &$384^2$& 85.5                                              &    238/266 imgs/s         &  78M     &  46.43G/42.67G     \\ 
    GLOBAL                                       &$384^2$& 85.9                                              &     82/201 imgs/s         &  73M     & 110.89G/42.28G     \\ 
    LOCAL                                        &$384^2$& 85.5                                              &    276/323 imgs/s         &  73M     &  47.06G/42.28G     \\ 
    \midrule
    \multicolumn{6}{c}{CSWin-T}\\
    \midrule
    Agent                                        &$224^2$& 83.1                                              &   2297/2425 imgs/s         &  20M     &  4.31G/4.14G       \\
    EFFATT                                       &$224^2$& 82.6                                              &   2394/2526 imgs/s         &  20M     &  4.35G/4.17G       \\
    \rowcolor{gray!25}
    ELFATT                                       &$224^2$& 83.1                                              &   2603/2856 imgs/s         &  20M     &  4.44G/4.13G       \\
    FLatten                                      &$224^2$& 83.1                                              &   1934/2025 imgs/s         &  21M     &  4.34G/4.16G       \\
    GLOBAL                                       &$224^2$& 83.1                                              &   1303/2210 imgs/s         &  20M     &  7.60G/4.09G       \\
    LOCAL                                        &$224^2$& 82.7                                              &   2330/2519 imgs/s         &  20M     &  4.36G/4.09G       \\
    \midrule
    \multicolumn{6}{c}{Swin-B}\\
    \midrule
    Agent                                              &$224^2$& 84.0                                              &  1367/\ ---\ \ \ imgs/s    &  88M     &  15.44G/\,\, ---\;\ \ \ \ \\ 
    EFFATT                                             &$224^2$& 84.2                                              &  1439/1536 imgs/s          &  88M     &  15.69G/15.33G       \\ 
    \rowcolor{gray!25}    
    ELFATT                                             &$224^2$& 84.5                                              &  1314/1497 imgs/s          &  91M     &  16.46G/15.68G       \\ 
    FLatten                                            &$224^2$& 83.8                                              &  1226/\ ---\ \ \ imgs/s    &  89M     &  15.46G/\,\, ---\;\ \ \ \ \\ 
    GLOBAL                                             &$224^2$& 84.2                                              &  743/1325 imgs/s           &  88M     &  21.57G/15.19G       \\ 
    LOCAL                                              &$224^2$& 83.5                                              &  1351/\ ---\ \ \ imgs/s    &  88M     &  15.47G/\,\, ---\;\ \ \ \ \\ 
    \midrule
    Agent                                              &$384^2$& 84.9                                              &  372/\,--- imgs/s            &  88M     &  46.34G/\,\, ---\;\ \ \ \  \\      
    EFFATT                                             &$384^2$& 85.3                                              &  433/481 imgs/s             &  88M     &  48.18G/45.04G             \\ 
    \rowcolor{gray!25}    
    ELFATT                                             &$384^2$& 85.5                                              &  372/457 imgs/s             &  91M     &  52.79G/46.08G             \\ 
    FLatten                                            &$384^2$& 85.0                                              &  353/\,--- imgs/s            &  91M     &  46.49G/\,\, ---\;\ \ \ \  \\ 
    GLOBAL                                             &$384^2$& 85.3                                              &  134/313 imgs/s             &  88M     &  99.76G/44.64G             \\ 
    LOCAL                                              &$384^2$& 84.5                                              &  357/\,--- imgs/s            &  88M     &  47.19G/\,\, ---\;\ \ \ \  \\ 
    \midrule
    \multicolumn{6}{c}{Swin-T}\\
    \midrule    
    Agent                                              &$224^2$& 82.6                                              &   2847/\ ---\ \ \ imgs/s   &  29M     &  4.53G/\ ---\ \ \ \ \\
    EFFATT                                             &$224^2$& 82.1                                              &   3165/3282 imgs/s         &  28M     &  4.55G/4.45G        \\
    \rowcolor{gray!25}    
    ELFATT                                             &$224^2$& 82.7                                              &   2884/3159 imgs/s         &  30M     &  4.99G/4.67G        \\
    FLatten                                            &$224^2$& 82.1                                              &   2502/\ ---\ \ \ imgs/s   &  29M     &  4.50G/\ ---\ \ \ \ \\
    GLOBAL                                             &$224^2$& 82.4                                              &   1269/2571 imgs/s         &  28M     &  8.81G/4.38G        \\
    LOCAL                                              &$224^2$& 81.4                                              &   2943/\ ---\ \ \ imgs/s   &  28M     &  4.51G/\ ---\ \ \ \ \\
    \midrule
    \multicolumn{6}{c}{Others}\\
    \midrule
    ConvNeXt-T                                                &$224^2$& 82.1                                              &   3911/\ ---\ \ \ imgs/s   &  29M     &  4.47G/\ ---\ \ \ \ \\
    VMamba-T                                                  &$224^2$& 82.5                                              &   1837/\ ---\ \ \ imgs/s   &  30M     &  4.84G/\ ---\ \ \ \ \\
   \bottomrule
    \end{tabular}
  \label{tab1}
\end{table}

\section{Complexity Analysis}
\label{ComANA}
According to Eqs. (\ref{lepe1}) and (\ref{lepe2}), the complexity of ELFATT can be divided into two parts: (a) the global linear attention head ${\rm exp}\left(\bar{\textit{\textbf{Q}}}\right){\rm exp}\left(\bar{\textit{\textbf{K}}}\right)^{\top}\bar{\textit{\textbf{V}}}+L\left(\bar{\textit{\textbf{V}}}\right)$; (b) the sparse blockify attention head $g\left({\rm exp}\left(f(\tilde{\textit{\textbf{Q}}}) f(\tilde{\textit{\textbf{K}}})^{\top}\right)f(\tilde{\textit{\textbf{V}}})+L\left(f(\tilde{\textit{\textbf{V}}})\right)\right)$. The complexity of the global linear attention head is $O(m \times c_1^2)$ in the case of $m>c_1$, and $O(m^2 \times c_1)$ in the case of $m\leq c_1$. The complexity of the sparse blockify attention head is $O((m^2/b) \times c_2)$. If $m/b \ll m$, the sparse blockify attention head will also achieve almost linear complexity. 

\section{Experiments and Results}
We evaluated ELFATT in commonly used vision tasks: image classification (ImageNet-1K \cite{IMGNET}), semantic segmentation (ADE20K \cite{ADE20K}), and object detection (MS COCO 2017 \cite{COCO}). We compared ELFATT with VaniATT \cite{ATA}, the memory-efficient attention mechanism (FlashAttention-2 \cite{Flash2}), local window-based attention mechanisms (Swin \cite{Swin} and CSWin \cite{CSWin}), and kernel-based attention mechanisms (Agent \cite{AGENTATT}, EFFATT \cite{2SOFT}, and FLatten \cite{flatten}). The backbone ViT architectures used to evaluate the different attention mechanisms are: Swin-T \cite{Swin}, Swin-B \cite{Swin}, CSWin-T24181 (CSWin-T) \cite{flatten}, and CSWin-B36292 (CSWin-B) \cite{flatten}. The original Swin-T, Swin-B, CSWin-T, and CSWin-B are called Swin-T-LOCAL, Swin-B-LOCAL, CSWin-T-LOCAL, and CSWin-B-LOCAL, respectively. The backbones after replacing local window-based attention mechanisms with VaniATT \cite{ATA} are called Swin-T-GLOBAL, Swin-B-GLOBAL, CSWin-T-GLOBAL, and CSWin-B-GLOBAL, respectively. The backbones after replacing local window-based attention mechanisms with kernel-based attention mechanisms (Agent \cite{AGENTATT}, EFFATT \cite{2SOFT}, and FLatten \cite{flatten}) are called Swin-T-Agent, Swin-T-EFFATT, Swin-T-FLatten, Swin-B-Agent, Swin-B-EFFATT, Swin-B-FLatten, CSWin-T-Agent, CSWin-T-EFFATT, CSWin-T-FLatten, CSWin-B-Agent, CSWin-B-EFFATT, and CSWin-B-FLatten, respectively. The backbones after replacing local window-based attention mechanisms with ELFATT are called Swin-T-ELFATT, Swin-B-ELFATT, CSWin-T-ELFATT, and CSWin-B-ELFATT, respectively. For the ImageNet-1K image classification task, we used the same training settings and data augmentation methods from \cite{CSWin} to train all models from scratch using mixed precision. For the ADE20K semantic segmentation task \cite{ADE20K} and the MS COCO 2017 object detection task \cite{COCO}, we used the same training settings and data augmentation methods from \cite{Vmamba} to fine-tune the weights of all models obtained from the ImageNet-1K image classification task using mixed precision. We also compared ELFATT with ConvNeXt-T \cite{ConvNeXt} and VMamba-T \cite{Vmamba} on ImageNet-1K, ADE20K, and MS COCO 2017. The experiments of ImageNet-1K, ADE20K, and MS COCO 2017 were carried out on 8 NVIDIA vGPU (32 GB) GPUs. Inference throughput comparison experiments were carried out on 1 NVIDIA H20 (96 GB) GPU. Besides high-performance computing applications, we also compared ELFATT with state-of-the-art attention mechanisms on edge GPUs (NVIDIA Jetson AGX Orin/NVIDIA Jetson Nano). Furthermore, we investigated how ELFATT can be used to accelerate diffusion tasks. To validate the performance of ELFATT on some non-vision long sequence tasks, we introduced long-range arena (LRA) benchmark \cite{LRA} and compared ELFATT with VaniATT \cite{ATA} and efficient attention mechanisms: Linformer \cite{LIF}, Nystr{\" o}mformer \cite{NTRANS}, Primal \cite{SVDATT}, and Reformer \cite{SPARSE6}. We followed the training protocol and the settings of \cite{SVDATT} and carried out the experiments on 1 NVIDIA Tesla A100 (40GB) GPU. The details of ablation studies, experiment settings of edge GPUs, and experiment settings and ablation studies of diffusion acceleration are available in Sections \ref{APAB1}, \ref{APAB2}, \ref{APAB3}, \ref{APAB4}, \ref{APAB5}, \ref{APEGPU}, and \ref{APSD} of Appendix. We used FlashAttention-2 \cite{Flash2} to speed up all models that are compatible with FlashAttention. The PyTorch implementation of ELFATT, including the detailed architecture specifications of Swin-T-ELFATT, Swin-B-ELFATT, CSWin-T-ELFATT and CSWin-B-ELFATT, is available \href{https://github.com/Alicewithrabbit/ELFATT}{at this URL}.

\subsection{Image Classification Performance}
Table \ref{tab1} shows the performance comparison of different attention mechanisms on ImageNet-1K. From Table \ref{tab1}, VaniATT (Swin-T-GLOBAL and CSWin-T-GLOBAL) can outperform most linear attention mechanisms using the same architecture. Only Agent and ELFATT achieve the most close performance compared to VaniATT in this paper. As to the inference speed comparison, the proposed ELFATT achieves the highest inference throughput (frame per second, FPS) than all other attention mechanisms when using CSWin-T as the backbone. ELFATT achieves almost the same speed as EFFATT and is significantly faster than other attention mechanisms when using Swin-T as the backbone. ELFATT offers almost a 2x speedup over VaniATT without using FlashAttention-2. VaniATT using FlashAttention-2 is still 0.1-0.2x slower than ELFATT without using FlashAttention-2. With the use of FlashAttention-2 to optimize memory operations, ELFATT can be further accelerated and offers 1.2-1.3x speedups over VaniATT. Fig. \ref{visualcom2} in Appendix shows the visual comparison of different attention mechanisms using CSWin-T as the backbone. In Fig. \ref{visualcom2} (a), only Agent, ELFATT, and VaniATT (GLOBAL) focus mainly on the true object: American Staffordshire Terrier. In Fig. \ref{visualcom2} (b), only ELFATT and GLOBAL focus mainly on the true object: Hartebeest. In Fig. \ref{visualcom2} (c), only EFFATT, ELFATT, GLOBAL, and LOCAL focus mainly on the true object: Zebra. In Fig. \ref{visualcom2} (d), only ELFATT and GLOBAL focus mainly on the true object: Library. In Fig. \ref{visualcom2} (e), all methods focus accurately on the true object: Snail. In Fig. \ref{visualcom2} (f), all methods focus accurately on the true object: Black-Footed Ferret, ELFATT covers more areas of the true object. In Fig. \ref{visualcom2} (g), all methods also focus accurately on the true object: Scottish Deerhound, ELFATT also covers more areas of the true object. In Fig. \ref{visualcom2} (h), only ELFATT focuses mainly on the true object: Pineapple. To sum up, ELFATT shows a more focused attention map on the ground-truth object. For the larger backbone, CSWin-B, ELFATT still achieves state-of-the-art (SOTA) performance. ELFATT offers a 2.1/1.4x speedup over VaniATT with/without using FlashAttention-2 for an input resolution of $224^2$ and a 3.3/1.8x speedup over VaniATT with/without using FlashAttention-2 for an input resolution of $384^2$. In addition, ELFATT achieves significantly leading performance under the backbone of Swin-B, offering a 1.8/2.8x speedup over VaniATT without using FlashAttention-2 for an input resolution of $224^2$/$384^2$. After using FlashAttention-2, it matches the speed of EFFATT which is a real linear attention mechanism. Also, as shown in Table \ref{tab1}, it can be seen that although some methods have lower FLOPs than ELFATT, their actual speed is slower than ELFATT which is consistent with the observations and conclusions of \cite{FLOPsslow}. Fig. \ref{acceff} in Appendix shows accuracy-efficiency curves obtained by different attention mechanisms on ImageNet-1K using different bacbones. Compared to VaniATT, ELFATT offers 2.2-4.3x speedups for the backbone of CSWin and 2.5-3.4x speedups for the backbone of Swin to achieve similar accuracy. Compared to FlashAttention-2, ELFATT offers 1.3-1.8x speedups for the backbone of CSWin and 1.2-1.5x speedups for the backbone of Swin to achieve similar accuracy. ELFATT is significantly faster than all comparison methods in achieving similar accuracy using different backbones.

\begin{table}[!t]
  \centering
  \caption{The comparison of semantic segmentation performance of all methods on ADE20K. Note: `mAcc' denotes mean class accuracy and `mIoU' denotes mean intersection over union. FLOPs are calculated using an input size of $512\times2048$. `160k' denotes the 160k fine-tuning iterations. Inference throughput is obtained using a batch size of 1 with mixed precision on a single NVIDIA H20 (96 GB) GPU.}
  \tiny
    \begin{tabular}{lrrrrr}
    \toprule
           \multicolumn{6}{c}{UperNet \cite{upernet} 160k} \\
    \midrule
    Method & \multicolumn{1}{r}{mAcc}  & \multicolumn{1}{r}{mIoU} & \multicolumn{1}{r}{FPS (nFA/FA)} & \multicolumn{1}{r}{\#} & \multicolumn{1}{r}{FLOPs (nFA/FA)} \\
    \midrule
    \multicolumn{6}{c}{CSWin-T}\\
    \midrule
    Agent                               & 60.8                    & 48.5               &  16/17 imgs/s   &    50M&   953.60G/929.64G      \\
    EFFATT                              & 60.6                    & 48.8               &  28/33 imgs/s   &    50M&  1008.73G/930.34G      \\
    \rowcolor{gray!25}
    ELFATT                              & 61.2                    & 49.6               &  28/32 imgs/s   &    50M&  1014.26G/929.53G      \\
    FLatten                             & 61.4                    & 49.3               &  25/27 imgs/s   &    51M&   954.15G/930.19G      \\
    GLOBAL                              & 61.1                    & 48.8               & \ 6/14 imgs/s   &    50M&  2458.75G/928.67G      \\
    LOCAL                               & 61.1                    & 49.6               &  26/28 imgs/s   &    50M&   963.38G/928.68G      \\    
    \midrule
    \multicolumn{6}{c}{Swin-T}\\
    \midrule
    Agent                                     & 58.5                    & 46.7               &   4/--- imgs/s   &    61M&   957.50G/\ ---\ \ \ \ \ \ \  \\
    EFFATT                                    & 58.3                    & 46.7               &  35/39 imgs/s    &    60M&   981.22G/939.35G             \\
    \rowcolor{gray!25}
    ELFATT                                    & 59.3                    & 47.7               &  34/38 imgs/s    &    62M&   991.27G/943.94G             \\
    FLatten                                   & 57.0                    & 44.8               &  35/--- imgs/s   &    61M&   944.62G/\ ---\ \ \ \ \ \ \  \\
    GLOBAL                                    & 59.3                    & 47.8               &   5/14 imgs/s    &    60M&  2873.79G/937.84G             \\
    LOCAL                                     & 55.6                    & 44.5               &  38/--- imgs/s   &    60M&   945.66G/\ ---\ \ \ \ \ \ \  \\
    \midrule
    \multicolumn{6}{c}{Others}\\    
    \midrule
    ConvNeXt-T                                       & 58.3                    & 46.1               &  37/--- imgs/s   &    60M&  939.69G/\ ---\ \ \ \ \ \ \  \\
    VMamba-T                                         & 59.3                    & 47.9               &  34/--- imgs/s   &    62M&  948.78G/\ ---\ \ \ \ \ \ \  \\
    \bottomrule
    \end{tabular}
  \label{tabADE}
\end{table}

\begin{table}[htbp]
\scriptsize
  \centering
  \caption{The comparison of test accuracy of different attention mechanisms on LRA.}
    \begin{tabular}{l>{\columncolor{gray!25}}rrrrrrr}
    \toprule
    Dataset (Seqlen)   & \multicolumn{1}{r}{ELFATT} & \multicolumn{1}{r}{Linformer}& \multicolumn{1}{r}{Nystr{\" o}m} & \multicolumn{1}{r}{Primal}  & \multicolumn{1}{r}{Reformer} & \multicolumn{1}{r}{VaniATT}\\
    \midrule
    ListOps (2K)                 & \textbf{38.1} & 37.3  & 37.2            & 37.3 & 19.1 & 37.1 \\
    Text (4K)                    & 64.8          & 55.9  & \textbf{65.5}   & 61.2 & 64.9 & 65.0 \\
    Retrieval (4K)               & \textbf{81.3} & 79.4  & 79.6            & 77.8 & 78.6 & 79.4 \\
    Image (1K)                   & \textbf{45.6} & 37.8  & 41.6            & 43.0 & 43.3 & 38.2 \\
    Pathfinder (1K)              & 72.3          & 67.6  & 70.9            & 68.3 & 69.4 & \textbf{74.2}\\
    \midrule
    Avg Acc. (\%)            & \textbf{60.4} & 55.6  & 59.0            & 57.5 & 55.1 & 58.8\\
    \bottomrule
    \end{tabular}
  \label{LRA}
\end{table}

\subsection{Semantic Segmentation Performance}
Table \ref{tabADE} shows the comparison of semantic segmentation performance of all methods on ADE20K. VaniATT and ELFATT achieve close mean class accuracy (mAcc) and mean intersection over union (mIoU), and significantly outperform other attention mechanisms when using Swin-T as the backbone. ELFATT and FLatten achieve close mAcc when using CSWin-T as the backbone. However, ELFATT and the local window-based attention mechanism \cite{CSWin} significantly outperform other attention mechanisms in terms of mIoU. Without using FlashAttention-2, ELFATT offers a nearly 5x speedup over VaniATT using CSWin-T as the backbone and a 7x speedup over VaniATT using Swin-T as the backbone. Even using FlashAttention-2, under the CSWin-T backbone, VaniATT is still 2x slower than ELFATT without using FlashAttention-2 and 2.3x slower than ELFATT using FlashAttention-2. Under the Swin-T backbone, VaniATT is still 2.4x slower than ELFATT without using FlashAttention-2 and 2.7x slower than ELFATT using FlashAttention-2. The speed of ELFATT is almost the same as the speed of EFFATT which is a real linear attention mechanism. However, ELFATT achieves significantly higher mAcc and mIoU than those of EFFATT. ELFATT outperforms ConvNeXt-T and VMamba-T in terms of segmentation performance.

\begin{table}[htbp]
    \tiny
  \centering
  \caption{The comparison of running time (s) per 1K training steps and peak memory consumption of different attention mechanisms on LRA on a single NVIDIA H20 (96GB).}
    \begin{tabular}{l>{\columncolor{gray!25}}rrrrrrr}
    \toprule
    Dataset (Seqlen)  & \multicolumn{1}{c}{ELFATT} & \multicolumn{1}{c}{FA-2} & \multicolumn{1}{c}{Linformer}& \multicolumn{1}{c}{Nystr{\" o}m} & \multicolumn{1}{c}{Primal} & \multicolumn{1}{c}{Reformer} & \multicolumn{1}{c}{VaniATT} \\
    \midrule
    \multicolumn{8}{c}{Time (s) $\vert$ Memory (GB)}\\
    \midrule
    ListOps (2K)     & 	\textbf{16.2} $\vert$ \textbf{0.4} & 24.2 $\vert$ \textbf{0.4} & 19.6 $\vert$ 1.1   & 28.3 $\vert$ 0.6  & 21.2 $\vert$ 0.5 & 34.9 $\vert$ 1.4  &  71.2 $\vert$ 4.3 \\
    Text (4K)        & 	\textbf{28.8} $\vert$ \textbf{0.8} & 63.7 $\vert$ \textbf{0.8} & 35.1 $\vert$ 2.2   & 50.4 $\vert$ 1.2  & 33.9 $\vert$ 0.9 & 69.3 $\vert$ 2.8  & 263.2 $\vert$ 16.5\\
    Retrieval (4K)   &  \textbf{55.2} $\vert$ 1.5          & 126.3 $\vert$ \textbf{1.4}& 69.1 $\vert$ 4.1   & 96.0 $\vert$ 2.4  & 66.5 $\vert$ 1.9 & 137.8 $\vert$ 5.4 & 523.0 $\vert$ 17.2  \\
    Image (1K)       & 	\textbf{42.9} $\vert$ \textbf{0.8} & 48.9 $\vert$ \textbf{0.8} & 45.8 $\vert$ 2.2   & 69.7 $\vert$ 1.4  & 49.8 $\vert$ 1.0 & 82.9 $\vert$ 2.9  & 109.2 $\vert$ 4.5 \\
    Pathfinder (1K)  & 	\textbf{49.5} $\vert$ \textbf{0.8} & 63.8 $\vert$ \textbf{0.8} & 63.4 $\vert$ 2.2   & 91.0 $\vert$ 1.4  & 64.6 $\vert$ 1.0 & 113.5 $\vert$ 2.9 & 144.2 $\vert$ 4.5   \\
    \bottomrule
    \end{tabular}
  \label{LRA1}
\end{table}

\begin{table}[!t]
  \centering
  \caption{The comparison of object detection performance of all methods on MS COCO 2017. Note: FLOPs are calculated using an input size of $1280\times800$. `1$\times$' denotes the fine-tuning training schedule with 12 epochs and `3$\times$MS' represents fine-tuning using the multiscale training schedule with 36 epochs. AP$^{\rm b}$ denotes box average precision and AP$^{\rm m}$ denotes mask average precision. Inference throughput (FPS (nFA/FA), imgs/s) is obtained using a batch size of 1 with mixed precision on a single NVIDIA H20 (96 GB) GPU.}
    \tiny
    \begin{tabular}{lrrrrrrrrr}
    \toprule
    \multicolumn{10}{c}{Mask-RCNN \cite{maskrcnn} 1$\times$ schedule} \\
    \midrule
    Method & \multicolumn{1}{l}{AP$^{\rm b}$} & \multicolumn{1}{l}{AP$^{\rm b}_{50}$} & \multicolumn{1}{l}{AP$^{\rm b}_{75}$} & \multicolumn{1}{l}{AP$^{\rm m}$} & \multicolumn{1}{l}{AP$^{\rm m}_{50}$} & \multicolumn{1}{l}{AP$^{\rm m}_{75}$} & \multicolumn{1}{r}{FPS} & \multicolumn{1}{r}{\#}    & \multicolumn{1}{r}{FLOPs (nFA/FA)} \\
    \midrule
    \multicolumn{10}{c}{CSWin-T}\\
    \midrule
    Agent                  & 46.8           & 68.9           & 51.3           & 42.3           & 65.9           & 45.3           & 19/20 & 40M      &  273.87G/254.51G       \\
    EFFATT                 & 46.1           & 68.3           & 50.5           & 41.9           & 65.5           & 45.3           & 32/36 & 40M      &  329.95G/255.20G       \\
    \rowcolor{gray!25}
    ELFATT                 & 47.0           & 69.2           & 51.4           & 42.6           & 66.4           & 45.9           & 30/33 & 40M      &  334.86G/254.40G       \\
    FLatten                & 46.6           & 68.8           & 51.0           & 42.2           & 65.7           & 45.3           & 26/28 & 41M      &  274.41G/255.05G       \\
    GLOBAL                 & 47.0           & 69.1           & 51.9           & 42.6           & 66.1           & 45.9           &  7/18 & 40M      & 1712.76G/253.56G       \\
    LOCAL                  & 46.5           & 68.5           & 51.0           & 42.1           & 65.6           & 45.3           & 26/28 & 40M      &  281.45G/253.57G       \\
    \midrule
    \multicolumn{10}{c}{Swin-T}\\
    \midrule
    Agent                        & 44.6           & 67.5           & 48.7           & 40.7           & 64.4           & 43.4           & 5/--- & 48M      & 278.42G/\ ---\ \ \ \ \ \ \  \\
    EFFATT                       & 44.7           & 67.0           & 48.9           & 41.1           & 64.0           & 44.4           & 40/46 & 48M      & 301.89G/261.95G             \\
    \rowcolor{gray!25}
    ELFATT                       & 46.1           & 68.3           & 50.8           & 42.1           & 65.4           & 45.3           & 39/45 & 50M      & 311.39G/266.43G             \\
    FLatten                      & 44.2           & 67.3           & 48.5           & 40.2           & 63.8           & 43.0           &41/--- & 49M      & 266.43G/\ ---\ \ \ \ \ \ \  \\
    GLOBAL                       & 45.4           & 67.9           & 49.7           & 41.6           & 65.0           & 44.8           & 6/17  & 48M      & 2106.75G/260.48G            \\
    LOCAL                        & 42.7           & 65.2           & 46.8           & 39.3           & 62.2           & 42.2           &45/--- & 48M      & 267.01G/\ ---\ \ \ \ \ \ \  \\
    \midrule
    \multicolumn{10}{c}{Other tiny models}\\
    \midrule
    ConvNeXt                          & 44.2           & 66.6           & 48.3           & 40.1           & 63.3           & 42.8           &44/--- & 48M      & 262.29G/\ ---\ \ \ \ \ \ \  \\
    VMamba                            & 47.4           & 69.5           & 52.0           & 42.7           & 66.3           & 46.0           &35/--- & 50M      & 271.16G/\ ---\ \ \ \ \ \ \  \\
        \midrule
    \multicolumn{10}{c}{Mask-RCNN \cite{maskrcnn} 3$\times$MS schedule} \\
    \midrule
Method & \multicolumn{1}{l}{AP$^{\rm b}$} & \multicolumn{1}{l}{AP$^{\rm b}_{50}$} & \multicolumn{1}{l}{AP$^{\rm b}_{75}$} & \multicolumn{1}{l}{AP$^{\rm m}$} & \multicolumn{1}{l}{AP$^{\rm m}_{50}$} & \multicolumn{1}{l}{AP$^{\rm m}_{75}$} & \multicolumn{1}{r}{FPS} & \multicolumn{1}{r}{\#}    & \multicolumn{1}{r}{FLOPs (nFA/FA)} \\
    \midrule
    \multicolumn{10}{c}{CSWin-T}\\
    \midrule
    Agent                  & 49.3           & 70.8           & 53.9           & 43.9           & 67.9           & 47.3           & 19/20 & 40M   &  273.87G/254.51G       \\
    EFFATT                 & 48.5           & 70.0           & 53.2           & 43.4           & 67.3           & 46.9           & 32/36 & 40M   &  329.95G/255.20G       \\
    \rowcolor{gray!25}
    ELFATT                 & 49.4           & 70.9           & 54.4           & 44.0           & 68.0           & 47.5           & 30/33 & 40M   &  334.86G/254.40G       \\
    FLatten                & 48.9           & 70.8           & 53.5           & 43.9           & 67.9           & 47.3           & 26/28 & 41M   &  274.41G/255.05G       \\
    GLOBAL                 & 48.8           & 70.0           & 53.5           & 43.6           & 67.4           & 47.1           &  7/18 & 40M   & 1712.76G/253.56G       \\
    LOCAL                  & 49.3           & 70.8           & 54.3           & 44.0           & 67.8           & 47.5           & 26/28 & 40M   &  281.45G/253.57G       \\
    \midrule
    \multicolumn{10}{c}{Swin-T}\\
    \midrule
    Agent                        & 47.3           & 69.5           & 51.9           & 42.7           & 66.4           & 46.2           & 5/--- & 48M      & 278.42G/\ ---\ \ \ \ \ \ \  \\
    EFFATT                       & 47.6           & 69.4           & 52.6           & 42.7           & 65.9           & 46.1           & 40/46 & 48M      & 301.89G/261.95G             \\
    \rowcolor{gray!25}
    ELFATT                       & 48.5           & 70.4           & 53.4           & 43.6           & 67.3           & 47.3           & 39/45 & 50M      & 311.39G/266.43G             \\
    FLatten                      & 46.5           & 68.5           & 50.8           & 42.1           & 65.4           & 45.1           &41/--- & 49M      & 266.43G/\ ---\ \ \ \ \ \ \  \\
    GLOBAL                       & 48.0           & 70.0           & 52.7           & 43.3           & 67.0           & 46.8           & 6/17  & 48M      & 2106.75G/260.48G            \\
    LOCAL                        & 46.0           & 68.1           & 50.3           & 41.6           & 65.1           & 44.9           &45/--- & 48M      & 267.01G/\ ---\ \ \ \ \ \ \  \\
    \midrule
    \multicolumn{10}{c}{Other tiny models}\\
    \midrule
    ConvNeXt                          & 46.2           & 67.9           & 50.8           & 41.7           & 65.0           & 44.9           &44/--- & 48M      & 262.29G/\ ---\ \ \ \ \ \ \  \\
    VMamba                            & 48.9           & 70.6           & 53.6           & 43.7           & 67.7           & 46.8           &35/--- & 50M      & 271.16G/\ ---\ \ \ \ \ \ \  \\
    \bottomrule
    \end{tabular}
  \label{tabCOCO}
\end{table}

\subsection{Performance Comparison on LRA}
The block sizes of ELFATT are set to 64 as used in Nystr{\" o}mformer (Nystr{\" o}m) in all LRA tasks to maintain fairness. As Table \ref{LRA} shows, ELFATT significantly outperforms all efficient attention mechanisms in all tasks except Text (4K) and Pathfinder (1K) on LRA. All efficient attention mechanisms are inferior to VaniATT in Pathfinder (1K). ELFATT is only inferior to VaniATT in Pathfinder (1K). For Text (4K), ELFATT is slightly inferior to VaniATT. The overall performance of ELFATT is the best and significantly better than all comparison methods, including VaniATT. Table \ref{LRA1} shows the running time (s) per 1K training steps and peak memory consumption of all methods in the training of LRA. ELFATT is significantly faster than all comparison methods. Especially, ELFATT offers 1.2-2.3x speedups over FlashAttention-2. For most tasks, ELFATT and FlashAttention-2 consume less GPU memory than other comparison methods.

\subsection{Object Detection Performance}
Table \ref{tabCOCO} shows the object detection performance of all methods on MS COCO 2017. Using the Mask-RCNN \cite{maskrcnn} 1$\times$ schedule, VaniATT and ELFATT significantly outperform other attention mechanisms. ELFATT exhibits slightly better performance than VaniATT when using CSWin-T as the backbone and significantly outperforms VaniATT when using Swin-T as the backbone. Under the backbone of CSWin-T, ELFATT offers a 4.3x speedup over VaniATT without using FlashAttention-2 and is still 1.8x faster than VaniATT using FlashAttention-2. When using Swin-T as the backbone, ELFATT offers a 6.5x speedup over VaniATT without using FlashAttention-2 and is still 2.6x faster than VaniATT using FlashAttention-2. Using the Mask-RCNN \cite{maskrcnn} 3$\times$ multiscale training schedule, ELFATT still achieves the best performance. VaniATT still outperforms most linear attention mechanisms when using Swin-T as the backbone. In the 1$\times$ schedule, ELFATT using CSWin-T as the backbone achieves close performance compared to VMamba-T and significantly outperforms ConvNeXt-T. In the 3$\times$ multiscale training schedule, ELFATT using CSWin-T as the backbone outperforms ConvNeXt-T and VMamba-T in terms of object detection performance.

\begin{table}[htbp]
  \centering
  \caption{The comparison of inference throughput of different attention mechanisms obtained on ImageNet-1K.}
  \scriptsize  
    \begin{tabular}{lrrrrr}
    \toprule
    Method & \multicolumn{1}{r}{Res.} & \multicolumn{1}{r}{FPS (FP32)}  & \multicolumn{1}{r}{Speedup} & \multicolumn{1}{r}{FPS (Mixed)} & \multicolumn{1}{r}{Speedup} \\
    \midrule
    \multicolumn{6}{c}{Backbone: CSWin-T; Platform: NVIDIA Jetson AGX Orin; Batch Size: 128; Mode: 60W}\\
    \midrule   
    Agent                               &$224^2$& 158 imgs/s                  & 1.5x               &  295 imgs/s   &       1.3x      \\
    EFFATT                              &$224^2$& 156 imgs/s                  & 1.4x               &  305 imgs/s   &       1.3x      \\
    \rowcolor{gray!25}
    ELFATT                              &$224^2$& 174 imgs/s                  & 1.6x               &  355 imgs/s   &       1.6x      \\
    FLatten                             &$224^2$& 138 imgs/s                  & 1.3x               &  229 imgs/s   &       1.0x      \\
    GLOBAL                              &$224^2$& 108 imgs/s                  & 1.0x               &  298 imgs/s   &       1.3x      \\
    LOCAL                               &$224^2$& 167 imgs/s                  & 1.5x               &  309 imgs/s   &       1.3x      \\
    \midrule
    \multicolumn{6}{c}{Backbone: Others; Platform: NVIDIA Jetson AGX Orin; Batch Size: 128; Mode: 60W}\\
    \midrule
    EfficientViT-B2                     &$288^2$& 198 imgs/s                  & 1.8x               &  282 imgs/s   &       1.2x      \\
    \midrule
    \multicolumn{6}{c}{Backbone: CSWin-T; Platform: NVIDIA Jetson Nano; Batch Size: 1; Mode: 10W}\\
    \midrule    
    Agent                               &$224^2$& 4.9 imgs/s      & 1.5x         & 4.5 imgs/s      & 1.3x \\
    EFFATT                              &$224^2$& 5.8 imgs/s      & 1.8x         & 5.5 imgs/s      & 1.6x \\
    \rowcolor{gray!25}
    ELFATT                              &$224^2$& 6.2 imgs/s      & 2.0x         & 5.8 imgs/s      & 1.7x \\
    FLatten                             &$224^2$& 4.7 imgs/s      & 1.5x         & 4.0 imgs/s      & 1.1x \\
    GLOBAL                              &$224^2$& 3.2 imgs/s      & 1.0x         & 3.5 imgs/s      & 1.0x \\
    LOCAL                               &$224^2$& 5.0 imgs/s      & 1.6x         & 4.2 imgs/s      & 1.2x \\
    \midrule
    \multicolumn{6}{c}{Backbone: Others; Platform: NVIDIA Jetson Nano; Batch Size: 1; Mode: 10W}\\
    \midrule
    EfficientViT-B2                     &$288^2$& 5.5 imgs/s      & 1.7x         & 5.7 imgs/s      & 1.6x \\
    \bottomrule
    \end{tabular}
  \label{tabEDG}
\end{table}

\begin{table}[htbp]
  \centering
  \caption{Inference speed on ImageNet-1K using 1 Jetson AGX Orin (60W) with a batch size of 128 in mixed precision using TensorRT to speed up CSWin-T-ELFATT.}
    \scriptsize
    \begin{tabular}{lrr}
    \toprule
    Running speed & \multicolumn{1}{r}{TensorRT} & \multicolumn{1}{r}{Pytorch} \\
    \midrule
    Running Time (s/batch)                               & \textbf{0.23} & 0.36     \\
    Inference Throughput (imgs/s)	                     &  \textbf{558} & 352      \\
    Total Inference Time (s)                             &   \textbf{90} & 142      \\
    \bottomrule
    \end{tabular}
  \label{tabEDG3}
\end{table}

\subsection{Speed Comparison on Edge GPUs}
Table \ref{tabEDG} shows the comparison of inference speed of different attention mechanisms obtained on ImageNet-1K using an NVIDIA Jetson AGX Orin/NVIDIA Jetson Nano GPU. In both FP32 and mixed precision, under the backbone of CSWin-T, ELFATT achieves the highest FPS. In mixed precision, ELFATT is significantly faster than all other attention mechanisms, as shown in Table \ref{tabEDG}. Even compared to edge-optimized EfficientViT-B2 \cite{efficientvit}, to achieve similar accuracy, ELFATT is significantly faster than EfficientViT-B2 in mixed precision on edge GPUs. Fig. \ref{figedgespeed} in Appendix shows the speed comparison of different attention mechanisms across various power modes from 5W to 60W by using NVIDIA Jetson Nano or NVIDIA Jetson AGX Orin. ELFATT consistently achieves 1.6x to 2.0x speedups compared to other attention mechanisms in various power modes from 5W to 60W. Compared to edge-optimized EfficientViT-B2, in mixed precision, ELFATT performs on par with EfficientViT-B2 in low-power conditions, such as the 5W mode. As power increases, the speed growth rate of ELFATT surpasses that of EfficientViT-B2, especially in high-power modes like 50W and 60W, where it demonstrates a significant speed advantage. Furthermore, in mixed precision, ELFATT outperforms all other attention mechanisms in terms of speed across all power modes tested, establishing itself as one of the fastest options in both high-performance and power-constrained scenarios. Table \ref{tabEDG3} shows that ELFATT can be further accelerated to achieve a 1.6x speedup by using TensorRT on a Jetson AGX Orin.

\begin{table}[htbp]
\scriptsize
  \centering
  \caption{The comparison of stable diffusion (SD) v1.5 (Diffusers v0.32.1) using different acceleration methods with different merging ratios in FP16 precision. Memory consumption for each method indicates how much GPU memory increases when the batch size is raised by 1. All methods are accelerated by FlashAttention-2. Note: The processing time of each method is obtained using a batch size of 48 on 1 NVIDIA Tesla A100 (40 GB) GPU.}
    \begin{tabular}{lrrrr}
    \toprule
    Method & \multicolumn{1}{r}{Ratio} & \multicolumn{1}{r}{FID $\downarrow$}  & \multicolumn{1}{r}{Time (s/img)} & \multicolumn{1}{r}{Memory (GB/img)}  \\
    \midrule
    Agent-SD          &0.1& 30.2       &  0.794       &   1.00     \\
    Agent-SD          &0.2& 30.2       &  0.777       &   1.00     \\
    Agent-SD          &0.3& 30.3       &  0.760       &   1.00     \\
    Agent-SD          &0.4& 30.6       &  0.746       &   0.70     \\
    Agent-SD          &0.5& 30.7       &  0.724       &   0.70     \\
    \midrule
    \rowcolor{gray!25}
    ELFATT-SD         &0.1& 30.2       &  0.772       &   0.80     \\
    \rowcolor{gray!25}
    ELFATT-SD         &0.2& 30.2       &  0.756       &   0.75     \\
    \rowcolor{gray!25}
    ELFATT-SD         &0.3& 30.4       &  0.739       &   0.75     \\
    \rowcolor{gray!25}
    ELFATT-SD         &0.4& 30.2       &  0.725       &   0.70     \\
    \rowcolor{gray!25}
    ELFATT-SD         &0.5& 30.4       &  0.707       &   0.70     \\
    \midrule
    SD (Baseline)     &---& 31.0       &  0.768       &   0.80     \\
    \midrule    
    ToMe-SD           &0.1& 30.7       &  0.788       &   0.80     \\
    ToMe-SD           &0.2& 30.5       &  0.765       &   0.80     \\
    ToMe-SD           &0.3& 30.6       &  0.748       &   0.80     \\
    ToMe-SD           &0.4& 30.6       &  0.732       &   0.80     \\
    ToMe-SD           &0.5& 30.9       &  0.714       &   0.70     \\
    \bottomrule
    \end{tabular}
  \label{tabSD}
\end{table}

\subsection{Stable Diffusion Acceleration Using ELFATT}
Table \ref{tabSD} shows the performance comparison of stable diffusion (SD) \cite{SD} v1.5 (Diffusers v0.32.1) by using different acceleration methods. Compared to Agent (Agent-SD) and ToMe (ToMe-SD) \cite{ToMe}, ELFATT (ELFATT-SD) provides lower FID \cite{FID} scores, consumes less GPU memory, and shows faster computation speed in the same merging ratio. From Table \ref{tabSD}, it can be seen that SD using FlashAttention-2 is able to achieve a comparable speed and memory usage compared to ToMe and Agent when the merging ratio is small. Only ELFATT is faster and has less memory usage than SD using FlashAttention-2 in most merging ratios. Fig. \ref{SDvis} in Appendix shows the visual comparison of images generated by different methods. ELFATT can help SD generate richer details and reduce unreality. Taking the prompt ``Komodo dragon'' as an example, ELFATT helps SD generate a ``Komodo dragon'' with a more realistic body proportion, which is more like a real monitor.

\section{Conclusion}
A novel efficient linear fast attention (ELFATT) mechanism is proposed for ViTs to achieve low memory I/O operations, linear computational complexity, and high performance at the same time. ELFATT offers 4-7x speedups over VaniATT in high-resolution vision tasks without losing performance. ELFATT is compatible with FlashAttention. Using FlashAttention-2, ELFATT still offers 2-3x speedups over VaniATT in high-resolution vision tasks without losing performance. ELFATT without using FlashAttention-2 is even faster than VaniATT using FlashAttention-2 on both low-resolution and high-resolution vision datasets, which shows great potential of ELFATT. Even in some non-vision tasks of LRA, ELFATT still achieves leading performance and offers 1.2-2.3x speedups over FlashAttention-2. What's more, on edge GPUs, ELFATT still offers 1.6x to 2.0x speedups compared to state-of-the-art attention mechanisms in various power modes from 5W to 60W. In addition, ELFATT can be used to enhance and accelerate diffusion tasks directly without training.

\section*{Acknowledgement}
This work is supported by the Hong Kong Innovation and Technology Commission (InnoHK Project CIMDA), the Institute of Digital Medicine, City University of Hong Kong (Project 9229503), and the Natural Science Special Project Research Fund of Guizhou University (No. 2025-06).

\bibliographystyle{ACM-Reference-Format}
\balance
\bibliography{references}

\newpage
\appendix

\section{Approximation Error Bound Analysis}
\label{AEB}
Eq. (\ref{elfatt}) of the main body of the paper can be derived as follows,

\begin{small}
\begin{equation*}
\begin{split}
&{\rm exp}\left(\textit{\textbf{Q}}\textit{\textbf{K}}^{\top}\right)\textit{\textbf{V}} = \left({\rm exp}\left(\bar{\textit{\textbf{Q}}}\bar{\textit{\textbf{K}}}^{\top}\right)\odot{\rm exp}\left(\tilde{\textit{\textbf{Q}}}\tilde{\textit{\textbf{K}}}^{\top}\right)\right)\textit{\textbf{V}}\\ 
&= {\rm exp}\left(\bar{\textit{\textbf{Q}}}\bar{\textit{\textbf{K}}}^{\top}\right)\odot{\rm exp}\left(\tilde{\textit{\textbf{Q}}}\tilde{\textit{\textbf{K}}}^{\top}\right)\left\lbrack\bar{\textit{\textbf{V}}}, \tilde{\textit{\textbf{V}}}\right\rbrack\\
& = \left\lbrack\left({\rm exp}\left(\bar{\textit{\textbf{Q}}}\bar{\textit{\textbf{K}}}^{\top}\right)\odot{\rm exp}\left(\tilde{\textit{\textbf{Q}}}\tilde{\textit{\textbf{K}}}^{\top}\right)\right)\bar{\textit{\textbf{V}}}, \quad\left({\rm exp}\left(\bar{\textit{\textbf{Q}}}\bar{\textit{\textbf{K}}}^{\top}\right)\odot{\rm exp}\left(\tilde{\textit{\textbf{Q}}}\tilde{\textit{\textbf{K}}}^{\top}\right)\right)\tilde{\textit{\textbf{V}}}\right\rbrack\\
& \approx \left\lbrack{\rm exp}\left(\bar{\textit{\textbf{Q}}}\bar{\textit{\textbf{K}}}^{\top}\right)\bar{\textit{\textbf{V}}}, {\rm exp}\left(\tilde{\textit{\textbf{Q}}}\tilde{\textit{\textbf{K}}}^{\top}\right)\tilde{\textit{\textbf{V}}}\right\rbrack \\
&\approx \left\lbrack{\rm exp}\left(\bar{\textit{\textbf{Q}}}\right){\rm exp}\left(\bar{\textit{\textbf{K}}}\right)^{\top}\bar{\textit{\textbf{V}}}, \left({\rm exp}\left(\tilde{\textit{\textbf{Q}}}\tilde{\textit{\textbf{K}}}^{\top}\right)\odot\textit{\textbf{Z}}\right)\tilde{\textit{\textbf{V}}}\right\rbrack,\\
\end{split}
\end{equation*}
\end{small}

\noindent where $\odot$ denotes the Hadamard product \cite{hadamard}, ${\rm exp}\left(\tilde{\textit{\textbf{Q}}}\tilde{\textit{\textbf{K}}}^{\top}\right)\odot\textit{\textbf{Z}}\tilde{\textit{\textbf{V}}}$ is equivalent to $g\left({\rm exp}\left(f(\tilde{\textit{\textbf{Q}}}) f(\tilde{\textit{\textbf{K}}})^{\top}\right)f(\tilde{\textit{\textbf{V}}})\right)$, and $\textit{\textbf{Z}} \in \mathbb{R}^{m\times m}$ is a matrix as follows,

\begin{small}
\begin{equation*}
\textit{\textbf{Z}} = \textbf{I}_b\otimes\textit{\textbf{U}}_{(m/b)},
\end{equation*}
\end{small}

\noindent with $\otimes$ denoting the Kronecker product \cite{hadamard} and $\textit{\textbf{U}}_{(m/b)} \in \mathbb{R}^{(m/b)\times (m/b)}$ being the all-ones matrix. It is obvious to see that

\begin{small}
\begin{equation*}
\begin{split}
    &\left\lbrack{\rm exp}\left(\bar{\textit{\textbf{Q}}}\right){\rm exp}\left(\bar{\textit{\textbf{K}}}\right)^{\top}\bar{\textit{\textbf{V}}}, \left({\rm exp}\left(\tilde{\textit{\textbf{Q}}}\tilde{\textit{\textbf{K}}}^{\top}\right)\odot\textit{\textbf{Z}}\right)\tilde{\textit{\textbf{V}}}\right\rbrack-{\rm exp}\left(\textit{\textbf{Q}}\textit{\textbf{K}}^{\top}\right)\textit{\textbf{V}}\\
    &=\left[\left({\rm exp}\left(\bar{\textit{\textbf{Q}}}\right){\rm exp}\left(\bar{\textit{\textbf{K}}}\right)^{\top} - {\rm exp}\left(\bar{\textit{\textbf{Q}}}\bar{\textit{\textbf{K}}}^{\top}\right)\odot{\rm exp}\left(\tilde{\textit{\textbf{Q}}}\tilde{\textit{\textbf{K}}}^{\top}\right)\right)\bar{\textit{\textbf{V}}},\right.\\
    &\quad\left.\left({\rm exp}\left(\tilde{\textit{\textbf{Q}}}\tilde{\textit{\textbf{K}}}^{\top}\right)\odot\textit{\textbf{Z}} - {\rm exp}\left(\bar{\textit{\textbf{Q}}}\bar{\textit{\textbf{K}}}^{\top}\right)\odot{\rm exp}\left(\tilde{\textit{\textbf{Q}}}\tilde{\textit{\textbf{K}}}^{\top}\right)\right)\tilde{\textit{\textbf{V}}}\right],
\end{split}
\end{equation*}
\end{small}

\noindent which implies that

\begin{small}
\begin{equation}
\begin{split}
    &\left\|\left\lbrack{\rm exp}\left(\bar{\textit{\textbf{Q}}}\right){\rm exp}\left(\bar{\textit{\textbf{K}}}\right)^{\top}\bar{\textit{\textbf{V}}}, \left({\rm exp}\left(\tilde{\textit{\textbf{Q}}}\tilde{\textit{\textbf{K}}}^{\top}\right)\odot\textit{\textbf{Z}}\right)\tilde{\textit{\textbf{V}}}\right\rbrack - {\rm exp}\left(\textit{\textbf{Q}}\textit{\textbf{K}}^{\top}\right)\textit{\textbf{V}}\right\|_\xi\\
    &\leq\left\|{\rm exp}\left(\bar{\textit{\textbf{Q}}}\right){\rm exp}\left(\bar{\textit{\textbf{K}}}\right)^{\top} - {\rm exp}\left(\bar{\textit{\textbf{Q}}}\bar{\textit{\textbf{K}}}^{\top}\right)\odot{\rm exp}\left(\tilde{\textit{\textbf{Q}}}\tilde{\textit{\textbf{K}}}^{\top}\right)\right\|_\xi\left\|\bar{\textit{\textbf{V}}}\right\|_\xi\\
    &+\left\|{\rm exp}\left(\tilde{\textit{\textbf{Q}}}\tilde{\textit{\textbf{K}}}^{\top}\right)\odot\textit{\textbf{Z}}-{\rm exp}\left(\bar{\textit{\textbf{Q}}}\bar{\textit{\textbf{K}}}^{\top}\right)\odot{\rm exp}\left(\tilde{\textit{\textbf{Q}}}\tilde{\textit{\textbf{K}}}^{\top}\right)\right\|_\xi\left\|\tilde{\textit{\textbf{V}}}\right\|_\xi,
\end{split}
\label{elfatterr2}
\end{equation}
\end{small}

\noindent where $\xi=2$ denotes the spectral norm and $\xi=F$ denotes the Frobenius norm. For given two matrices $\textit{\textbf{A}}\in\mathbb{R}^{m\times n}$ and $\textit{\textbf{B}}\in\mathbb{R}^{n\times p}$, it follows from \cite{golub2013matrix} [Section 2.3.1] that $\|\textit{\textbf{A}}\textit{\textbf{B}}\|_\xi\leq \|\textit{\textbf{A}}\|_\xi\|\textit{\textbf{B}}\|_\xi$.

For the second term in the right-hand side of Inequality (\ref{elfatterr2}), it is easy to see that

\begin{small}
\begin{equation*}
\begin{split}
    &\left\|{\rm exp}\left(\tilde{\textit{\textbf{Q}}}\tilde{\textit{\textbf{K}}}^{\top}\right)\odot\textit{\textbf{Z}} - {\rm exp}\left(\bar{\textit{\textbf{Q}}}\bar{\textit{\textbf{K}}}^{\top}\right)\odot{\rm exp}\left(\tilde{\textit{\textbf{Q}}}\tilde{\textit{\textbf{K}}}^{\top}\right)\right\|_\xi\left\|\tilde{\textit{\textbf{V}}}\right\|_\xi\\
    &\leq\left\|\textit{\textbf{Z}} - {\rm exp}\left(\bar{\textit{\textbf{Q}}}\bar{\textit{\textbf{K}}}^{\top}\right)\right\|_\xi\left\|{\rm exp}\left(\tilde{\textit{\textbf{Q}}}\tilde{\textit{\textbf{K}}}^{\top}\right)\right\|_\xi\left\|\tilde{\textit{\textbf{V}}}\right\|_\xi.
\end{split}
\end{equation*}
\end{small}

For the first term in the right-hand side of Inequality (\ref{elfatterr2}), we have the following theorem.

\begin{theorem}
    Let $\emph{\textit{\textbf{U}}}_m \in \mathbb{R}^{m \times m}$ be an all-ones matrix. For any two vectors $\bar{\textit{\textbf{q}}}\in\mathbb{R}^{c_1}$ from $\bar{\textit{\textbf{Q}}} \in \mathbb{R}^{m \times c_1}$ and $\bar{\textit{\textbf{k}}}\in\mathbb{R}^{c_1}$ from $\bar{\textit{\textbf{K}}} \in \mathbb{R}^{m \times c_1}$, let $\mathfrak{M}>0$ and $\mathscr{M}>0$ be the maximum and minimum of ${\rm exp}\left(\bar{\textit{\textbf{q}}}\bar{\textit{\textbf{k}}}^{\top}+0.5-\left(\bar{q}_{i}+\bar{k}_{i}\right)\right)$, respectively, and $\bar{q}_{i}$ and $\bar{k}_{i}$ are the elements at position $i$ of vectors $\bar{\textit{\textbf{q}}}$ and $\bar{\textit{\textbf{k}}}$, respectively. If $\left|\frac{c_1}{\mathscr{M}{\rm exp}\left(-0.5\right)} - 1\right|\geq \left|\frac{c_1}{\mathfrak{M}{\rm exp}\left(-0.5\right)} - 1\right|$, the following inequality holds,
    
    \begin{small}
    \begin{equation*}
       \begin{split}
           &\left\|{\rm exp}\left(\bar{\textit{\textbf{Q}}}\right){\rm exp}\left(\bar{\textit{\textbf{K}}}\right)^{\top} - {\rm exp}\left(\bar{\textit{\textbf{Q}}}\bar{\textit{\textbf{K}}}^{\top}\right)\odot{\rm exp}\left(\tilde{\textit{\textbf{Q}}}\tilde{\textit{\textbf{K}}}^{\top}\right)\right\|_\xi \\
           &\leq \frac{c_1}{\mathscr{M}{\rm exp}\left(-0.5\right)}\left\|{\rm exp}\left(\bar{\textit{\textbf{Q}}}\bar{\textit{\textbf{K}}}^{\top}\right)\right\|_\xi\left\|\emph{\textit{\textbf{U}}}_m - {\rm exp}\left(\tilde{\textit{\textbf{Q}}}\tilde{\textit{\textbf{K}}}^{\top}\right)\right\|_\xi  \\
           &+\left|\frac{c_1}{\mathscr{M}{\rm exp}\left(-0.5\right)}-1\right|\left\|{\rm exp}\left(\bar{\textit{\textbf{Q}}}\bar{\textit{\textbf{K}}}^{\top}\right)\right\|_\xi\left\|{\rm exp}\left(\tilde{\textit{\textbf{Q}}}\tilde{\textit{\textbf{K}}}^{\top}\right)\right\|_\xi.\\
       \end{split}
    \end{equation*}
    \end{small}
    
    If $\left|\frac{c_1}{\mathscr{M}{\rm exp}\left(-0.5\right)} - 1\right| < \left|\frac{c_1}{\mathfrak{M}{\rm exp}\left(-0.5\right)} - 1\right|$, the following inequality holds,
    
    \begin{small}
    \begin{equation*}
       \begin{split}
           &\left\|{\rm exp}\left(\bar{\textit{\textbf{Q}}}\right){\rm exp}\left(\bar{\textit{\textbf{K}}}\right)^{\top} - {\rm exp}\left(\bar{\textit{\textbf{Q}}}\bar{\textit{\textbf{K}}}^{\top}\right)\odot{\rm exp}\left(\tilde{\textit{\textbf{Q}}}\tilde{\textit{\textbf{K}}}^{\top}\right)\right\|_\xi \\
           &\leq 
           \frac{c_1}{\mathscr{M}{\rm exp}\left(-0.5\right)}\left\|{\rm exp}\left(\bar{\textit{\textbf{Q}}}\bar{\textit{\textbf{K}}}^{\top}\right)\right\|_\xi\left\|\emph{\textit{\textbf{U}}}_m - {\rm exp}\left(\tilde{\textit{\textbf{Q}}}\tilde{\textit{\textbf{K}}}^{\top}\right)\right\|_\xi  \\
           &+\left|\frac{c_1}{\mathfrak{M}{\rm exp}\left(-0.5\right)}-1\right|\left\|{\rm exp}\left(\bar{\textit{\textbf{Q}}}\bar{\textit{\textbf{K}}}^{\top}\right)\right\|_\xi\left\|{\rm exp}\left(\tilde{\textit{\textbf{Q}}}\tilde{\textit{\textbf{K}}}^{\top}\right)\right\|_\xi.\\
       \end{split}
    \end{equation*}
    \end{small}
    \label{Theory1}
\end{theorem}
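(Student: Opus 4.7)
The plan is to reduce the matrix inequality to a per-entry multiplicative approximation of $\exp(\bar{\textit{\textbf{Q}}})\exp(\bar{\textit{\textbf{K}}})^{\top}$ by $\exp(\bar{\textit{\textbf{Q}}}\bar{\textit{\textbf{K}}}^{\top})$, and then to perform an algebraic decomposition that isolates three pieces whose norms can be bounded term-by-term. To lighten notation, write $A=\exp(\bar{\textit{\textbf{Q}}})\exp(\bar{\textit{\textbf{K}}})^{\top}$, $B=\exp(\bar{\textit{\textbf{Q}}}\bar{\textit{\textbf{K}}}^{\top})$, $C=\exp(\tilde{\textit{\textbf{Q}}}\tilde{\textit{\textbf{K}}}^{\top})$, $\alpha=c_1/(\mathscr{M}\exp(-0.5))$ and $\beta=c_1/(\mathfrak{M}\exp(-0.5))$; since $\mathscr{M}\le\mathfrak{M}$, we have $\beta\le\alpha$.

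First, I would expand $[A]_{ij}=\sum_{l=1}^{c_1}\exp(\bar{q}_{il}+\bar{k}_{jl})$ for the rows $\bar{\textit{\textbf{q}}}_i$ and $\bar{\textit{\textbf{k}}}_j$ of $\bar{\textit{\textbf{Q}}}$ and $\bar{\textit{\textbf{K}}}$, and use the defining bound $\mathscr{M}\le\exp(\bar{\textit{\textbf{q}}}_i\bar{\textit{\textbf{k}}}_j^{\top}+0.5-(\bar{q}_{il}+\bar{k}_{jl}))\le\mathfrak{M}$ to write $\exp(\bar{q}_{il}+\bar{k}_{jl})=\exp(\bar{\textit{\textbf{q}}}_i\bar{\textit{\textbf{k}}}_j^{\top}+0.5)/\theta_{ijl}$ with $\theta_{ijl}\in[\mathscr{M},\mathfrak{M}]$. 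Summing over $l$ and recognizing $[B]_{ij}=\exp(\bar{\textit{\textbf{q}}}_i\bar{\textit{\textbf{k}}}_j^{\top})$ yields $[A]_{ij}=\lambda_{ij}[B]_{ij}$ for some $\lambda_{ij}\in[\beta,\alpha]$, or equivalently $A=\lambda\odot B$ where $\lambda$ is the matrix of scalars $\lambda_{ij}$.

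Next, I would use the identity $\alpha\textit{\textbf{U}}_m-C=\alpha(\textit{\textbf{U}}_m-C)+(\alpha-1)C$ to obtain $\alpha B-B\odot C=\alpha B\odot(\textit{\textbf{U}}_m-C)+(\alpha-1)B\odot C$, and then rewrite $A-B\odot C=(\alpha B-B\odot C)+(A-\alpha B)$ to produce the decomposition $A-B\odot C=\alpha B\odot(\textit{\textbf{U}}_m-C)+(\alpha-1)B\odot C+(\lambda-\alpha\textit{\textbf{U}}_m)\odot B$. The triangle inequality, together with $\|XY\|_\xi\le\|X\|_\xi\|Y\|_\xi$ and an analogous Hadamard-product submultiplicativity (entrywise estimation for $\xi=F$ and a Schur-type inequality for $\xi=2$), bounds the first two pieces by $\alpha\|B\|_\xi\|\textit{\textbf{U}}_m-C\|_\xi+|\alpha-1|\|B\|_\xi\|C\|_\xi$, which matches the stated right-hand side of the first inequality. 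In the complementary case $|\alpha-1|<|\beta-1|$, I would re-center the decomposition at $\beta$ instead, obtaining $\beta\|B\|_\xi\|\textit{\textbf{U}}_m-C\|_\xi+|\beta-1|\|B\|_\xi\|C\|_\xi$ plus a non-negative residual, and then use $\beta\le\alpha$ to upper-bound the first coefficient, which recovers the stated second form.

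The main obstacle is the residual term $(\lambda-\alpha\textit{\textbf{U}}_m)\odot B$ (respectively $(\lambda-\beta\textit{\textbf{U}}_m)\odot B$): its entries have absolute value bounded by $(\alpha-\beta)B_{ij}$, and absorbing this cleanly into $\max(|\alpha-1|,|\beta-1|)\|B\|_\xi\|C\|_\xi$ is precisely what forces the two-case split in the theorem, since one must pick the reference scalar whose deviation from $1$ dominates $\alpha-\beta$ in the relevant regime. A secondary technicality is justifying the Hadamard-product submultiplicativity for the spectral norm $\xi=2$, which is not as immediate as for the Frobenius norm.
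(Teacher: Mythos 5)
Your opening step is correct and is essentially the content of the paper's Lemma~\ref{Lemma1}/Corollary~\ref{Corollary1}: writing $\textit{\textbf{A}}=\exp(\bar{\textit{\textbf{Q}}})\exp(\bar{\textit{\textbf{K}}})^{\top}$, $\textit{\textbf{B}}=\exp(\bar{\textit{\textbf{Q}}}\bar{\textit{\textbf{K}}}^{\top})$, $\textit{\textbf{C}}=\exp(\tilde{\textit{\textbf{Q}}}\tilde{\textit{\textbf{K}}}^{\top})$, your entrywise relation $\textit{\textbf{A}}_{ij}=\lambda_{ij}\textit{\textbf{B}}_{ij}$ with $\lambda_{ij}\in[\beta,\alpha]$ is valid and is in fact a sharper, entrywise form of what the paper uses. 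The gap is in your decomposition. You split
\begin{equation*}
\textit{\textbf{A}}-\textit{\textbf{B}}\odot\textit{\textbf{C}}=\alpha\,\textit{\textbf{B}}\odot(\textit{\textbf{U}}_m-\textit{\textbf{C}})+(\alpha-1)\,\textit{\textbf{B}}\odot\textit{\textbf{C}}+(\lambda-\alpha\textit{\textbf{U}}_m)\odot\textit{\textbf{B}},
\end{equation*}
and the theorem's right-hand side accounts only for the first two pieces. The residual $(\lambda-\alpha\textit{\textbf{U}}_m)\odot\textit{\textbf{B}}$ can have norm as large as $(\alpha-\beta)\|\textit{\textbf{B}}\|_\xi$, and in the case $|\alpha-1|\geq|\beta-1|$ the stated bound already spends its entire second term on $|\alpha-1|\,\|\textit{\textbf{B}}\|_\xi\|\textit{\textbf{C}}\|_\xi$, leaving zero slack to absorb it; your suggestion that this absorption ``is precisely what forces the two-case split'' is not a proof and cannot be made one, since nothing ties $(\alpha-\beta)\|\textit{\textbf{B}}\|_\xi$ to $\|\textit{\textbf{C}}\|_\xi$. (The two cases in the theorem exist only to spell out which of $|\alpha-1|$, $|\beta-1|$ realizes the maximum; they are not an absorption device.)

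The fix is to anchor the first piece at $\textit{\textbf{A}}$ itself rather than at $\alpha\textit{\textbf{B}}$, which is what the paper does: add and subtract $\textit{\textbf{A}}\odot\textit{\textbf{C}}$ to get
\begin{equation*}
\textit{\textbf{A}}-\textit{\textbf{B}}\odot\textit{\textbf{C}}=\textit{\textbf{A}}\odot(\textit{\textbf{U}}_m-\textit{\textbf{C}})+(\textit{\textbf{A}}-\textit{\textbf{B}})\odot\textit{\textbf{C}},
\end{equation*}
then bound $\|\textit{\textbf{A}}\|_\xi\leq\alpha\|\textit{\textbf{B}}\|_\xi$ and, from your own entrywise relation, $|\textit{\textbf{A}}_{ij}-\textit{\textbf{B}}_{ij}|=|\lambda_{ij}-1|\textit{\textbf{B}}_{ij}\leq\max(|\alpha-1|,|\beta-1|)\textit{\textbf{B}}_{ij}$, hence $\|\textit{\textbf{A}}-\textit{\textbf{B}}\|_\xi\leq\max(|\alpha-1|,|\beta-1|)\|\textit{\textbf{B}}\|_\xi$. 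This two-term split has no residual and yields exactly the stated bounds. Your remark on Hadamard submultiplicativity is well taken: the step $\|\textit{\textbf{X}}\odot\textit{\textbf{Y}}\|_\xi\leq\|\textit{\textbf{X}}\|_\xi\|\textit{\textbf{Y}}\|_\xi$ does need the entrywise argument for $\xi=F$ and a Schur-type inequality for $\xi=2$ (the paper cites only ordinary-product submultiplicativity), and the spectral-norm case of $\|\textit{\textbf{A}}-\textit{\textbf{B}}\|_2\leq\max(|\alpha-1|,|\beta-1|)\|\textit{\textbf{B}}\|_2$ additionally uses monotonicity of the spectral norm under entrywise domination by a nonnegative matrix.
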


Before proving Theorem \ref{Theory1}, we introduce the following lemma (see \cite{CURSA}).
\begin{lemma}
    For any two vectors $\bar{\textit{\textbf{q}}}\in\mathbb{R}^{c_1}$ from $\bar{\textit{\textbf{Q}}}\in\mathbb{R}^{m\times c_1}$ and $\bar{\textit{\textbf{k}}}\in\mathbb{R}^{c_1}$ from $\bar{\textit{\textbf{K}}}\in\mathbb{R}^{m\times c_1}$, let 
     
     \begin{small}
    \begin{equation*}
        \begin{split}
        D_{\bar{\textit{\textbf{q}}},\bar{\textit{\textbf{k}}}}&=\max_{i=1,2,\dots,c_1}{\rm exp}\left(\bar{\textit{\textbf{q}}}\bar{\textit{\textbf{k}}}^{\top}+0.5-\left(\bar{q}_{i}+\bar{k}_{i}\right)\right)>0,\\
        d_{\bar{\textit{\textbf{q}}},\bar{\textit{\textbf{k}}}}&=\min_{i=1,2,\dots,c_1}{\rm exp}\left(\bar{\textit{\textbf{q}}}\bar{\textit{\textbf{k}}}^{\top}+0.5-\left(\bar{q}_{i}+\bar{k}_{i}\right)\right)>0,
        \end{split}
    \end{equation*}
    \end{small}
    
    \noindent where $\bar{q}_{i}$ and $\bar{k}_{i}$ are the elements at position $i$ of vectors $\bar{\textit{\textbf{q}}}$ and $\bar{\textit{\textbf{k}}}$, respectively. The following inequalities hold,
    
    \begin{small}
    \begin{equation*}
        \begin{split}
            \frac{{\rm exp}\left(\bar{\textit{\textbf{q}}}\right){\rm exp}\left(\bar{\textit{\textbf{k}}}\right)^{\top}}{{\rm exp}\left( \bar{\textit{\textbf{q}}}\bar{\textit{\textbf{k}}}^{\top}\right)} &\leq  \frac{c_1}{d_{\bar{\textit{\textbf{q}}},\bar{\textit{\textbf{k}}}}{\rm exp}\left(-0.5\right)},\\
            \frac{{\rm exp}\left(\bar{\textit{\textbf{q}}}\right){\rm exp}\left(\bar{\textit{\textbf{k}}}\right)^{\top}}{{\rm exp}\left( \bar{\textit{\textbf{q}}}\bar{\textit{\textbf{k}}}^{\top}\right)} &\geq \frac{c_1}{D_{\bar{\textit{\textbf{q}}},\bar{\textit{\textbf{k}}}}{\rm exp}\left(-0.5\right)}.
        \end{split}
    \end{equation*}
    \end{small}
    \label{Lemma1}
\end{lemma}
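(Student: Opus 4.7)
The plan is to establish the scalar sandwich inequalities of Lemma \ref{Lemma1} by a direct algebraic reduction to the sum of $c_1$ positive terms of the form $1/w_i$, where $w_i := {\rm exp}\left(\bar{\textit{\textbf{q}}}\bar{\textit{\textbf{k}}}^{\top}+0.5-(\bar{q}_{i}+\bar{k}_{i})\right)$, and then invoking the assumption $d_{\bar{\textit{\textbf{q}}},\bar{\textit{\textbf{k}}}}\leq w_i \leq D_{\bar{\textit{\textbf{q}}},\bar{\textit{\textbf{k}}}}$ that is built into the definitions of $D_{\bar{\textit{\textbf{q}}},\bar{\textit{\textbf{k}}}}$ and $d_{\bar{\textit{\textbf{q}}},\bar{\textit{\textbf{k}}}}$.

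First I would expand the numerator. Because the paper convention treats ${\rm exp}(\cdot)$ as an elementwise exponential on vectors, ${\rm exp}(\bar{\textit{\textbf{q}}}){\rm exp}(\bar{\textit{\textbf{k}}})^{\top}=\sum_{i=1}^{c_1}{\rm exp}(\bar{q}_i){\rm exp}(\bar{k}_i)=\sum_{i=1}^{c_1}{\rm exp}(\bar{q}_i+\bar{k}_i)$, while the denominator ${\rm exp}(\bar{\textit{\textbf{q}}}\bar{\textit{\textbf{k}}}^{\top})$ is a single scalar exponential of the inner product. Dividing termwise and pulling the denominator inside, the ratio becomes $\sum_{i=1}^{c_1}{\rm exp}(\bar{q}_i+\bar{k}_i-\bar{\textit{\textbf{q}}}\bar{\textit{\textbf{k}}}^{\top})$.

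Next I would introduce the shift $+0.5$ to match the exponent pattern that defines $D_{\bar{\textit{\textbf{q}}},\bar{\textit{\textbf{k}}}}$ and $d_{\bar{\textit{\textbf{q}}},\bar{\textit{\textbf{k}}}}$. Writing $\bar{q}_i+\bar{k}_i-\bar{\textit{\textbf{q}}}\bar{\textit{\textbf{k}}}^{\top}=0.5-\bigl(\bar{\textit{\textbf{q}}}\bar{\textit{\textbf{k}}}^{\top}+0.5-(\bar{q}_i+\bar{k}_i)\bigr)$ and using ${\rm exp}(0.5)=1/{\rm exp}(-0.5)$, each summand equals $\tfrac{1}{{\rm exp}(-0.5)}\cdot\tfrac{1}{w_i}$. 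Hence
\begin{equation*}
\frac{{\rm exp}(\bar{\textit{\textbf{q}}}){\rm exp}(\bar{\textit{\textbf{k}}})^{\top}}{{\rm exp}(\bar{\textit{\textbf{q}}}\bar{\textit{\textbf{k}}}^{\top})}=\frac{1}{{\rm exp}(-0.5)}\sum_{i=1}^{c_1}\frac{1}{w_i}.
\end{equation*}

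Finally, since each $w_i>0$ satisfies $d_{\bar{\textit{\textbf{q}}},\bar{\textit{\textbf{k}}}}\leq w_i\leq D_{\bar{\textit{\textbf{q}}},\bar{\textit{\textbf{k}}}}$ by definition, the reciprocals satisfy $1/D_{\bar{\textit{\textbf{q}}},\bar{\textit{\textbf{k}}}}\leq 1/w_i\leq 1/d_{\bar{\textit{\textbf{q}}},\bar{\textit{\textbf{k}}}}$. Summing over $i=1,\dots,c_1$ gives $c_1/D_{\bar{\textit{\textbf{q}}},\bar{\textit{\textbf{k}}}}\leq \sum_i 1/w_i\leq c_1/d_{\bar{\textit{\textbf{q}}},\bar{\textit{\textbf{k}}}}$, and multiplying by the positive constant $1/{\rm exp}(-0.5)$ yields exactly the two inequalities in the statement. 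This proof is essentially algebraic; there is no genuine obstacle beyond correctly identifying the shift by $0.5$ that converts the exponent difference into the template used in the definition of $D_{\bar{\textit{\textbf{q}}},\bar{\textit{\textbf{k}}}}$ and $d_{\bar{\textit{\textbf{q}}},\bar{\textit{\textbf{k}}}}$. The only point that deserves a sentence of justification is the elementwise interpretation of ${\rm exp}(\bar{\textit{\textbf{q}}}){\rm exp}(\bar{\textit{\textbf{k}}})^{\top}$ versus the scalar interpretation of ${\rm exp}(\bar{\textit{\textbf{q}}}\bar{\textit{\textbf{k}}}^{\top})$, which follows the notational convention already established by Eq.~(\ref{effatt}) earlier in the paper.
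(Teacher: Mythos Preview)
Your proof is correct. The paper does not actually prove Lemma~\ref{Lemma1}; it simply cites it from \cite{CURSA} with the parenthetical ``(see \cite{CURSA})'' and then uses it as a black box to derive Corollary~\ref{Corollary1} and Theorem~\ref{Theory1}. Your direct argument---expanding ${\rm exp}(\bar{\textit{\textbf{q}}}){\rm exp}(\bar{\textit{\textbf{k}}})^{\top}=\sum_{i}{\rm exp}(\bar q_i+\bar k_i)$, recognizing each summand as $\tfrac{1}{{\rm exp}(-0.5)}\cdot\tfrac{1}{w_i}$ with $w_i$ exactly the quantity whose max and min define $D_{\bar{\textit{\textbf{q}}},\bar{\textit{\textbf{k}}}}$ and $d_{\bar{\textit{\textbf{q}}},\bar{\textit{\textbf{k}}}}$, and then sandwiching the sum---is the natural self-contained derivation and fills the gap the paper leaves to the external reference.
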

The following corollary is easily obtained from Lemma \ref{Lemma1}.
\begin{corollary}
    For two matrices $\bar{\textit{\textbf{Q}}}\in\mathbb{R}^{m\times c_1}$ and $\bar{\textit{\textbf{K}}}\in\mathbb{R}^{m\times c_1}$, let 

    \begin{small}
    \begin{equation*}
        \begin{split}
        \mathfrak{M}&=\max_{i_1,i_2=1,2,\dots,m\atop j=1,2,\dots,c_1}{\rm exp}\left(\bar{\textit{\textbf{q}}}_{i_1}\bar{\textit{\textbf{k}}}_{i_2}^{\top}+0.5-\left(\bar{q}_{i_1j}+\bar{k}_{i_2j}\right)\right)>0,\\
        \mathscr{M}&=\min_{i_1,i_2=1,2,\dots,m\atop j=1,2,\dots,c_1}{\rm exp}\left(\bar{\textit{\textbf{q}}}_{i_1}\bar{\textit{\textbf{k}}}_{i_2}^{\top}+0.5-\left(\bar{q}_{i_1j}+\bar{k}_{i_2j}\right)\right)>0,
        \end{split}
    \end{equation*}
    \end{small}
    
    \noindent where $\bar{q}_{i_1j}$ and $\bar{k}_{i_2j}$ are the elements at position $j$ of the vector $\bar{\textit{\textbf{q}}}_{i_1}=\bar{\textit{\textbf{Q}}}(i_1,:)$ and the vector $\bar{\textit{\textbf{k}}}_{i_2}=\bar{\textit{\textbf{K}}}(i_2,:)$, respectively. The following inequalities hold,
    
    \begin{small}
    \begin{equation*}
        \begin{split}
            \frac{\left\|{\rm exp}\left(\bar{\textit{\textbf{Q}}}\right){\rm exp}\left(\bar{\textit{\textbf{K}}}\right)^{\top}\right\|_\xi}{\left\|{\rm exp}\left(\bar{\textit{\textbf{Q}}}\bar{\textit{\textbf{K}}}^{\top}\right)\right\|_\xi}&\leq\frac{c_1}{\mathscr{M}{\rm exp}\left(-0.5\right)},\\
            \frac{\left\|{\rm exp}\left(\bar{\textit{\textbf{Q}}}\right){\rm exp}\left(\bar{\textit{\textbf{K}}}\right)^{\top}\right\|_\xi}{\left\|{\rm exp}\left(\bar{\textit{\textbf{Q}}}\bar{\textit{\textbf{K}}}^{\top}\right)\right\|_\xi}&\geq\frac{c_1}{\mathfrak{M}{\rm exp}\left(-0.5\right)}.
        \end{split}
    \end{equation*}
    \end{small}
    \label{Corollary1}
\end{corollary}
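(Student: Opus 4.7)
The plan is to derive this corollary from Lemma \ref{Lemma1} in two stages: first establish a uniform entry-wise ratio bound between the two matrices ${\rm exp}(\bar{\textit{\textbf{Q}}}){\rm exp}(\bar{\textit{\textbf{K}}})^{\top}$ and ${\rm exp}(\bar{\textit{\textbf{Q}}}\bar{\textit{\textbf{K}}}^{\top})$ by applying Lemma \ref{Lemma1} to every row--column pair, and then lift that entry-wise bound to the matrix norm $\|\cdot\|_\xi$ by invoking monotonicity of $\|\cdot\|_\xi$ under entry-wise domination on non-negative matrices.

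First I would observe that, for each pair $(i_1,i_2)\in\{1,\dots,m\}^2$, the $(i_1,i_2)$-entry of ${\rm exp}(\bar{\textit{\textbf{Q}}}){\rm exp}(\bar{\textit{\textbf{K}}})^{\top}$ equals ${\rm exp}(\bar{\textit{\textbf{q}}}_{i_1}){\rm exp}(\bar{\textit{\textbf{k}}}_{i_2})^{\top}$ and the corresponding entry of ${\rm exp}(\bar{\textit{\textbf{Q}}}\bar{\textit{\textbf{K}}}^{\top})$ equals ${\rm exp}(\bar{\textit{\textbf{q}}}_{i_1}\bar{\textit{\textbf{k}}}_{i_2}^{\top})$, where $\bar{\textit{\textbf{q}}}_{i_1}=\bar{\textit{\textbf{Q}}}(i_1,:)$ and $\bar{\textit{\textbf{k}}}_{i_2}=\bar{\textit{\textbf{K}}}(i_2,:)$. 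Applying Lemma \ref{Lemma1} to this single pair of vectors yields the sandwich
\[
\frac{c_1}{D_{\bar{\textit{\textbf{q}}}_{i_1},\bar{\textit{\textbf{k}}}_{i_2}}{\rm exp}(-0.5)} \;\leq\; \frac{[{\rm exp}(\bar{\textit{\textbf{Q}}}){\rm exp}(\bar{\textit{\textbf{K}}})^{\top}]_{i_1,i_2}}{[{\rm exp}(\bar{\textit{\textbf{Q}}}\bar{\textit{\textbf{K}}}^{\top})]_{i_1,i_2}} \;\leq\; \frac{c_1}{d_{\bar{\textit{\textbf{q}}}_{i_1},\bar{\textit{\textbf{k}}}_{i_2}}{\rm exp}(-0.5)}.
\]
By the very definitions of $\mathscr{M}$ and $\mathfrak{M}$ in the statement, one has $d_{\bar{\textit{\textbf{q}}}_{i_1},\bar{\textit{\textbf{k}}}_{i_2}}\geq\mathscr{M}$ and $D_{\bar{\textit{\textbf{q}}}_{i_1},\bar{\textit{\textbf{k}}}_{i_2}}\leq\mathfrak{M}$ uniformly in $(i_1,i_2)$, so this promotes to the uniform entry-wise inequality
\[
\frac{c_1}{\mathfrak{M}{\rm exp}(-0.5)}\,[{\rm exp}(\bar{\textit{\textbf{Q}}}\bar{\textit{\textbf{K}}}^{\top})]_{i_1,i_2} \;\leq\; [{\rm exp}(\bar{\textit{\textbf{Q}}}){\rm exp}(\bar{\textit{\textbf{K}}})^{\top}]_{i_1,i_2} \;\leq\; \frac{c_1}{\mathscr{M}{\rm exp}(-0.5)}\,[{\rm exp}(\bar{\textit{\textbf{Q}}}\bar{\textit{\textbf{K}}}^{\top})]_{i_1,i_2},
\]
with both matrices having strictly positive entries.

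Next I would lift this entry-wise sandwich to the norm bound claimed in the corollary. For $\xi = F$, this is immediate: squaring each inequality and summing over $(i_1,i_2)$ gives $\|{\rm exp}(\bar{\textit{\textbf{Q}}}){\rm exp}(\bar{\textit{\textbf{K}}})^{\top}\|_F^2 \leq (c_1/(\mathscr{M}{\rm exp}(-0.5)))^2 \|{\rm exp}(\bar{\textit{\textbf{Q}}}\bar{\textit{\textbf{K}}}^{\top})\|_F^2$ and the analogous lower bound. For $\xi = 2$, I would invoke the standard monotonicity result that if $\textit{\textbf{0}}\leq\textit{\textbf{A}}\leq\textit{\textbf{B}}$ holds entry-wise between two non-negative matrices, then $\|\textit{\textbf{A}}\|_2\leq\|\textit{\textbf{B}}\|_2$. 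A brief justification is that, for a non-negative matrix $\textit{\textbf{A}}$, the variational identity $\|\textit{\textbf{A}}\|_2=\max_{\|\textit{\textbf{x}}\|=\|\textit{\textbf{y}}\|=1}\textit{\textbf{x}}^{\top}\textit{\textbf{A}}\textit{\textbf{y}}$ admits a maximiser with non-negative components (a Perron--Frobenius argument applied to the symmetric dilation of $\textit{\textbf{A}}$), and for such $(\textit{\textbf{x}},\textit{\textbf{y}})$ we have $\textit{\textbf{x}}^{\top}\textit{\textbf{A}}\textit{\textbf{y}}\leq\textit{\textbf{x}}^{\top}\textit{\textbf{B}}\textit{\textbf{y}}\leq\|\textit{\textbf{B}}\|_2$. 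Applying this fact first with $\textit{\textbf{A}}={\rm exp}(\bar{\textit{\textbf{Q}}}){\rm exp}(\bar{\textit{\textbf{K}}})^{\top}$ and $\textit{\textbf{B}}=(c_1/(\mathscr{M}{\rm exp}(-0.5))){\rm exp}(\bar{\textit{\textbf{Q}}}\bar{\textit{\textbf{K}}}^{\top})$, and then with the roles reversed to handle the lower bound, yields the two claimed norm inequalities.

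The main obstacle is essentially only the spectral-norm lifting: the reduction to Lemma \ref{Lemma1} and the Frobenius-norm case are direct bookkeeping, but the $\xi=2$ case requires the monotonicity of $\|\cdot\|_2$ under entry-wise domination of non-negative matrices, which, although classical, is the one ingredient not supplied directly by Lemma \ref{Lemma1} and must either be cited or briefly justified as above.
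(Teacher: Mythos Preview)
Your proposal is correct and follows the natural route the paper implies: the paper gives no explicit proof, merely stating that the corollary ``is easily obtained from Lemma~\ref{Lemma1},'' and your entry-wise application of the lemma followed by a lift to $\|\cdot\|_\xi$ is precisely the intended derivation. Your added justification for the $\xi=2$ case (monotonicity of the spectral norm under entry-wise domination of non-negative matrices) is a genuine detail the paper omits, and your Perron--Frobenius sketch is sound.
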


\begin{proof}
For any matrices $\bar{\textit{\textbf{Q}}}\in\mathbb{R}^{m \times c_1}$, $\tilde{\textit{\textbf{Q}}}\in\mathbb{R}^{m \times c_2}$, $\bar{\textit{\textbf{K}}}\in\mathbb{R}^{m \times c_1}$, and $\tilde{\textit{\textbf{K}}}\in\mathbb{R}^{m \times c_2}$, we have

\begin{small}
\begin{equation*}
\begin{split}
&\left\|{\rm exp}\left(\bar{\textit{\textbf{Q}}}\right){\rm exp}\left(\bar{\textit{\textbf{K}}}\right)^{\top} - {\rm exp}\left(\bar{\textit{\textbf{Q}}}\bar{\textit{\textbf{K}}}^{\top}\right)\odot{\rm exp}\left(\tilde{\textit{\textbf{Q}}}\tilde{\textit{\textbf{K}}}^{\top}\right)\right\|_\xi \\
&= \left\|{\rm exp}\left(\bar{\textit{\textbf{Q}}}\right)
{\rm exp}\left(\bar{\textit{\textbf{K}}}\right)^{\top} - {\rm exp}\left(\bar{\textit{\textbf{Q}}}\right){\rm exp}\left(\bar{\textit{\textbf{K}}}\right)^{\top}\odot{\rm exp}\left(\tilde{\textit{\textbf{Q}}}\tilde{\textit{\textbf{K}}}^{\top}\right) +\right.\\
& \left.{\rm exp}\left(\bar{\textit{\textbf{Q}}}\right)
{\rm exp}\left(\bar{\textit{\textbf{K}}}\right)^{\top}\odot{\rm exp}\left(\tilde{\textit{\textbf{Q}}}\tilde{\textit{\textbf{K}}}^{\top}\right) - {\rm exp}\left(\bar{\textit{\textbf{Q}}}\bar{\textit{\textbf{K}}}^{\top}\right)\odot{\rm exp}\left(\tilde{\textit{\textbf{Q}}}\tilde{\textit{\textbf{K}}}^{\top}\right)\right\|_\xi \\ 
&\leq\left\|{\rm exp}\left(\bar{\textit{\textbf{Q}}}\right){\rm exp}\left(\bar{\textit{\textbf{K}}}\right)^{\top} - {\rm exp}\left(\bar{\textit{\textbf{Q}}}\right){\rm exp}\left(\bar{\textit{\textbf{K}}}\right)^{\top}\odot{\rm exp}\left(\tilde{\textit{\textbf{Q}}}\tilde{\textit{\textbf{K}}}^{\top}\right)\right\|_\xi +\\
&\left\|{\rm exp}\left(\bar{\textit{\textbf{Q}}}\right){\rm exp}\left(\bar{\textit{\textbf{K}}}\right)^{\top}\odot{\rm exp}\left(\tilde{\textit{\textbf{Q}}}\tilde{\textit{\textbf{K}}}^{\top}\right) - {\rm exp}\left(\bar{\textit{\textbf{Q}}}\bar{\textit{\textbf{K}}}^{\top}\right)\odot{\rm exp}\left(\tilde{\textit{\textbf{Q}}}\tilde{\textit{\textbf{K}}}^{\top}\right)\right\|_\xi\\
& \leq\left\|{\rm exp}\left(\bar{\textit{\textbf{Q}}}\right){\rm exp}\left(\bar{\textit{\textbf{K}}}\right)^{\top}\right\|_\xi\left\|\textit{\textbf{U}}_m - {\rm exp}\left(\tilde{\textit{\textbf{Q}}}\tilde{\textit{\textbf{K}}}^{\top}\right)\right\|_\xi \\
& +\left\|{\rm exp}\left(\bar{\textit{\textbf{Q}}}\right){\rm exp}\left(\bar{\textit{\textbf{K}}}\right)^{\top}- {\rm exp}\left(\bar{\textit{\textbf{Q}}}\bar{\textit{\textbf{K}}}^{\top}\right)\right\|_\xi\left\|{\rm exp}\left(\tilde{\textit{\textbf{Q}}}\tilde{\textit{\textbf{K}}}^{\top}\right)\right\|_\xi.\\
\end{split}
\end{equation*}
\end{small}

Following from Corollary \ref{Corollary1}, if $\left|\frac{c_1}{\mathscr{M}{\rm exp}\left(-0.5\right)} - 1\right|\geq \left|\frac{c_1}{\mathfrak{M}{\rm exp}\left(-0.5\right)} - 1\right|$, we have

\begin{small}
\begin{equation*}
\begin{split}
&\left\|{\rm exp}\left(\bar{\textit{\textbf{Q}}}\right){\rm exp}\left(\bar{\textit{\textbf{K}}}\right)^{\top} - {\rm exp}\left(\bar{\textit{\textbf{Q}}}\bar{\textit{\textbf{K}}}^{\top}\right)\odot{\rm exp}\left(\tilde{\textit{\textbf{Q}}}\tilde{\textit{\textbf{K}}}^{\top}\right)\right\|_\xi \\
           &\leq \frac{c_1}{\mathscr{M}{\rm exp}\left(-0.5\right)}\left\|{\rm exp}\left(\bar{\textit{\textbf{Q}}}\bar{\textit{\textbf{K}}}^{\top}\right)\right\|_\xi\left\|\emph{\textit{\textbf{U}}}_m - {\rm exp}\left(\tilde{\textit{\textbf{Q}}}\tilde{\textit{\textbf{K}}}^{\top}\right)\right\|_\xi  \\
           &+\left|\frac{c_1}{\mathscr{M}{\rm exp}\left(-0.5\right)}-1\right|\left\|{\rm exp}\left(\bar{\textit{\textbf{Q}}}\bar{\textit{\textbf{K}}}^{\top}\right)\right\|_\xi\left\|{\rm exp}\left(\tilde{\textit{\textbf{Q}}}\tilde{\textit{\textbf{K}}}^{\top}\right)\right\|_\xi.\\
\end{split}
\end{equation*}
\end{small}

If $\left|\frac{c_1}{\mathscr{M}{\rm exp}\left(-0.5\right)} - 1\right| < \left|\frac{c_1}{\mathfrak{M}{\rm exp}\left(-0.5\right)} - 1\right|$, one has

\begin{small}
\begin{equation*}
\begin{split}
           &\left\|{\rm exp}\left(\bar{\textit{\textbf{Q}}}\right){\rm exp}\left(\bar{\textit{\textbf{K}}}\right)^{\top} - {\rm exp}\left(\bar{\textit{\textbf{Q}}}\bar{\textit{\textbf{K}}}^{\top}\right)\odot{\rm exp}\left(\tilde{\textit{\textbf{Q}}}\tilde{\textit{\textbf{K}}}^{\top}\right)\right\|_\xi \\
           &\leq 
           \frac{c_1}{\mathscr{M}{\rm exp}\left(-0.5\right)}\left\|{\rm exp}\left(\bar{\textit{\textbf{Q}}}\bar{\textit{\textbf{K}}}^{\top}\right)\right\|_\xi\left\|\emph{\textit{\textbf{U}}}_m - {\rm exp}\left(\tilde{\textit{\textbf{Q}}}\tilde{\textit{\textbf{K}}}^{\top}\right)\right\|_\xi  \\
           &+\left|\frac{c_1}{\mathfrak{M}{\rm exp}\left(-0.5\right)}-1\right|\left\|{\rm exp}\left(\bar{\textit{\textbf{Q}}}\bar{\textit{\textbf{K}}}^{\top}\right)\right\|_\xi\left\|{\rm exp}\left(\tilde{\textit{\textbf{Q}}}\tilde{\textit{\textbf{K}}}^{\top}\right)\right\|_\xi.\\
\end{split}
\end{equation*}
\end{small}

The proof is completed. 
\end{proof}

We now consider the upper bound for the right-hand side of Inequality (\ref{elfatterr2}). If $\left|\frac{c_1}{\mathscr{M}{\rm exp}\left(-0.5\right)} - 1\right|\geq \left|\frac{c_1}{\mathfrak{M}{\rm exp}\left(-0.5\right)} - 1\right|$, the total approximation error is bounded as follows,

\begin{small}
\begin{equation}
\begin{split}
&\left\|{\rm exp}\left(\bar{\textit{\textbf{Q}}}\right){\rm exp}\left(\bar{\textit{\textbf{K}}}\right)^{\top} - {\rm exp}\left(\bar{\textit{\textbf{Q}}}\bar{\textit{\textbf{K}}}^{\top}\right)\odot{\rm exp}\left(\tilde{\textit{\textbf{Q}}}\tilde{\textit{\textbf{K}}}^{\top}\right)\right\|_\xi\left\|\bar{\textit{\textbf{V}}}\right\|_\xi\\
&+\left\|{\rm exp}\left(\tilde{\textit{\textbf{Q}}}\tilde{\textit{\textbf{K}}}^{\top}\right)\odot\textit{\textbf{Z}} - {\rm exp}\left(\bar{\textit{\textbf{Q}}}\bar{\textit{\textbf{K}}}^{\top}\right)\odot{\rm exp}\left(\tilde{\textit{\textbf{Q}}}\tilde{\textit{\textbf{K}}}^{\top}\right)\right\|_\xi\left\|\tilde{\textit{\textbf{V}}}\right\|_\xi\\
& \leq\left(\frac{c_1}{\mathscr{M}{\rm exp}\left(-0.5\right)}\left\|\textit{\textbf{U}}_m - {\rm exp}\left(\tilde{\textit{\textbf{Q}}}\tilde{\textit{\textbf{K}}}^{\top}\right)\right\|_\xi + \left|\frac{c_1}{\mathscr{M}{\rm exp}\left(-0.5\right)}-1\right|\left\|{\rm exp}\left(\tilde{\textit{\textbf{Q}}}\tilde{\textit{\textbf{K}}}^{\top}\right)\right\|_\xi\right)\\&\left\|{\rm exp}\left(\bar{\textit{\textbf{Q}}}\bar{\textit{\textbf{K}}}^{\top}\right)\right\|_\xi\left\|\bar{\textit{\textbf{V}}}\right\|_\xi + \left\|\textit{\textbf{Z}} - {\rm exp}\left(\bar{\textit{\textbf{Q}}}\bar{\textit{\textbf{K}}}^{\top}\right)\right\|_\xi\left\|{\rm exp}\left(\tilde{\textit{\textbf{Q}}}\tilde{\textit{\textbf{K}}}^{\top}\right)\right\|_\xi\left\|\tilde{\textit{\textbf{V}}}\right\|_\xi.
\end{split}
\label{elfatterr3-1}
\end{equation}
\end{small}

\noindent If $\left|\frac{c_1}{\mathscr{M}{\rm exp}\left(-0.5\right)} - 1\right| < \left|\frac{c_1}{\mathfrak{M}{\rm exp}\left(-0.5\right)} - 1\right|$, the total approximation error is bounded as follows,

\begin{small}
\begin{equation}
\begin{split}
&\left\|{\rm exp}\left(\bar{\textit{\textbf{Q}}}\right){\rm exp}\left(\bar{\textit{\textbf{K}}}\right)^{\top} - {\rm exp}\left(\bar{\textit{\textbf{Q}}}\bar{\textit{\textbf{K}}}^{\top}\right)\odot{\rm exp}\left(\tilde{\textit{\textbf{Q}}}\tilde{\textit{\textbf{K}}}^{\top}\right)\right\|_\xi\left\|\bar{\textit{\textbf{V}}}\right\|_\xi \\
&+\left\|{\rm exp}\left(\tilde{\textit{\textbf{Q}}}\tilde{\textit{\textbf{K}}}^{\top}\right)\odot\textit{\textbf{Z}} - {\rm exp}\left(\bar{\textit{\textbf{Q}}}\bar{\textit{\textbf{K}}}^{\top}\right)\odot{\rm exp}\left(\tilde{\textit{\textbf{Q}}}\tilde{\textit{\textbf{K}}}^{\top}\right)\right\|_\xi\left\|\tilde{\textit{\textbf{V}}}\right\|_\xi \\
& \leq\left(\frac{c_1}{\mathscr{M}{\rm exp}\left(-0.5\right)}\left\|\textit{\textbf{U}}_m - {\rm exp}\left(\tilde{\textit{\textbf{Q}}}\tilde{\textit{\textbf{K}}}^{\top}\right)\right\|_\xi + \left|\frac{c_1}{\mathfrak{M}{\rm exp}\left(-0.5\right)}-1\right|\left\|{\rm exp}\left(\tilde{\textit{\textbf{Q}}}\tilde{\textit{\textbf{K}}}^{\top}\right)\right\|_\xi\right)\\
&\left\|{\rm exp}\left(\bar{\textit{\textbf{Q}}}\bar{\textit{\textbf{K}}}^{\top}\right)\right\|_\xi\left\|\bar{\textit{\textbf{V}}}\right\|_\xi + \left\|\textit{\textbf{Z}} - {\rm exp}\left(\bar{\textit{\textbf{Q}}}\bar{\textit{\textbf{K}}}^{\top}\right)\right\|_\xi\left\|{\rm exp}\left(\tilde{\textit{\textbf{Q}}}\tilde{\textit{\textbf{K}}}^{\top}\right)\right\|_\xi\left\|\tilde{\textit{\textbf{V}}}\right\|_\xi.
\end{split}
\label{elfatterr3-2}
\end{equation}
\end{small}

\noindent For the approximation error of Eq. (\ref{elfatt2}) of the main body of the paper, it is obvious to see that

\begin{small}
\begin{equation*}
\begin{split}
    &\left\lbrack{\rm exp}\left(\bar{\textit{\textbf{Q}}}\right){\rm exp}\left(\bar{\textit{\textbf{K}}}\right)^{\top}\bar{\textit{\textbf{V}}}, \left({\rm exp}\left(\tilde{\textit{\textbf{Q}}}\tilde{\textit{\textbf{K}}}^{\top}\right)\odot\textit{\textbf{Z}}\right)\tilde{\textit{\textbf{V}}}\right\rbrack - \left\lbrack{\rm exp}\left(\bar{\textit{\textbf{Q}}}\bar{\textit{\textbf{K}}}^{\top}\right)\bar{\textit{\textbf{V}}}, {\rm exp}\left(\tilde{\textit{\textbf{Q}}}\tilde{\textit{\textbf{K}}}^{\top}\right)\tilde{\textit{\textbf{V}}}\right\rbrack\\
    &=\left\lbrack\left({\rm exp}\left(\bar{\textit{\textbf{Q}}}\right){\rm exp}\left(\bar{\textit{\textbf{K}}}\right)^{\top} - {\rm exp}\left(\bar{\textit{\textbf{Q}}}\bar{\textit{\textbf{K}}}^{\top}\right)\right)\bar{\textit{\textbf{V}}}, \left({\rm exp}\left(\tilde{\textit{\textbf{Q}}}\tilde{\textit{\textbf{K}}}^{\top}\right)\odot\textit{\textbf{Z}} - {\rm exp}\left(\tilde{\textit{\textbf{Q}}}\tilde{\textit{\textbf{K}}}^{\top}\right)\right)\tilde{\textit{\textbf{V}}}\right\rbrack.
\end{split}
\end{equation*}
\end{small}

\noindent Hence, the corresponding approximation error is bounded as follows,

\begin{small}
\begin{equation}
\begin{split}
    &\left\|\left\lbrack{\rm exp}\left(\bar{\textit{\textbf{Q}}}\right){\rm exp}\left(\bar{\textit{\textbf{K}}}\right)^{\top}\bar{\textit{\textbf{V}}}, \left({\rm exp}\left(\tilde{\textit{\textbf{Q}}}\tilde{\textit{\textbf{K}}}^{\top}\right)\odot\textit{\textbf{Z}}\right)\tilde{\textit{\textbf{V}}}\right\rbrack - \left\lbrack{\rm exp}\left(\bar{\textit{\textbf{Q}}}\bar{\textit{\textbf{K}}}^{\top}\right)\bar{\textit{\textbf{V}}}, {\rm exp}\left(\tilde{\textit{\textbf{Q}}}\tilde{\textit{\textbf{K}}}^{\top}\right)\tilde{\textit{\textbf{V}}}\right\rbrack\right\|_\xi \\
    &\leq\left\|{\rm exp}\left(\bar{\textit{\textbf{Q}}}\right){\rm exp}\left(\bar{\textit{\textbf{K}}}\right)^{\top} - {\rm exp}\left(\bar{\textit{\textbf{Q}}}\bar{\textit{\textbf{K}}}^{\top}\right)\right\|_\xi\left\|\bar{\textit{\textbf{V}}}\right\|_\xi\\
    &+\left\|{\rm exp}\left(\tilde{\textit{\textbf{Q}}}\tilde{\textit{\textbf{K}}}^{\top}\right)\odot\textit{\textbf{Z}} - {\rm exp}\left(\tilde{\textit{\textbf{Q}}}\tilde{\textit{\textbf{K}}}^{\top}\right)\right\|_\xi\left\|\tilde{\textit{\textbf{V}}}\right\|_\xi.
\end{split}
\label{elfatterr4}
\end{equation}
\end{small}

\noindent If $\left|\frac{c_1}{\mathscr{M}{\rm exp}\left(-0.5\right)} - 1\right|\geq \left|\frac{c_1}{\mathfrak{M}{\rm exp}\left(-0.5\right)} - 1\right|$, we have

\begin{small}
\begin{equation}
\begin{split}
    &\left\|\left\lbrack{\rm exp}\left(\bar{\textit{\textbf{Q}}}\right){\rm exp}\left(\bar{\textit{\textbf{K}}}\right)^{\top}\bar{\textit{\textbf{V}}}, \left({\rm exp}\left(\tilde{\textit{\textbf{Q}}}\tilde{\textit{\textbf{K}}}^{\top}\right)\odot\textit{\textbf{Z}}\right)\tilde{\textit{\textbf{V}}}\right\rbrack - \left\lbrack{\rm exp}\left(\bar{\textit{\textbf{Q}}}\bar{\textit{\textbf{K}}}^{\top}\right)\bar{\textit{\textbf{V}}}, {\rm exp}\left(\tilde{\textit{\textbf{Q}}}\tilde{\textit{\textbf{K}}}^{\top}\right)\tilde{\textit{\textbf{V}}}\right\rbrack\right\|_\xi \\
    &\leq\left|\frac{c_1}{\mathscr{M}{\rm exp}\left(-0.5\right)}-1\right|\left\|{\rm exp}\left(\bar{\textit{\textbf{Q}}}\bar{\textit{\textbf{K}}}^{\top}\right)\right\|_\xi\left\|\bar{\textit{\textbf{V}}}\right\|_\xi +\left\|\textit{\textbf{Z}} - \textit{\textbf{U}}_m\right\|_\xi\left\|{\rm exp}\left(\tilde{\textit{\textbf{Q}}}\tilde{\textit{\textbf{K}}}^{\top}\right)\right\|_\xi\left\|\tilde{\textit{\textbf{V}}}\right\|_\xi.
\end{split}
\label{elfatterr4-1}
\end{equation}
\end{small}

\noindent If $\left|\frac{c_1}{\mathscr{M}{\rm exp}\left(-0.5\right)} - 1\right| < \left|\frac{c_1}{\mathfrak{M}{\rm exp}\left(-0.5\right)} - 1\right|$, one has

\begin{small}
\begin{equation}
\begin{split}
    &\left\|\left\lbrack{\rm exp}\left(\bar{\textit{\textbf{Q}}}\right){\rm exp}\left(\bar{\textit{\textbf{K}}}\right)^{\top}\bar{\textit{\textbf{V}}}, \left({\rm exp}\left(\tilde{\textit{\textbf{Q}}}\tilde{\textit{\textbf{K}}}^{\top}\right)\odot\textit{\textbf{Z}}\right)\tilde{\textit{\textbf{V}}}\right\rbrack - \left\lbrack{\rm exp}\left(\bar{\textit{\textbf{Q}}}\bar{\textit{\textbf{K}}}^{\top}\right)\bar{\textit{\textbf{V}}}, {\rm exp}\left(\tilde{\textit{\textbf{Q}}}\tilde{\textit{\textbf{K}}}^{\top}\right)\tilde{\textit{\textbf{V}}}\right\rbrack\right\|_\xi \\
    &\leq\left|\frac{c_1}{\mathfrak{M}{\rm exp}\left(-0.5\right)}-1\right|\left\|{\rm exp}\left(\bar{\textit{\textbf{Q}}}\bar{\textit{\textbf{K}}}^{\top}\right)\right\|_\xi\left\|\bar{\textit{\textbf{V}}}\right\|_\xi +\left\|\textit{\textbf{Z}} - \textit{\textbf{U}}_m\right\|_\xi\left\|{\rm exp}\left(\tilde{\textit{\textbf{Q}}}\tilde{\textit{\textbf{K}}}^{\top}\right)\right\|_\xi\left\|\tilde{\textit{\textbf{V}}}\right\|_\xi.
\end{split}
\label{elfatterr4-2}
\end{equation}
\end{small}

\noindent Inequalities (\ref{elfatterr4}-\ref{elfatterr4-2}) give a tighter bound than Inequalities (\ref{elfatterr2}-\ref{elfatterr3-2}). Fig. \ref{boundana} (a) shows the upper bounds of different attention mechanisms (EFFATT \cite{2SOFT}, ELFATT, and LOCAL \cite{CSWin}) for approximating attention matrices of vanilla attention during the training of ImageNet-1K. The backbone used is CSWin-T \cite{flatten}. The upper bound for ELFATT is obtained according to 

\begin{small}
\begin{equation*}
\begin{split}
&\left\lVert\left[ \mathrm{exp}(\bar{\textit{\textbf{Q}}})\mathrm{exp}(\bar{\textit{\textbf{K}}})^{\top},\ \left( \mathrm{exp}(\tilde{\textit{\textbf{Q}}} \tilde{\textit{\textbf{K}}}^{\top})\odot\textit{\textbf{Z}}\right)\right] - \left[ \mathrm{exp}(\bar{\textit{\textbf{Q}}}\bar{\textit{\textbf{K}}}^{\top}),\ \mathrm{exp}(\tilde{\textit{\textbf{Q}}}\tilde{\textit{\textbf{K}}}^{\top})\right] \right\rVert_{\xi}\\
&\leq
\left\lVert \mathrm{exp}(\bar{\textit{\textbf{Q}}})\mathrm{exp}(\bar{\textit{\textbf{K}}})^{\top}-\mathrm{exp}(\bar{\textit{\textbf{Q}}} \bar{\textit{\textbf{K}}}^{\top}) \right\rVert_{\xi}
+
\left\lVert\mathrm{exp}(\tilde{\textit{\textbf{Q}}}\tilde{\textit{\textbf{K}}}^{\top})\odot\textit{\textbf{Z}}-\mathrm{exp}(\tilde{\textit{\textbf{Q}}} \tilde{\textit{\textbf{K}}}^{\top})\right\rVert_{\xi},
\end{split}
\end{equation*}
\end{small}

\noindent for EFFATT is as 

\begin{small}
\begin{equation*}
\begin{split}
&\left\lVert\left[ \mathrm{exp}(\bar{\textit{\textbf{Q}}})\mathrm{exp}(\bar{\textit{\textbf{K}}})^{\top},\ \mathrm{exp}(\tilde{\textit{\textbf{Q}}}) \mathrm{exp}(\tilde{\textit{\textbf{K}}})^{\top}\right] - \left[ \mathrm{exp}(\bar{\textit{\textbf{Q}}}\bar{\textit{\textbf{K}}}^{\top}),\ \mathrm{exp}(\tilde{\textit{\textbf{Q}}}\tilde{\textit{\textbf{K}}}^{\top})\right] \right\rVert_{\xi}\\
&\leq
\left\lVert \mathrm{exp}(\bar{\textit{\textbf{Q}}})\mathrm{exp}(\bar{\textit{\textbf{K}}})^{\top}-\mathrm{exp}(\bar{\textit{\textbf{Q}}} \bar{\textit{\textbf{K}}}^{\top}) \right\rVert_{\xi}
+
\left\lVert\mathrm{exp}(\tilde{\textit{\textbf{Q}}})\mathrm{exp}(\tilde{\textit{\textbf{K}}})^{\top}-\mathrm{exp}(\tilde{\textit{\textbf{Q}}} \tilde{\textit{\textbf{K}}}^{\top})\right\rVert_{\xi},
\end{split}
\end{equation*}
\end{small}

\noindent and for LOCAL is as 

\begin{small}
\begin{equation*}
\begin{split}
&\left\lVert\left[\left(\mathrm{exp}(\bar{\textit{\textbf{Q}}} \bar{\textit{\textbf{K}}}^{\top})\odot\textit{\textbf{Z}}_1\right),\ \left(\mathrm{exp}(\tilde{\textit{\textbf{Q}}} \tilde{\textit{\textbf{K}}}^{\top})\odot\textit{\textbf{Z}}_2\right)\right] - \left[\mathrm{exp}(\bar{\textit{\textbf{Q}}}\bar{\textit{\textbf{K}}}^{\top}),\ \mathrm{exp}(\tilde{\textit{\textbf{Q}}}\tilde{\textit{\textbf{K}}}^{\top})\right] \right\rVert_{\xi}\\
&\leq
\left\lVert\mathrm{exp}(\bar{\textit{\textbf{Q}}} \bar{\textit{\textbf{K}}}^{\top})\odot\textit{\textbf{Z}}_1-\mathrm{exp}(\bar{\textit{\textbf{Q}}} \bar{\textit{\textbf{K}}}^{\top}) \right\rVert_{\xi}
+
\left\lVert\mathrm{exp}(\tilde{\textit{\textbf{Q}}} \tilde{\textit{\textbf{K}}}^{\top})\odot\textit{\textbf{Z}}_2-\mathrm{exp}(\tilde{\textit{\textbf{Q}}} \tilde{\textit{\textbf{K}}}^{\top})\right\rVert_{\xi},
\end{split}
\end{equation*}
\end{small}

\noindent where $\xi=F$ denotes the Frobenius norm, and $\textit{\textbf{Z}}_1$ and $\textit{\textbf{Z}}_2$ are different to $\textit{\textbf{Z}}$ of ELFATT, because LOCAL uses a more complex cross-shaped blockify method for different heads. All upper bounds were obtained from the final attention layer of the second level of the backbone. The upper bounds of different methods show a decreasing trend as the training process progresses. ELFATT further reduces the upper bound of the LOCAL attention mechanism, although it still has a higher bound than EFFATT, which is caused by its sparse blockify attention heads. Figs. \ref{boundana} (b) and (c) show the comparison of relative attention matrix approximation error ($\left\lVert\textit{\textbf{A}}-\textit{\textbf{A}}^{'}\right\rVert_{\xi}/\left\lVert\textit{\textbf{A}}\right\rVert_{\xi}$, where $\textit{\textbf{A}}$ denotes the attention matrix of VaniATT and $\textit{\textbf{A}}'$ denotes the attention matrix of linear attention) of global linear attention heads of ELFATT and the corresponding heads in EFFATT and relative attention matrix approximation error of sparse blockify attention heads of ELFATT and the corresponding heads in LOCAL. The backbone used is CSWin-T \cite{flatten}. Since ELFATT uses a hybrid head architecture, its half heads consist of global linear attention heads, and the remaining heads consist of sparse blockify attention heads. We compared the relative attention matrix approximation error of global linear attention heads of ELFATT and the heads in the same position of EFFATT. We compared the relative attention matrix approximation error of sparse blockify attention heads of ELFATT and the heads in the same position of LOCAL. We found that the global linear attention heads of ELFATT have a lower approximation error than the heads in the same position of EFFATT, and the sparse blockify attention heads of ELFATT have a lower approximation error than the heads in the same position of LOCAL. The hybrid head architecture has complementary effects on approximation error reduction.

\begin{figure}[htbp]
  \centering
    \subfloat[Upper Bounds]{\includegraphics[width=0.8\linewidth]{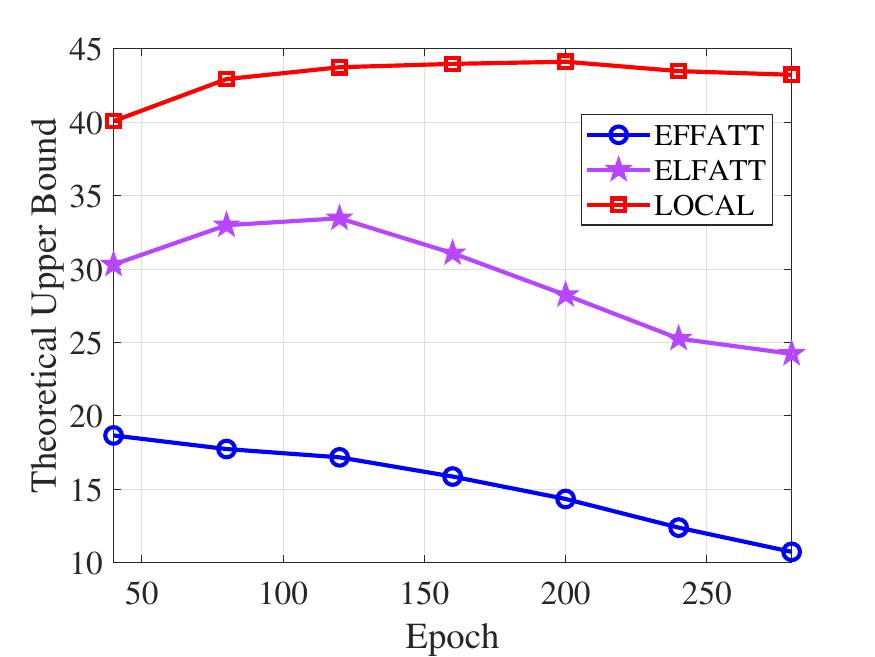}}
    \hfill
    \subfloat[Global Linear Attention Heads]{\includegraphics[width=0.8\linewidth]{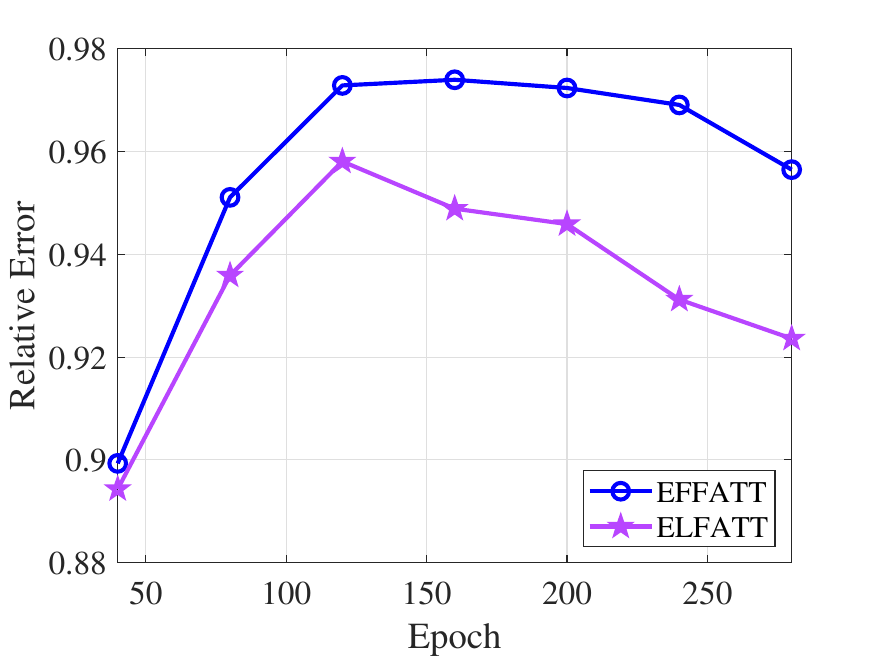}}
    \hfill
    \subfloat[Sparse Blockify Attention Heads]{\includegraphics[width=0.8\linewidth]{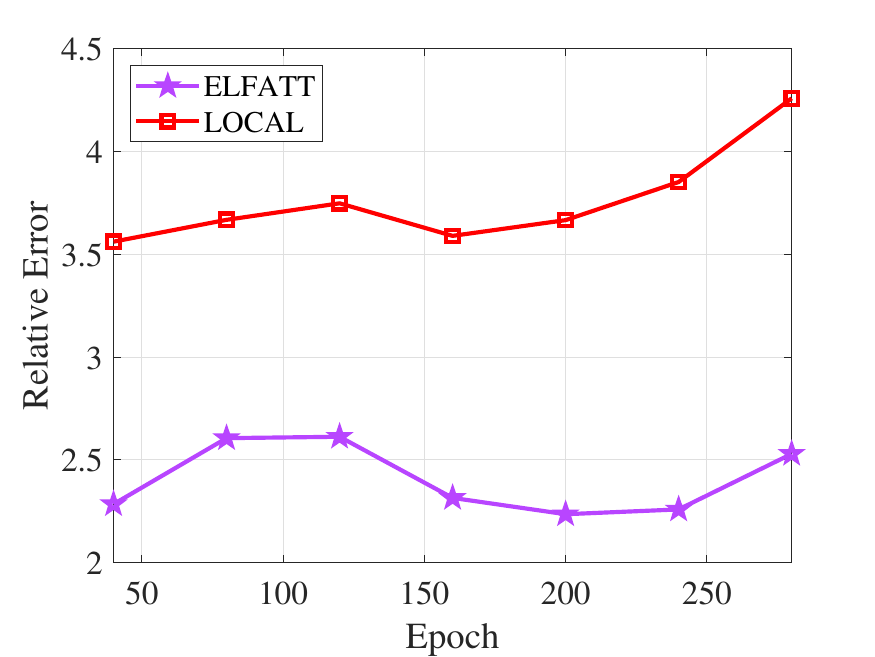}}
    \caption{(a) The upper bound of different attention mechanisms (EFFATT \cite{2SOFT}, ELFATT, and LOCAL \cite{CSWin}) for approximating attention matrices of vanilla attention during the training of ImageNet-1K. (b) The comparison of relative attention matrix approximation error of global linear attention heads of ELFATT and the corresponding heads in EFFATT. (c) The comparison of relative attention matrix approximation error of sparse blockify attention heads of ELFATT and the corresponding heads in LOCAL. The backbone used is CSWin-T.}
    \label{boundana}
\end{figure}

\begin{figure*}[htbp]
  \centering
    \subfloat[True Label: American Staffordshire Terrier]{\includegraphics[width=0.5\linewidth]{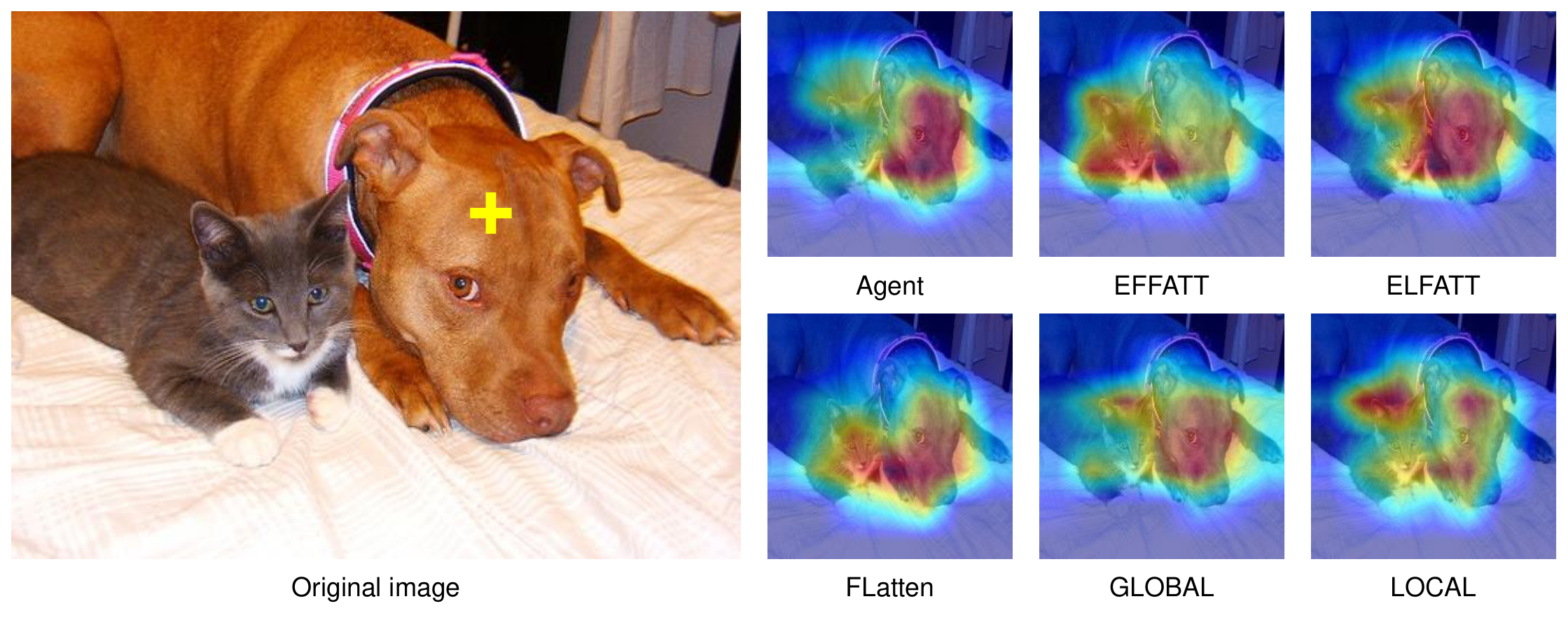}}
    \hfill
    \subfloat[True Label: Hartebeest]{\includegraphics[width=0.5\linewidth]{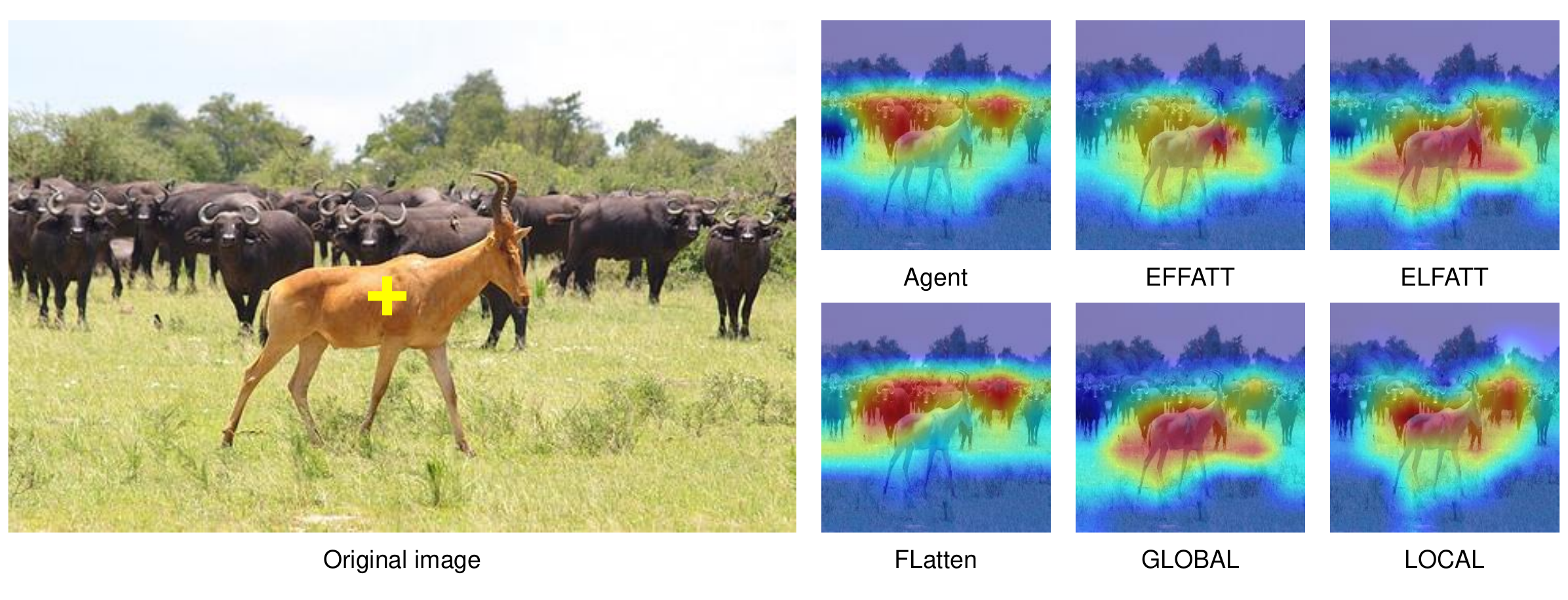}}
    \hfill
    \subfloat[True Label: Zebra]{\includegraphics[width=0.5\linewidth]{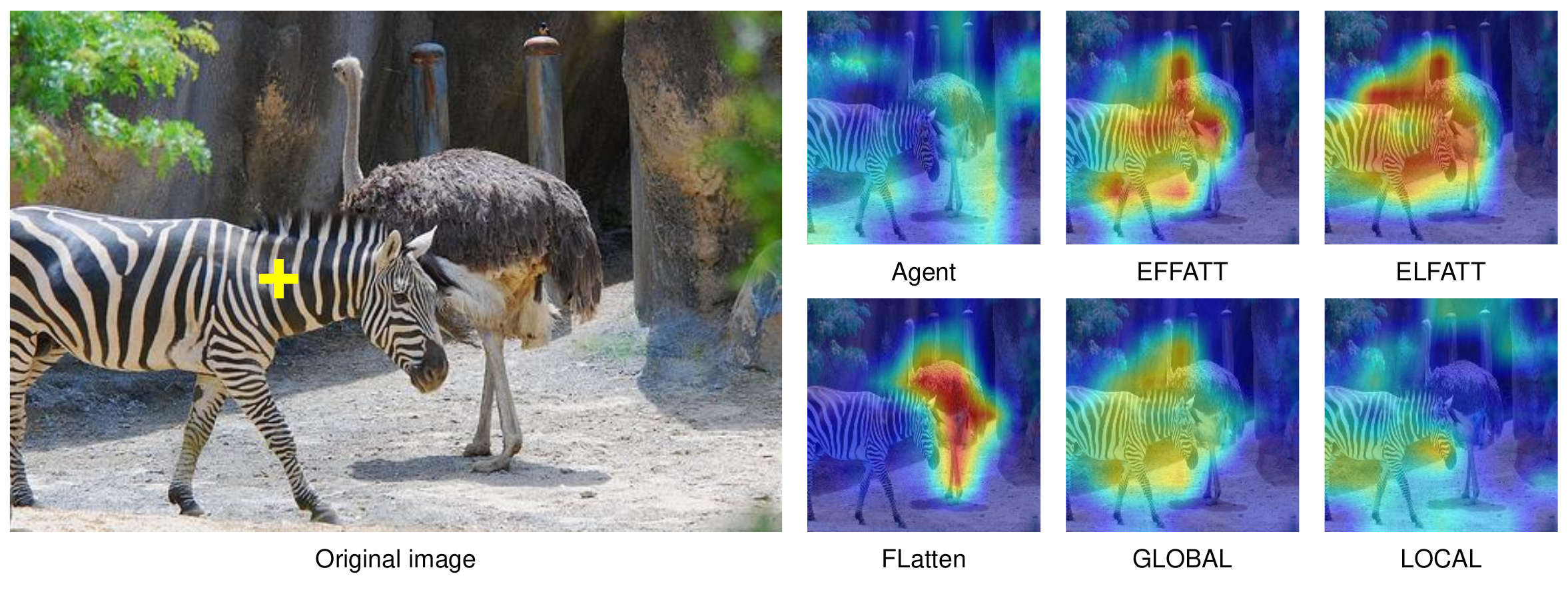}}
    \hfill
    \subfloat[True Label: Library]{\includegraphics[width=0.5\linewidth]{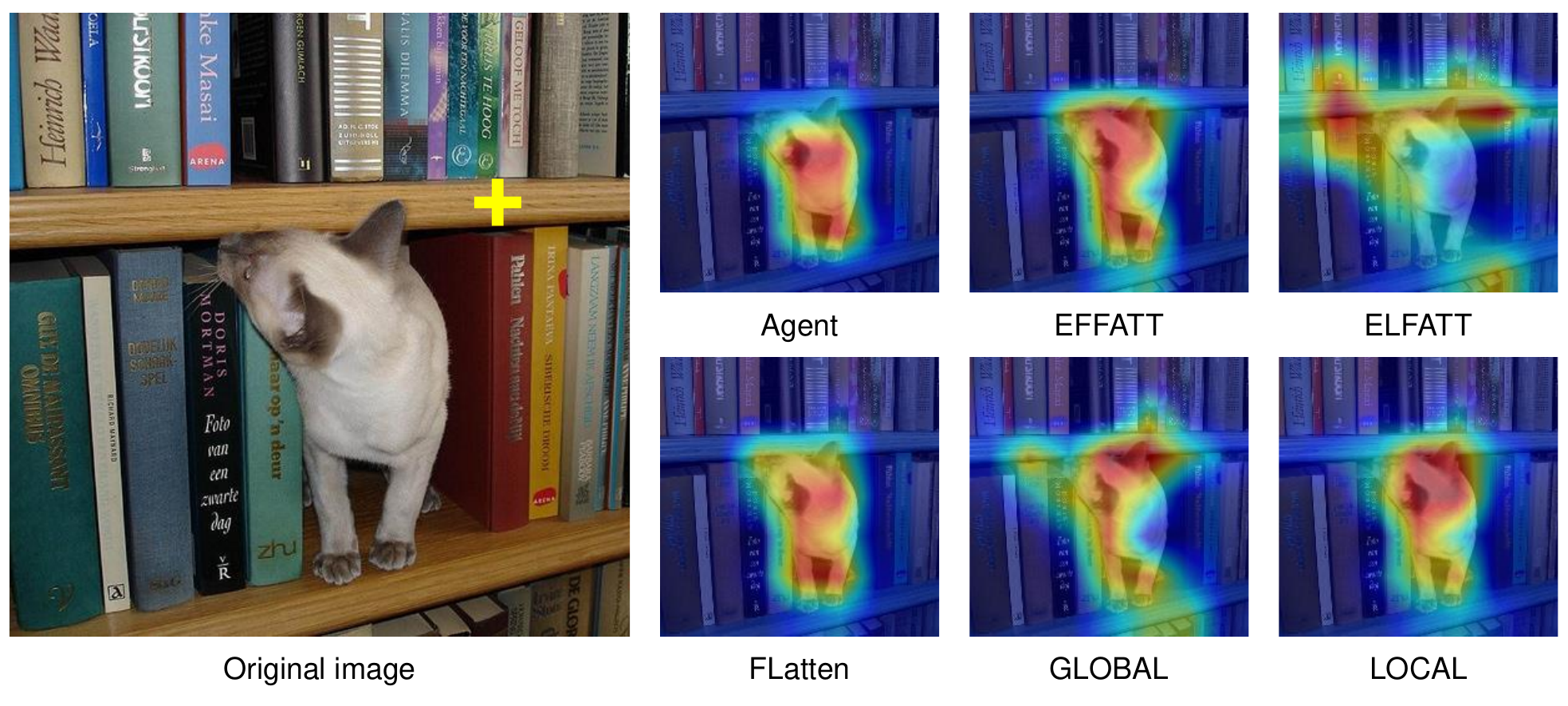}}
    \hfill
    \subfloat[True Label: Snail]{\includegraphics[width=0.5\linewidth]{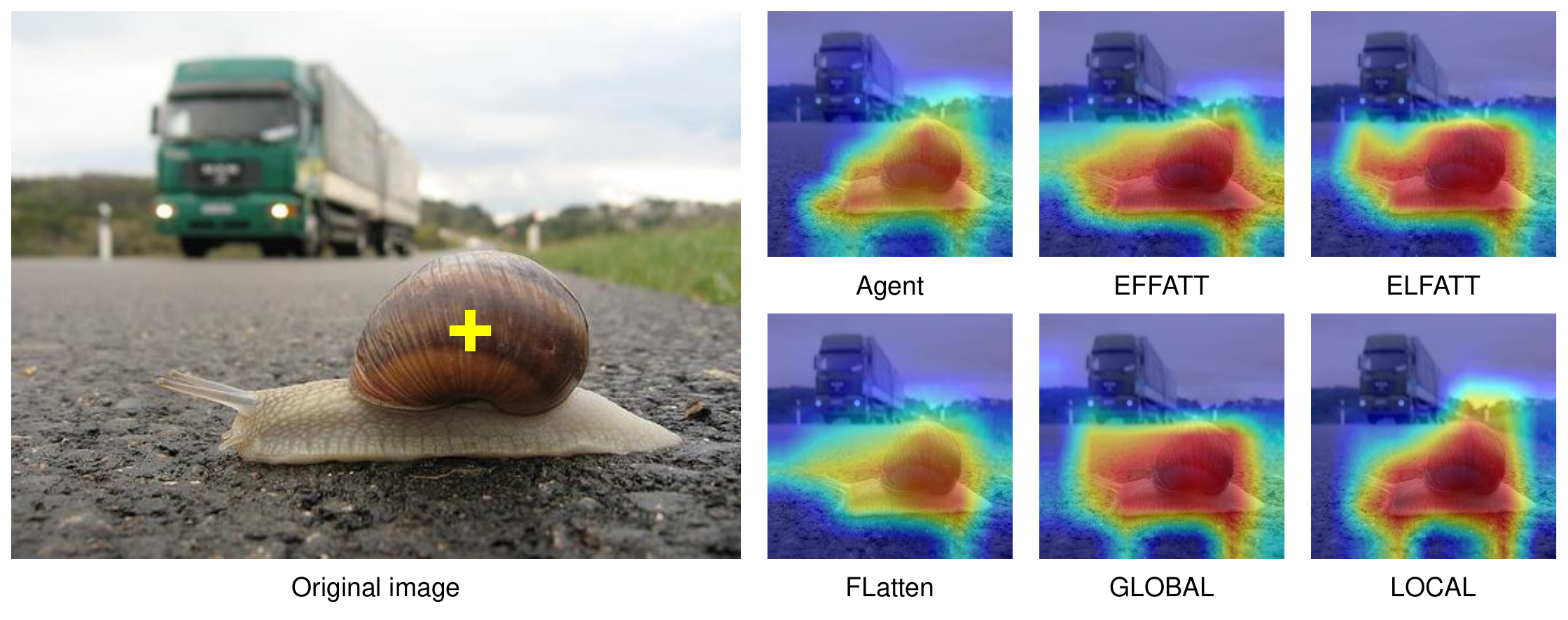}}
    \hfill
    \subfloat[True Label: Black-Footed Ferret]{\includegraphics[width=0.5\linewidth]{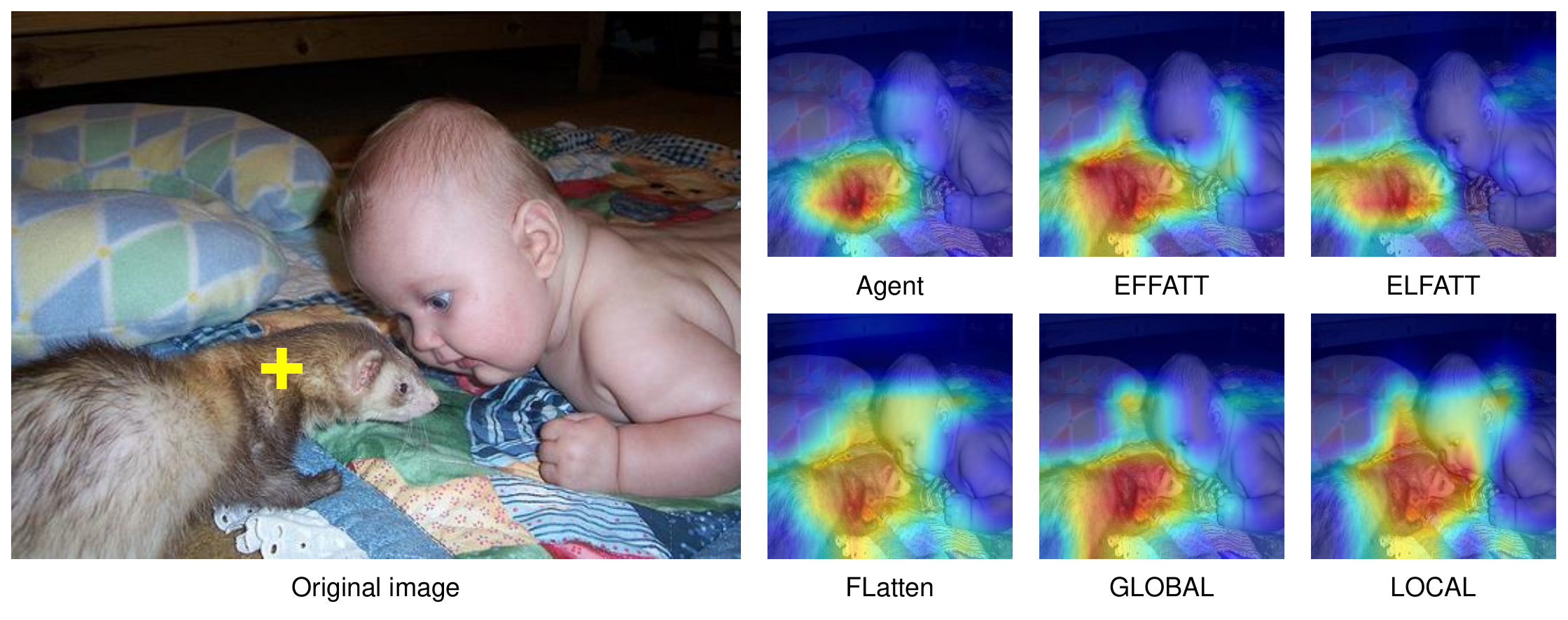}}
    \hfill
    \subfloat[True Label: Scottish Deerhound]{\includegraphics[width=0.5\linewidth]{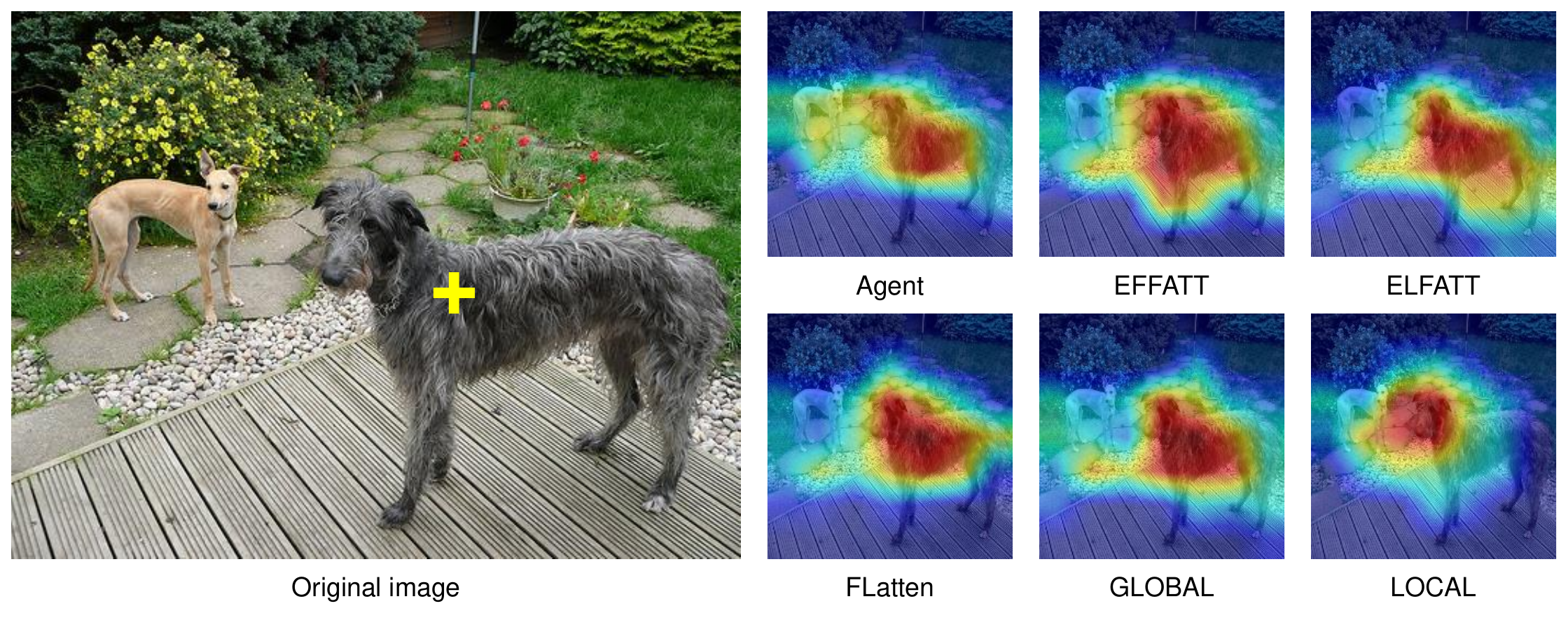}}
    \hfill
    \subfloat[True Label: Pineapple]{\includegraphics[width=0.5\linewidth]{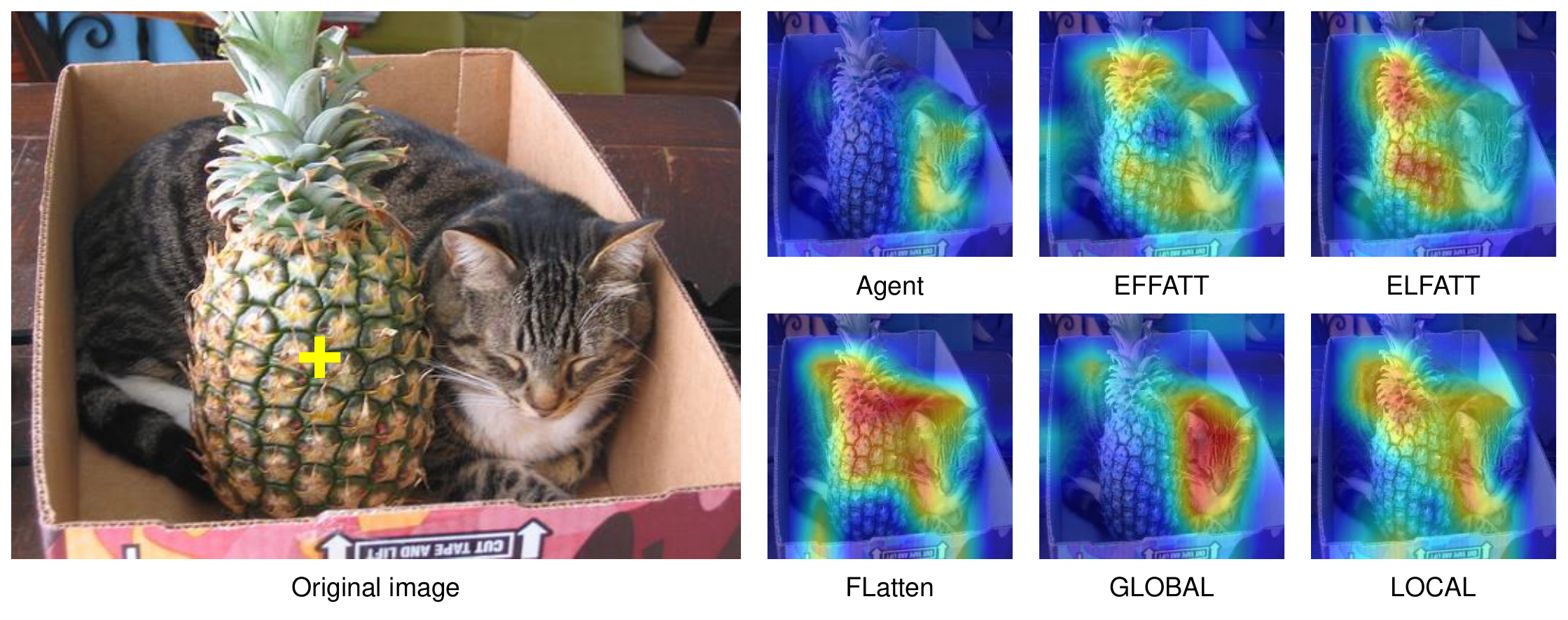}}
    \caption{Visual comparison of class activation map (CAM) based attention results of different attention mechanisms obtained using Score-CAM \cite{SCCAM}. Note: Backbone used is CSWin-T \cite{flatten}.}
    \label{visualcom2}
\end{figure*}

\begin{figure}[htbp]
  \centering
    \subfloat[CSWin]{\includegraphics[width=0.9\linewidth]{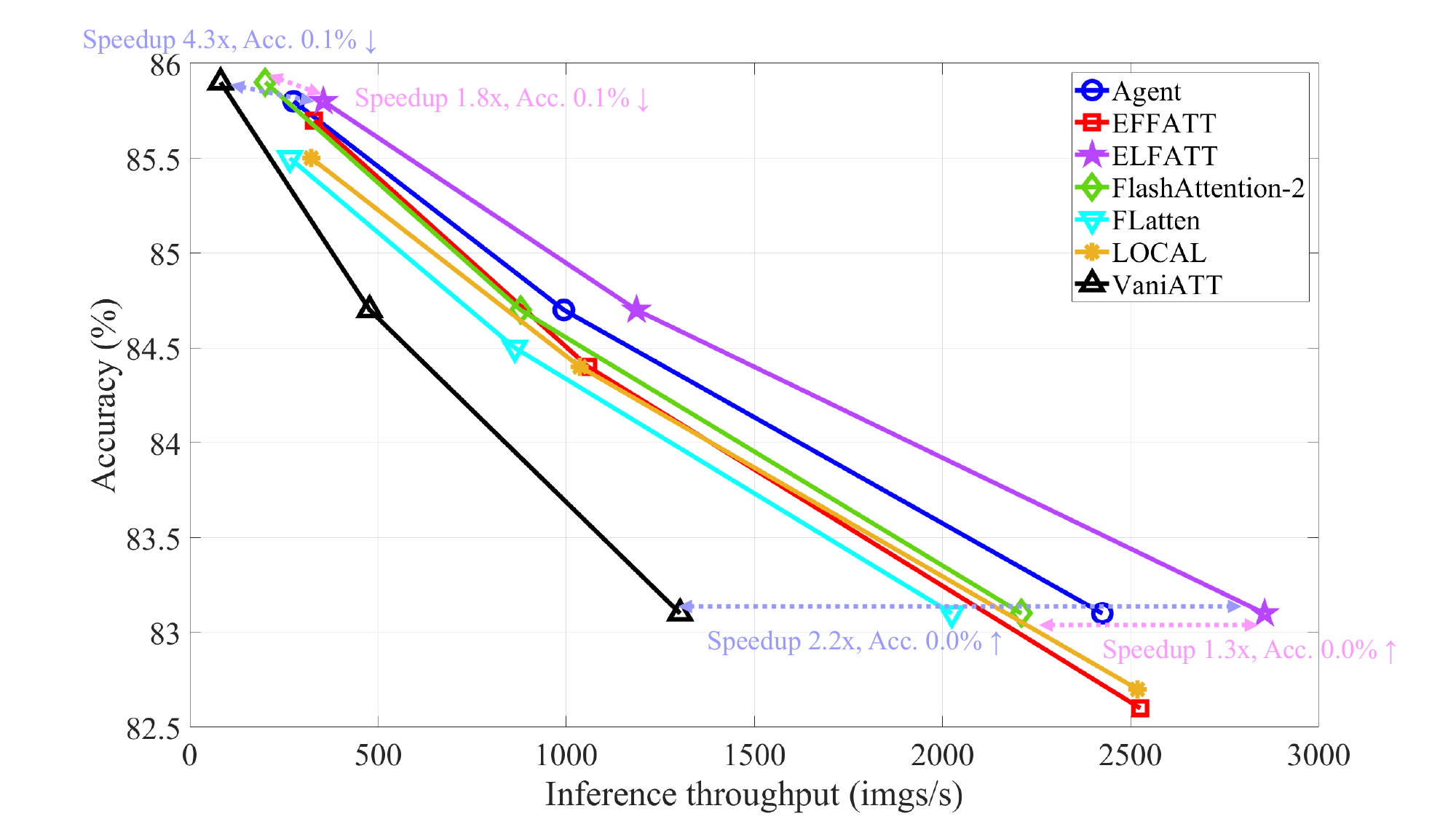}}
    \hfill
    \subfloat[Swin]{\includegraphics[width=0.9\linewidth]{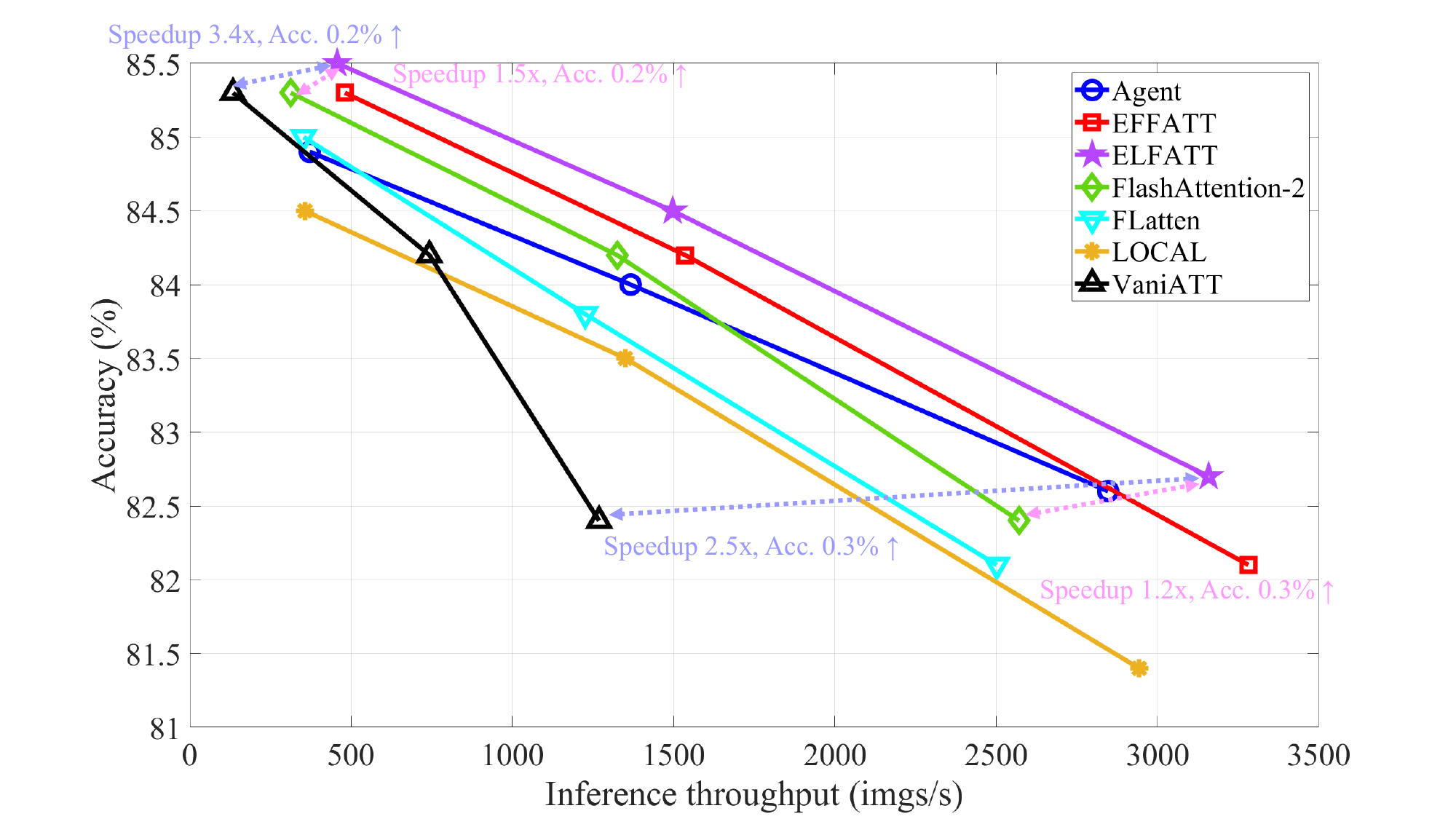}}
    \caption{Accuracy-efficiency curves obtained by different attention mechanisms on ImageNet-1K using different bacbones (CSWin \cite{flatten} and Swin \cite{Swin}). Efficiency is represented by the inference throughput of each method. All methods' inference throughput and accuracy results are from Table 1 of the main body of the paper.}
    \label{acceff}
\end{figure}

\section{Ablation Study of Different Block Sizes}
\label{APAB1}
To validate the effect of the block size on the performance of ELFATT, we performed an ablation study using different block sizes at each level of CSWin-T-ELFATT on ImageNet-1K. As shown in Table \ref{tabab1}, with the increasing number of block sizes, the inference speed of ELFATT will slow down, while the classification accuracy of ELFATT shows an increasing trend. ELFATT using 49-49-196-49 and 196-196-196-49 blocks achieve the best performance in terms of speed and accuracy. It seems that FlashAttention-2 is more efficient for some block sizes (with FlashAttention-2, the block size of 196 is faster than the block size of 49 which may be caused by GPU architectures). Hence, the block sizes for ELFATT can be determined according to the performance and efficiency requirements.

\section{Ablation Study of LePE}
\label{APAB2}
To validate the effect of LePE on performance, we compared the performance of ELFATT and VaniATT using or without using LePE in CSWin-T on ImageNet-1K. As shown in Table \ref{tabab2}, using LePE, the performance of VaniATT improves from 82.9 to 83.1 and the performance of ELFATT also improves from 82.9 to 83.1, respectively, further demonstrating the powerful local positional information enhancement of LePE. More details about LePE can be found in \cite{CSWin}.

\begin{table}[htbp]
  \centering
  \caption{The comparison of top-1 test accuracy (Acc.), inference throughput (FPS), parameter numbers (\#), and number of floating point operations (FLOPs) of ELFATT using different block sizes in each level (L1-L4) of CSWin-T on ImageNet-1K. Note: Inference throughput is obtained using a batch size of 512 with mixed precision on a single NVIDIA H20 (96 GB) GPU. The best values are in bold.}
    \scriptsize
    \begin{tabular}{cccc|rrrrrr}
    \toprule
        \multicolumn{4}{c|}{Block size of each level} & \multicolumn{1}{c}{\multirow{2}[0]{*}{Res.}} & \multicolumn{1}{c}{\multirow{2}[0]{*}{Acc.}} & \multicolumn{1}{c}{\multirow{2}[0]{*}{FPS (nFA/FA)}} & \multicolumn{1}{c}{\multirow{2}[0]{*}{\#}} & \multicolumn{1}{c}{\multirow{2}[0]{*}{FLOPs (nFA/FA)}} \\
    \multicolumn{1}{l}{L1} & \multicolumn{1}{l}{L2} & \multicolumn{1}{l}{L3} & \multicolumn{1}{l|}{L4} &       &       &       &       &  \\
    \midrule
    49  &  49  &  49   &  49                             &$224^2$& 82.5                                              &   2619/2805 imgs/s         &  20M     &  4.37G/4.13G      \\
    49  &  49  &  196  &  49                             &$224^2$& \textbf{83.1}                                     &   2603/2856 imgs/s         &  20M     &  4.44G/4.13G      \\
    49  &  196 &  196  &  49                             &$224^2$& 82.8                                              &   2552/2863 imgs/s         &  20M     &  4.50G/4.13G      \\
    196 &  196 &  196  &  49                             &$224^2$& \textbf{83.1}                                     &   2512/2881 imgs/s         &  20M     &  4.56G/4.13G      \\
    \bottomrule
    \end{tabular}
  \label{tabab1}
\end{table}

\begin{table}[htbp]
  \centering
  \caption{The comparison of top-1 test accuracy (Acc.), inference throughput (FPS), parameter numbers (\#), and number of floating point operations (FLOPs) of ELFATT and VaniATT using or without using LePE in CSWin-T on ImageNet-1K. Note: Inference throughput is obtained using a batch size of 512 with mixed precision on a single NVIDIA H20 (96 GB) GPU. The best values are in bold. ``Res.'' denotes resolution, ``imgs'' denotes images, ``w/o'' denotes ``without'', ``nFA'' denotes non-FlashAttention-2, and ``FA'' denotes FlashAttention-2.}
    \scriptsize
    \begin{tabular}{lrrrrrr}
    \toprule
    Method & \multicolumn{1}{r}{Res.} & \multicolumn{1}{r}{Acc. (\%)}   & \multicolumn{1}{r}{FPS (nFA/FA)} & \multicolumn{1}{r}{\#} & \multicolumn{1}{r} {FLOPs (nFA/FA)}             \\
    \midrule
    \multicolumn{6}{c}{CSWin-T-ELFATT}\\
    \midrule
    LePE                                  &$224^2$& \textbf{83.1}                                     &   2512/2881 imgs/s         &  20M     &  4.56G/4.13G      \\
    w/o-LePE                              &$224^2$& 82.9                                              &   2801/3271 imgs/s         &  20M     &  4.54G/4.12G      \\
    \midrule
    \multicolumn{6}{c}{CSWin-T-GLOBAL}\\
    \midrule
    LePE                                  &$224^2$& \textbf{83.1}                                     &   1303/2210 imgs/s         &  20M     &  7.60G/4.09G      \\
    w/o-LePE                              &$224^2$& 82.9                                              &   1332/2296 imgs/s         &  20M     &  7.58G/4.08G      \\
    \bottomrule
    \end{tabular}
  \label{tabab2}
\end{table}

\begin{table}[htbp]
  \centering
  \caption{The comparison of top-1 test accuracy (Acc.), inference throughput (FPS), parameter numbers (\#), and number of floating point operations (FLOPs) obtained by using the ELFATT modules to replace the VaniATT modules in different levels (L1-L4) of CSWin-T-GLOBAL on ImageNet-1K. Note: Inference throughput is obtained using a batch size of 512 with mixed precision on a single NVIDIA H20 (96 GB) GPU. The best values are in bold. ``Res.'' denotes resolution, ``imgs'' denotes images, ``nFA'' denotes non-FlashAttention-2, and ``FA'' denotes FlashAttention-2. \checkmark denotes the level consists of full ELFATT modules and \ding{52} denotes half of this level is composed of ELFATT modules and the remaining is composed of VaniATT modules. The order of composition is: VaniATT-ELFATT-$...$-VaniATT-ELFATT.}
    \scriptsize
    \begin{tabular}{cccc|rrrrr}
    \toprule
    \multicolumn{4}{c|}{Levels using ELFATT} & \multicolumn{1}{c}{\multirow{2}[0]{*}{Res.}} & \multicolumn{1}{c}{\multirow{2}[0]{*}{Acc.}} & \multicolumn{1}{c}{\multirow{2}[0]{*}{FPS (nFA/FA)}} & \multicolumn{1}{c}{\multirow{2}[0]{*}{\#}} & \multicolumn{1}{c}{\multirow{2}[0]{*}{FLOPs (nFA/FA)}} \\
    \multicolumn{1}{l}{L1} & \multicolumn{1}{l}{L2} & \multicolumn{1}{l}{L3} & \multicolumn{1}{l|}{L4} &       &       &       &       &  \\
    \midrule
        \checkmark  &                 &                     &                  &$224^2$                   & 83.0              &   2257/2865 imgs/s         &  20M     &  5.17G/4.10G       \\ 
        \checkmark  &   \checkmark    &                     &                  &$224^2$                   & 83.0              &   2467/2918 imgs/s         &  20M     &  4.63G/4.12G       \\ 
        \checkmark  &   \checkmark    &   \checkmark        &                  &$224^2$                   & 82.6              &   2555/2842 imgs/s         &  20M     &  4.48G/4.15G       \\ 
        \checkmark  &   \checkmark    &   \ding{52}         &                  &$224^2$                   & \textbf{83.1}     &   2512/2881 imgs/s         &  20M     &  4.56G/4.13G       \\ 
        \checkmark  &   \checkmark    &   \checkmark        &   \checkmark     &$224^2$                   & 82.6              &   2552/2835 imgs/s         &  20M     &  4.48G/4.15G       \\ 
     \midrule    
    \multicolumn{4}{c|}{CSWin-T-GLOBAL}                                   &$224^2$                   & \textbf{83.1}     &   1303/2210 imgs/s         &  20M     &  7.60G/4.09G       \\
    \bottomrule
    \end{tabular}
  \label{tabab3}
\end{table}

\begin{table}[htbp]
  \centering
  \caption{The comparison of top-1 test accuracy (Acc.), inference throughput (FPS), parameter numbers (\#), and number of floating point operations (FLOPs) obtained by CSWin-T-ELFATT using different combinations of $c_1$ and $c_2$ on ImageNet-1K. Note: Inference throughput is obtained using a batch size of 512 with mixed precision on a single NVIDIA H20 (96 GB) GPU. The best values are in bold.}
    \scriptsize
    \begin{tabular}{cc|rrrrr}
    \toprule
    \multicolumn{2}{c|}{Combinations of $c_1$ and $c_2$} & \multicolumn{1}{c}{\multirow{2}[0]{*}{Res.}} & \multicolumn{1}{c}{\multirow{2}[0]{*}{Acc.}} & \multicolumn{1}{c}{\multirow{2}[0]{*}{FPS (nFA/FA)}} & \multicolumn{1}{c}{\multirow{2}[0]{*}{\#}} & \multicolumn{1}{c}{\multirow{2}[0]{*}{FLOPs (nFA/FA)}} \\
    \multicolumn{1}{c}{\quad$c_1$} & \multicolumn{1}{c|}{$c_2$}                 &       &                   &                            &          &                    \\
    \midrule
        \quad$0\times c$    &   $1\times c$                                         &$224^2$& 82.7              &   2331/2880 imgs/s         &  20M     &  4.76G/4.09G       \\ 
        \quad$0.25\times$ c &   $0.75\times c$                                      &$224^2$& 83.0              &   2530/2835 imgs/s         &  20M     &  4.65G/4.10G       \\ 
        \quad$0.5\times$ c  &   $0.5\times c$                                       &$224^2$& \textbf{83.1}     &   2512/2881 imgs/s         &  20M     &  4.56G/4.13G       \\ 
        \quad$0.75\times$ c &   $0.25\times c$                                      &$224^2$& \textbf{83.1}     &   2458/2796 imgs/s         &  20M     &  4.48G/4.18G       \\     
        \quad$1\times$ c    &   $0\times c$                                         &$224^2$& 82.6              &   2394/2526 imgs/s         &  20M     &  4.35G/4.17G       \\
    \bottomrule
    \end{tabular}
  \label{tabab4}
\end{table}

\section{Ablation Study of Number of Levels Using the ELFATT Module}
\label{APAB3}
To validate the effect of the number of ELFATT modules used to replace the VaniATT modules at different levels of vision transformers, we compared the performance of CSWin-T using a different number of ELFATT modules at different levels. As shown in Table \ref{tabab3}, with the number of levels using ELFATT modules increasing, the inference speed increases without using FlashAttention-2. The difference between CSWin-T using 3 levels of ELFATT modules and 4 levels of ELFATT modules is not significant. The fourth level of CSWin-T is composed of only one module and the sequence length is only 49 which is too short to achieve a significant acceleration effect. Another thing that can be found in Table \ref{tabab3} is that with an increase in the number of levels using ELFATT modules, CSWin-T using FlashAttention-2 achieves the fastest inference speed when using two levels of ELFATT modules. Because the sequence lengths of the first two levels are 3136 and 784, respectively, which are significantly longer than the last two levels (the $3^{\rm rd}$ level: 196, and the $4^{\rm th}$ level: 49). That is also the reason why some efficient attention mechanisms, such as Agent \cite{AGENTATT} and FLatten \cite{flatten}, only replace some of levels by their efficient attention modules. We also observed that with an increasing number of levels using ELFATT modules, the performance shows a gradual decline. To address this defect, we introduced a hybrid architecture in the third level which replaces half of the ELFATT modules at this level by VaniATT modules. The order of composition is as follows: VaniATT-ELFATT-$...$-VaniATT-ELFATT. The reason is that the third level of the vision transformer is usually much deeper than other levels and the sequence of this level is also much shorter than the first two levels. VaniATT at this level will not affect the speed too much and can help the model to converge faster. Swin-T-ELFATT, Swin-B-ELFATT, and CSWin-B-ELFATT also use a pure ELFATT architecture in the first two levels and a hybrid architecture in the third level. All variants of CSWin-T using ELFATT modules to replace VaniATT modules are faster than CSWin-T-GLOBAL.

\section{Ablation Study of Different $c_1$ and $c_2$}
\label{APAB4}
Table \ref{tabab4} shows the performance comparison of CSWin-T-ELFATT using different combinations of $c_1$ and $c_2$ on ImageNet-1K. When $c_1 = 1\times c$ and $c_2 = 0\times c$, ELFATT becomes EFFATT, and when $c_1 = 0\times c$ and $c_2 = 1\times c$, ELFATT becomes a local window-based attention mechanism which can be regarded as a simpler version of the local window-based attention mechanism used in Swin. As shown in Table \ref{tabab4}, $c_1 = 0.5\times c$ and $c_2 = 0.5\times c$, and $c_1 = 0.75\times c$ and $c_2 = 0.25\times c$ achieve better performance than other combinations of $c_1$ and $c_2$. Without using FlashAttention-2, the difference between $c_1 = 0.5\times c$ and $c_2 = 0.5\times c$, and $c_1 = 0.75\times c$ and $c_2 = 0.25\times c$ in terms of speed is not significant. Using FlashAttention-2, $c_1 = 0\times c$ and $c_2 = 1\times c$, and $c_1 = 0.5\times c$ and $c_2 = 0.5\times c$ achieve close speed and are significantly faster than other combinations.

\section{Ablation Study of Different Patch Partition/Merging Methods}
\label{APAB5}
Table \ref{tabab5} shows the performance comparison of Swin-T-ELFATT using different patch partition/merging methods on ImageNet-1K. ``concat'' \cite{Swin} denotes the concatenation-based patch partition/merging method of Swin. ``conv'' \cite{CSWin} denotes the convolution-based patch partition/merging method of CSWin. It can be seen that ``conv'' significantly outperforms ``concat'' in terms of speed and accuracy. Hence, in this paper, all ELFATT variants use ``conv'' for patch partition/merging.

\section{Speed Comparison on Edge GPUs: Experiment Settings}
\label{APEGPU}
In addition to high-performance computing applications, model deployment should also be evaluated in emerging edge scenarios such as robotic vision, unmanned aerial vehicles (UAVs), and autonomous driving. These scenarios typically operate under strict power constraints while striving to maximize performance and efficiency. The NVIDIA Jetson series provide a good embedded and edge computing platform to evaluate AI model performance in such resource-constrained environments.

\begin{table}[htbp]
  \centering
  \caption{The comparison of top-1 test accuracy (Acc.), inference throughput (FPS), parameter numbers (\#), and number of floating point operations (FLOPs) of ELFATT using different patch partition/merging methods in Swin-T on ImageNet-1K. Note: Inference throughput is obtained using a batch size of 512 with mixed precision on 1 NVIDIA H20 (96 GB) GPU.}
    \scriptsize
    \begin{tabular}{lrrrrrr}
    \toprule
    Method & \multicolumn{1}{r}{Res.} & \multicolumn{1}{r}{Acc. (\%)}   & \multicolumn{1}{r}{FPS (nFA/FA)} & \multicolumn{1}{r}{\#} & \multicolumn{1}{r} {FLOPs (nFA/FA)}             \\
    \midrule
    concat &$224^2$& 82.3                                     &   2027/2769 imgs/s         &  28M     &  6.81G/4.42G      \\
    conv   &$224^2$& 82.7                                     &   2884/3159 imgs/s         &  30M     &  4.99G/4.67G      \\
    \bottomrule
    \end{tabular}
  \label{tabab5}
\end{table}

\subsection{Experiment Settings}
The experiments were conducted to evaluate the inference speed of the model on NVIDIA Jetson platforms, specifically Jetson Nano and Jetson AGX Orin, across power modes ranging from 5W to 60W. Each model was evaluated in full precision (FP32) and mixed precision modes. The evaluations were performed on the ImageNet-1K dataset with batch sizes of 1 for Jetson Nano and 128 for Jetson AGX Orin. Both devices installed the latest NVIDIA JetPack SDK (The Jetson Nano used JetPack SDK 4.6.6 and the Jetson AGX Orin utilized JetPack SDK 6.1). Each experiment was repeated 100 times and the reported results represent the average values to ensure statistical reliability.

\begin{figure*}[t!]
\centering
\includegraphics[width=0.9\linewidth]{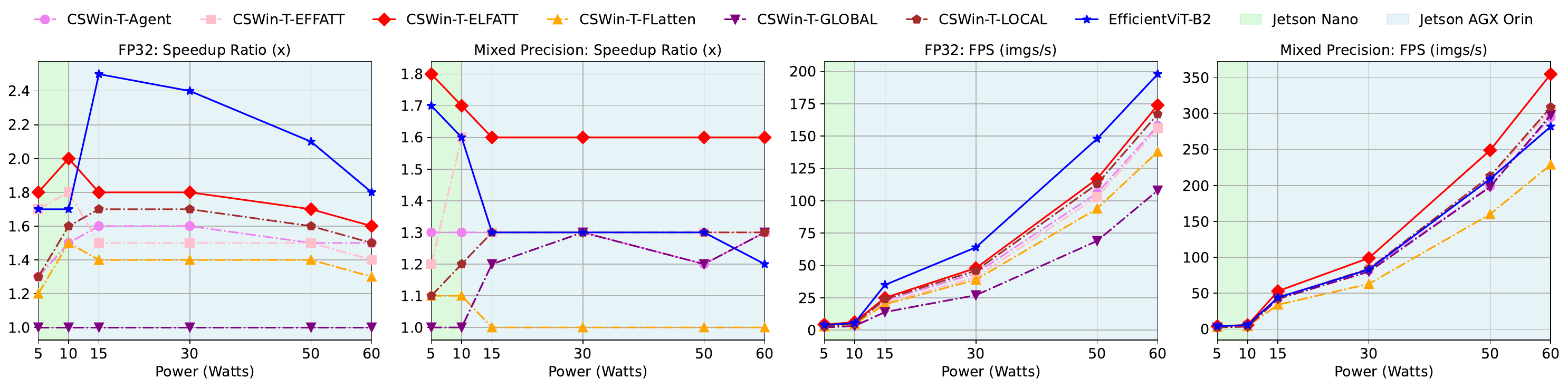}
\caption{The speed comparison of different attention mechanisms across power modes from 5W to 60W on NVIDIA Jetson Nano and AGX Orin under power-constrained inference scenarios.}
\label{figedgespeed}
\end{figure*}

\section{Stable Diffusion Acceleration Using ELFATT: Experiment Settings and Ablation Studies}
\label{APSD}
We also introduced ELFATT to the acceleration of stable diffusion (SD) \cite{SD}. The original EFFATT-based global linear attention head and LePE of ELFATT cannot be used without training. We used Agent \cite{AGENTATT} to replace the original EFFATT-based global linear attention head. The reason is that: (1) Agent is an improved version of EFFATT, and it only introduces a small agent matrix as the auxiliary key or query matrix to be multiplied with the full query or full key matrix to reduce its dimenison before the softmax normalization; (2) Agent can be used to replace the self-attention block of SD without training. Hence, Agent can be introduced to replace the original EFFATT-based global linear attention head of ELFATT. LePE is removed. Finally, in the self-attention block of SD, Eq. (\ref{lepe1}) of the main body of the paper will become as follows,

\begin{small}
\begin{equation}
\begin{split}
{\rm exp}\left(\textit{\textbf{Q}}\textit{\textbf{K}}^{\top}\right)\textit{\textbf{V}} \approx &\left\lbrack{\rm exp}\left(\bar{\textit{\textbf{Q}}}\bar{\mathcal{A}}^{\top}\right){\rm exp}\left(\bar{\mathcal{A}}\bar{\textit{\textbf{K}}}^{\top}\right)\bar{\textit{\textbf{V}}}, g\left({\rm exp}\left(f(\tilde{\textit{\textbf{Q}}}) f(\tilde{\textit{\textbf{K}}})^{\top}\right)f(\tilde{\textit{\textbf{V}}})\right)\right\rbrack,
\end{split}
\label{lepe1sd}
\end{equation}
\end{small}

\noindent Eq. (\ref{lepe2}) of the main body of the paper will become as follows,

\begin{small}
\begin{equation}
\begin{split}
\left\lbrack{\rm exp}\left(\bar{\textit{\textbf{Q}}}\bar{\textit{\textbf{K}}}^{\top}\right)\bar{\textit{\textbf{V}}}, {\rm exp}\left(\tilde{\textit{\textbf{Q}}}\tilde{\textit{\textbf{K}}}^{\top}\right)\tilde{\textit{\textbf{V}}}\right\rbrack\approx&\left\lbrack{\rm exp}\left(\bar{\textit{\textbf{Q}}}\bar{\mathcal{A}}^{\top}\right){\rm exp}\left(\bar{\mathcal{A}}\bar{\textit{\textbf{K}}}^{\top}\right)\bar{\textit{\textbf{V}}},\right.\\
&\left.g\left({\rm exp}\left(f(\tilde{\textit{\textbf{Q}}}) f(\tilde{\textit{\textbf{K}}})^{\top}\right)f(\tilde{\textit{\textbf{V}}})\right)\right\rbrack,
\end{split}
\label{lepe2sd}
\end{equation}
\end{small}

\noindent where $\bar{\mathcal{A}} = {\rm Pooling}\left(\bar{\textit{\textbf{Q}}}\right) \in \mathbb{R}^{m_1 \times c_1}$, ${\rm Pooling}(\cdot)$ denotes the average pooling function, and $m_1 \ll m$.

\begin{table}[htbp]
\scriptsize
  \centering
  \caption{The comparison of stable diffusion (SD) v1.5 (Diffusers v0.32.1) using ELFATT with different block sizes in FP16 precision where merging ratio is 0.5. Memory consumption for each variant indicates how much GPU memory increases when the batch size is raised by 1. All variants are accelerated by FlashAttention-2. The processing time of each variant is obtained using a batch size of 48 on 1 NVIDIA Tesla A100 (40 GB) GPU.}
    \begin{tabular}{rrrrr}
    \toprule
    \multicolumn{1}{r}{Block size} & \multicolumn{1}{r}{FID $\downarrow$}  & \multicolumn{1}{r}{Time (s/img)} & \multicolumn{1}{r}{Memory (GB/img)}  \\
    \midrule
    32&  30.8       &  0.710            &   0.65        \\
    56&  30.7       &  0.709            &   0.65        \\
    81&  30.5       &  0.708            &   0.65        \\
    128& 30.4       &  0.707            &   0.70        \\
    227& 30.5       &  0.705            &   0.70        \\
    512& 30.5       &  0.709            &   0.70        \\
    \bottomrule
    \end{tabular}
  \label{tabSDa1}
\end{table}

\subsection{Experiment Settings}
The experiments were carried out to assess the performance and acceleration effect of ELFATT in a stable diffusion task using 1 NVIDIA Tesla A100 (40 GB) GPU. Following the experiment settings of \cite{AGENTATT} and \cite{ToMe}, we applied ELFATT (Eqs. (\ref{lepe1sd}) and (\ref{lepe2sd})) on stable diffusion v1.5 using the pipeline of Diffusers v0.32.1 and called the proposed stable diffusion acceleration method ELFATT-SD. We compared ELFATT with Agent (Agent-SD) \cite{AGENTATT} and ToMe (ToMe-SD) \cite{ToMe}. We used ImageNet-1K labels as prompts to generate 2000 images with a resolution of $512^2$ (2 images per class). The inference steps (50) \cite{INSTEPS} and the cfg scale (7.5) \cite{CFG} remain the same as Agent and ToMe. Similar to Agent-SD, ELFATT-SD is also developed based on the token merging method ToMe-SD. We also used the hybrid architecture of Agent-SD which means ELFATT was used in early inference steps, and after that vanilla ToMe was applied for the remaining inference steps. The ratio of inference steps used for ELFATT is 20\%. The evaluation metric used is the FID score \cite{FID} which is used to evaluate the difference between 2000 generated images and 50000 images of the ImageNet-1K validation set.

\begin{figure}
\centering
\includegraphics[width=0.9\linewidth]{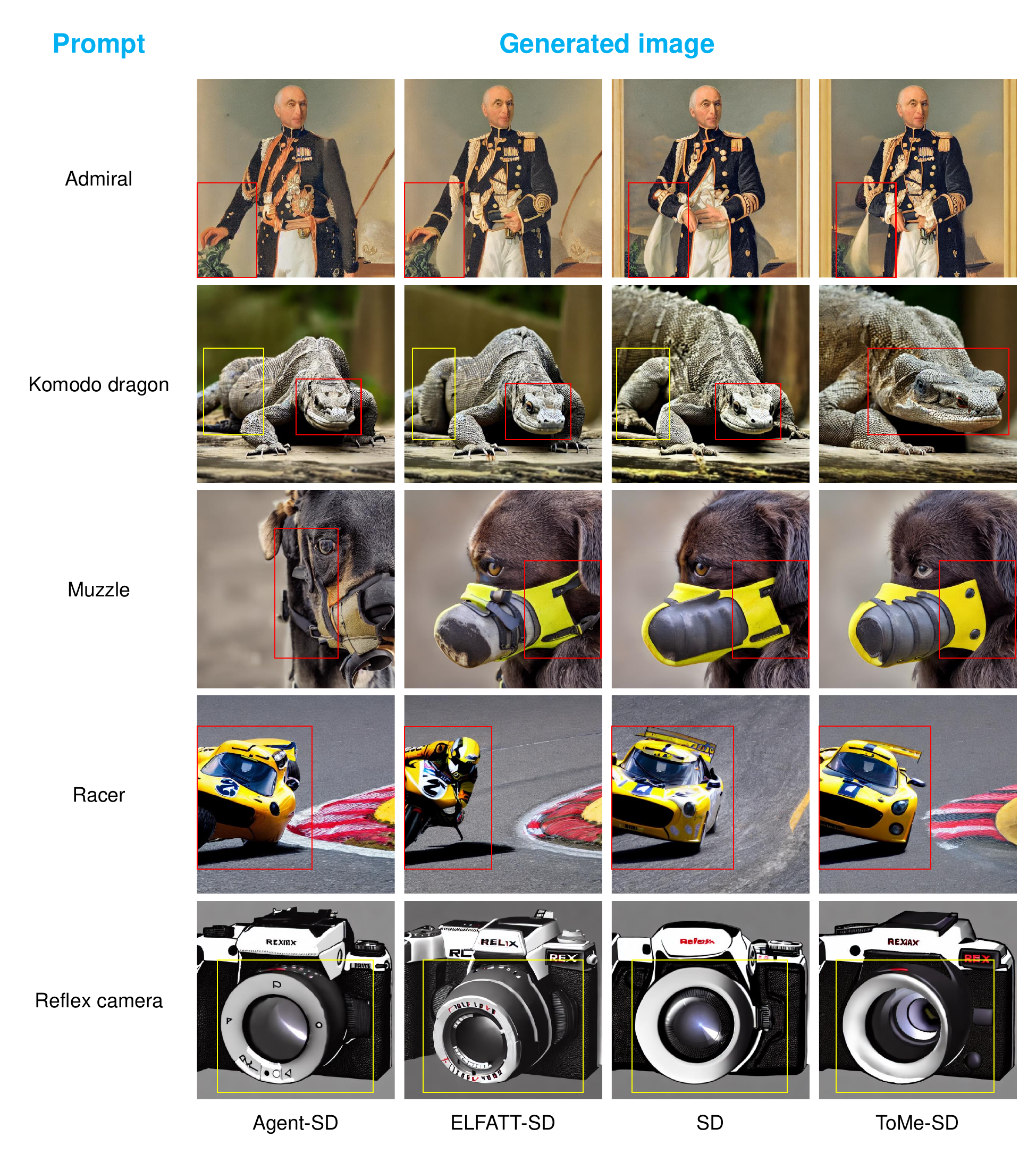}
\caption{The visual comparison of images generated by Agent-SD (merging ratio: 0.5), ELFATT-SD (merging ratio: 0.5), SD, and ToMe-SD (merging ratio: 0.5).}
\label{SDvis}
\end{figure}

\begin{table}[htbp]
\scriptsize
  \centering
  \caption{The comparison of SD using ELFATT in different inference steps with BF16 precision (merging ratio is 0.5).}
    \begin{tabular}{lrrr}
    \toprule
    Steps &  \multicolumn{1}{r}{FID $\downarrow$}  & \multicolumn{1}{r}{Time (s/img)} & \multicolumn{1}{r}{Memory (GB/img)}  \\
    \midrule
    0 (ToMe-SD) &  30.9       &   0.714           &     0.70          \\
    10          &  30.4       &   0.707           &     0.70          \\
    20          &  30.6       &   0.702           &     0.60          \\
    \bottomrule
    \end{tabular}
  \label{tabSDa2}
\end{table}

\subsection{Ablation Study of Different Block Sizes}
Table \ref{tabSDa1} shows the effect of different block sizes used in ELFATT on the performance of the stable diffusion task. When the block size increases, the speed is first improved and then declined. When the block size is 227, the speed is the fastest. The speed of the block size 128 is the second fastest and this block size has the lowest FID score. It seems that when the block size becomes smaller than 227, the inference speed is slowed, which may be caused by FlashAttention-2 and GPU architectures. In summary, the speed difference of the block size range from 32 to 512 is not significant. The reason is that: (1) Even the largest block size 512 is 8x smaller than the original sequence length 4096. (2) Only the first 20\% inference steps were applied to use ELFATT for acceleration. However, our method is still significantly faster than other methods, especially the baseline which shows the effectiveness of ELFATT for acceleration of SD.

\subsection{Ablation Study of Number of Inference Steps Using ELFATT}
Table \ref{tabSDa2} shows the effect of different inference steps using ELFATT on the performance of the stable diffusion task. When 0 inference steps using ELFATT are adopted, ELFATT-SD becomes ToMe-SD. With more inference steps using ELFATT being adopted, the speed becomes faster; however, the FID score is reduced first and then increased. When early 10 (20\%) inference steps using ELFATT are adopted, the FID score is the lowest.

\end{document}